\theoremstyle{theorem}
\newtheorem{lemma}{Lemma}
\newtheorem{theorem}{Theorem}
\newtheorem{proposition}{Proposition}
\newtheorem{corollary}{Corollary}
\theoremstyle{definition}
\newtheorem{defin}{Definition}
\newtheorem{notation}{Notation}
\newtheorem{example}{Example}
\newtheorem{remark}{Remark}
\newcommand{\Res} {\mathrm{Res}\,}
\newcommand{\Disc} {\mathrm{Disc}\,}
\newcommand{\sep} {\mathrm{sep}\,}
\newcommand{\init} {\mathrm{init}\,}
\newcommand{\ord} {\mathrm{ord}\,}
\newcommand{\rk} {\mathrm{rk}\,}
\def\Return{\State\textbf{return}\ }
\DeclareMathOperator{\wdeg}{wdeg}
\DeclareMathOperator{\bbeta}{\bm{\beta}}
\DeclareMathOperator{\balpha}{\bm{\alpha}}
\DeclareMathOperator{\bgamma}{\bm{\gamma}}
\DeclareMathOperator{\bomega}{\bm{\omega}}
\DeclareMathOperator{\bx}{\mathbf{x}}
\DeclareMathOperator{\bz}{\mathbf{z}}
\DeclareMathOperator{\by}{\mathbf{y}}
\DeclareMathOperator{\bX}{\mathbf{X}}
\DeclareMathOperator{\bu}{\mathbf{u}}
\DeclareMathOperator{\bg}{\mathbf{g}}
\DeclareMathOperator{\KK}{\mathbb{K}}
\DeclareMathOperator{\ba}{\mathbf{a}}
\DeclareMathOperator{\lt}{LT}
\DeclareMathOperator{\supp}{supp}
\newcommand{\Ra}[1]{\color{black} {#1}}
\newcommand{\Rb}[1]{\color{black} {#1}}
\newcommand{\ideal}[1]{\ensuremath{( #1 )}}
\newcommand{\tuple}[1]{\ensuremath{[#1]^{T}}}
\title{Projecting dynamical systems\\ via a support bound}
\newcommand{\headertitle}{{\normalfont%
 Projecting dynamical systems via a support bound
}}
\newcommand{\headerauthors}{%
 Y. Mukhina,
 G. Pogudin%
}
\author[1]{Yulia Mukhina}
\author[1]{Gleb Pogudin}
\affil[1]{\small LIX, CNRS, \'Ecole polytechnique, Institute Polytechnique de Paris, Paris, France\\
\texttt{\{yulia.mukhina,gleb.pogudin\}@polytechnique.edu}}
\date{}
\begin{document}

\maketitle

\begin{abstract}
    For a polynomial dynamical system, we study the problem of computing the minimal differential equation satisfied by a chosen coordinate (in other words, projecting the system on the coordinate).
    This problem can be viewed as a special case of the general elimination problem for systems of differential equations and appears in applications to modeling and control.

    We give a bound for the Newton polytope of such minimal equation.
    Our bound depends on the dimension of the model and the degrees $d$ and $D$ of the polynomials defining the dynamics of the chosen coordinate and the remaining coordinates, respectively.
    We show that our bound is sharp if $d \leqslant D$ or the model is planar.
    We further use this bound to design an algorithm for computing the minimal equation following the evaluation-interpolation paradigm.
    We demonstrate that our implementation of the algorithm can tackle problems which are out of reach for the state-of-the-art software for differential elimination.

    \noindent
    \textbf{Keywords:} dynamical system, differential elimination, Newton polytope.

    \noindent
    \textbf{MSC codes:} 12H05, 68W30, 34A34, 14Q20.
\end{abstract}

\section{Introduction}

In this paper, we study the elimination problem for a class of differential equations.
In general, the \emph{elimination problem} is posed for a system of equations (linear, polynomial, differential, etc.)
\begin{equation}\label{eq:general_system}
  f_1(\mathbf{x}, \mathbf{y}) = \cdots = f_n(\mathbf{x}, \mathbf{y}) = 0
\end{equation}
in two groups of unknowns $\mathbf{x} = \tuple{x_1, \ldots, x_s}$ and $\mathbf{y} = \tuple{y_1, \ldots, y_\ell}$. 
The goal is to describe nontrivial equations $g(\mathbf{y}) = 0$ in $\mathbf{y}$ only,
which hold for every solution of the system.
Classical elimination methods include Gaussian elimination for linear equations and resultants and Gr\"obner bases for polynomial elimination.

The elimination problem for a system of differential equations was posed by J.~Ritt, one of the founders of differential algebra, in the 1930s~\cite{Ritt}.
In the past decades, differential analogues have been developed for the most popular approaches to polynomial elimination including differential resultants~\cite{Li2015,Rueda2013,Rueda2016,McCallum2018,CarraFerro1997}, differential Gr\"obner bases~\cite{Ollivier1991,Zobnin2006,Ferro2007}, and various versions of differential triangular sets~\cite{WANG2002,Boulier1,Boulier2,Hubert2000,Hubert2003b,diff_Thomas,Robertz2014}.
Several of these algorithms were turned into software implementations~\cite{TDDS,blad} and found applications in different domains~\cite{Boulier2007,Pascadi2021,Gerdt2019}.

Many systems naturally arising in applications to modeling and control are in the \emph{the state-space form} 
\begin{equation}\label{eq:general_state_space}
\bx' = \bg(\bx, \bu),
\end{equation}
where $\bx$ and $\bu$ are two sets of differential variables corresponding to the internal state of the system and external forces, respectively.
This motivated more recent developments~\cite{Dong2023,Meshkat2012} of differential elimination algorithms tailored to the systems of the form~\eqref{eq:general_state_space}.
Such methods could outperform the general-purpose state-of-the-art elimination software~\cite[Section~6.3]{Dong2023}.
However, they were still based on polynomial reductions and iterated resultant computations and ultimately suffered, as well as the more general differential elimination methods mentioned above, from \emph{intermediate expression swell}.

One way to address the issue of the intermediate expression swell is to use an \emph{evaluation-interpolation} approach: estimate the support of the polynomial(s) of interest and use sampled points on the corresponding variety to recover the coefficients by solving a linear system.
This paradigm has been employed successfully for several cases of polynomial elimination, for example, to perform implicitization~\cite{Marco2002,Emiris2013} or compute likelihood equations~\cite{Tang2019}. 
Two key ingredients making this approach work are a \emph{bound for the support} of the result and the ability to \emph{sample points} on the variety.
For an ODE system~\eqref{eq:general_state_space}, the latter is readily available, for example, in the form of the efficient algorithms for computing truncated power series solutions~\cite{bostan2006fast,van2010newton}.
Thus, it remains to produce a bound for the support, and this is the key question studied in the present paper.

More precisely, we consider a polynomial dynamical system, that is, an ODE system of the form
\begin{equation}\label{eq:our_case}
\bx' = \bg(\bx),
\end{equation}
where $\bx = \tuple{x_1, \ldots, x_n}$ and $\bg \in \KK[\bx]^n$ is a vector of polynomials in $\bx$ over a field $\KK$ of zero characteristic.
Compared, to the more general form~\eqref{eq:general_state_space}, we restrict ourselves to the models without external inputs and having polynomial dynamics.
The elimination problem we consider is to eliminate all the variables but one, say $x_1$.
Many elimination questions of practical importance fall in this class (see, e.g., examples from~\cite{Dong2023,Harrington2016}).
The set of all relations involving only $x_1$ and its derivatives is uniquely defined by its minimal differential equation (see Section~\ref{sec:preliminaries} for details), so we aim at computing this equation.
A related question of computing the minimal differential equation for a primitive element for a class of models including~\eqref{eq:our_case} was treated in~\cite{DAlfonso2006} from the perspective of theoretical complexity (see also Remark~\ref{rem:resolvent_comparison}).

The contribution of the present paper is threefold.
We give {\Rb the first practical bound} for the Newton polytope of the minimal differential equation satisfied by the $x_1$-coordinate of any trajectory of a polynomial dynamical system~\eqref{eq:our_case} (Theorem~\ref{theorem_general_specialized}).
The bound depends only on $d := \deg g_1$ and $D := \max_{2 \leqslant i \leqslant n} \deg g_i$.

Second, we show that our bound is sharp in ``more than half of the cases'': if $d \leqslant D$ (Theorem~\ref{thm:sharp}) or if $n = 2$ (Theorem~\ref{thm:2d}).
This contrasts with other bounds related to the differential elimination problem, e.g.~\cite{Gustavson2018,Ovchinnikov2021,Grigorev1989}.
In the cases when the bound is not sharp, we give numerical evidence that it is often quite accurate.
Finally, we use the bound to design a differential elimination algorithm following the evaluation-interpolation paradigm which can perform elimination for the cases which were out of reach for the existing software.
The algorithm does not perform any heavy polynomial manipulations, and most of the runtime is spent on solving a linear system of the dimension equal to the number of points in the predicted Newton polytope.
The proof-of-concept implementation of our algorithm in Julia is available at \url{https://github.com/ymukhina/Loveandsupport.git}.

The idea behind the proof of the bound is to reduce the differential elimination problem to a polynomial elimination problem.
Based on the analysis of the support of the produced polynomial system, we bound the support of the result of polynomial elimination 
 via several applications of the B\'ezout theorem with a specially chosen set of weights.
We would like to mention that an alternative approach could be to reduce the problem to an instance of the implicitization problem.
Then tropical methods could be used to bound the support~\cite{Sturmfels2007,rose2023tropicalimplicitizationrevisited,Emiris2012}.
However, the bounds produced this way are not sharp already in the planar case.
See Section~\ref{sec:compare_tropical} for a discussion.

The proof of the sharpness of the bound combines techniques from algebra and logic
with an interesting connection between the elimination problem and the identifiability of the system in the sense of control theory: 
in order to reach the bound we had to construct an infinite series of ODE systems with provably good identifiability properties.

The paper is organized as follows.
Section~\ref{sec:preliminaries} contains the notions and facts from differential algebra used to state the main theoretical results.
These results are stated in Section~\ref{sec:main_results}.
In Section~\ref{sec:discussion} we give numerical evidence of the accuracy of our bound (Section~\ref{sec:bound_data}), discuss the potential and limitations of our method of establishing the bound (Section~\ref{sec:mixed_fiber}), and compare the results with a potential alternative approach via tropical implicitization (Section~\ref{sec:compare_tropical}).
Sections~\ref{sec:proofs:general}, \ref{sec:proofs_bound}, and~\ref{sec:proofs_sharp} contain the proofs of the main theoretical results.
We use our bound to design an elimination algorithm in Section~\ref{sec:algorithm}, and we showcase its performance on challenging examples in Section~\ref{sec:hard_examples}.

\paragraph{Acknowledgements}
The authors are grateful to Joris van der Hoeven for numerous stimulating discussions (including a shorter proof of Lemma~\ref{lem::Puiseux}).
The authors also thank Carlos D'Andrea for proposing the idea of the proof of Proposition~\ref{prop:specific_to_generic}, Anton Leykin, Wei Li, Sonia L. Rueda, Fran\c{c}ois Boulier, Kemal Rose, and Rafael Mohr for helpful conversations.
The authors would like to thank the referees for careful and diligent reading and numerous helpful suggestions.
This work has been supported by the French ANR-22-CE48-0008 OCCAM and ANR-22-CE48-0016 NODE projects.

\section{Preliminaries}\label{sec:preliminaries}

\begin{defin}
A \emph{differential ring} $(R, \;')$ is a commutative ring with a derivation $': R \rightarrow R$, that is, a map such that, for all $a,b \in R$, $(a+b)' = a' +b'$ and $(ab)' = a'b + ab'$. A differential field is a differential ring which is a field. 
For $i > 0$, $a^{(i)}$ denotes the $i$-th order derivative of $a \in R$.
\end{defin}

Throughout the paper, $\KK$ stands for a field of zero characteristic equipped with the zero derivation.

\begin{defin}
    Let $R$ be a differential ring. An ideal $I \subset R $ is called a \emph{differential ideal} if $a' \in I$ for every $a \in I$.
\end{defin}

Let $x$ be an element of a differential ring. 
We introduce notation $x^{(\infty)} := \{x,x',x'',x^{(3)},\ldots\}.$

\begin{defin}
    Let $R$ be a differential ring. Consider a ring of
polynomials in infinitely many variables
\[
R[x^{(\infty)}] := R[x, x', x'', x^{(3)},\ldots]
\]
and extend \cite[§ 9, Prop. 4]{bourbaki1950elements} the derivation from $R$ to this ring by $(x^{(j)})':= x^{(j+1)}$. The resulting differential ring is called the \emph{ring of differential polynomials in $x$ over $R$}.
The ring of differential polynomials in several variables is defined by iterating this construction.
\end{defin}

\begin{notation}
    One can verify that $\ideal{f_1^{(\infty)},\ldots, f_s^{(\infty)}}$ is a differential ideal 
    for every $f_1, \ldots , f_s \in  R[x_1^{(\infty)},\ldots, x_n^{(\infty)}] $. Moreover, this is the minimal differential ideal containing $f_1, \ldots , f_s$,
    and we will denote it by $\ideal{f_1,\ldots, f_s}^{(\infty)}$.
\end{notation}

\begin{notation}[Saturation]
Let $I$ be an ideal in {\Rb a} ring $R$, and $a \in R$. Denote
$$
I : a^{\infty} := \{ b \in R\; | \; \exists N \colon\, a^Nb \in I\}.
$$
The set $I : a^{\infty}$  is also an ideal in $R$.
If $I$ is a differential ideal, than $I : a^{\infty}$ is also a differential ideal.
\end{notation}

 \begin{defin}
    Let $P \in \mathbb{K}[\bx^{(\infty)}]$ be a differential polynomial in $\bx = \tuple{x_1, \ldots, x_n}$.
    \begin{enumerate}
        \item For every $1 \leqslant i \leqslant n$, we will call the largest $j$ such that $x_i^{(j)}$ appears in $P$ the \emph{order} of $P$ 
        respect to $x_i$ and denote it by $\ord_{\!x_i} P$; if $P$ does not involve $x_i$, we set $\ord_{\!x_i} P := -1$.

        \item For every $1 \leqslant i \leqslant n$ such that $x_i$ appears in $P$, the \emph{initial} of $P$ with respect to $x_i$ is the
        leading coefficient of $P$ considered as a univariate polynomial in $x_i^{(\ord_{\!x_i} P)}$. 
        We denote it by $\init_{\!x_i} P$.
        
        \item The \emph{separant} of $P$ with respect to $x_i$ is
        $$
        \sep_{\!x_i} P := \dfrac{\partial P}{\partial x_{i}^{(\ord_{\!x_i} P)}} .
        $$
   
    \end{enumerate}
    \end{defin}

    The elimination problem we study in this paper is to eliminate all the variables in the system \eqref{eq:our_case} except one, say $x_1$. 
    In other words, we want to describe a differential ideal
	
	\begin{equation} \label{id}
		I = \ideal{x'_1 - g_1(\mathbf{x}), \ldots , x'_n - g_n(\mathbf{x})}^{(\infty)} \cap \mathbb{K}[x_1^{(\infty)}].
	\end{equation}

     Since the differential ideal $\ideal{x'_1 - g_1(\mathbf{x}), \ldots , x'_n - g_n(\mathbf{x}) }^{(\infty)}$ is prime~\cite{Moog1990} (see also~\cite[Lemma~3.2]{hong2020global}), the {\Rb elimination} ideal is prime as well.

    \begin{defin}
        The \emph{minimal polynomial} $f_{\min}$ of the prime ideal~\eqref{id} is a polynomial in~\eqref{id} of the minimal order and then the minimal total degree.
        It is unique up to a constant factor \cite[Proposition 1.27]{pogudin2023differential}.
    \end{defin}

    \begin{proposition}[{{\cite[Proposition~1.15]{pogudin2023differential}}}]\label{prop::idealstructure} 
        The prime ideal~\eqref{id} is uniquely determined by its minimal polynomial $f_{\min}$. 
        More precisely:
        \begin{equation*} \label{diff_ideal}
            I = \ideal{f_{\min}}^{(\infty)} : (\sep_{\!x_1}(f_{\min}) \,\init_{\!x_1}(f_{\min}))^{\infty}.
        \end{equation*}
    \end{proposition}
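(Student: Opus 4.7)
The plan is to prove the equality by a double inclusion, relying on the classical Ritt pseudo-reduction for the nontrivial direction. Throughout, I will abbreviate $h := \ord_{\!x_1}(f_{\min})$, $d := \deg_{x_1^{(h)}}(f_{\min})$, $S := \sep_{\!x_1}(f_{\min})$, and $J := \init_{\!x_1}(f_{\min})$.

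For the inclusion $\supseteq$, I would first note that $\ideal{f_{\min}}^{(\infty)} \subseteq I$ because $I$ is a differential ideal containing $f_{\min}$. Primality of $I$ then reduces the claim to verifying that $S J \notin I$. Now $J$ lies in $\KK[x_1, \ldots, x_1^{(h-1)}]$ and so has order in $x_1$ strictly less than $h$, while $S = \partial f_{\min}/\partial x_1^{(h)}$ has total degree strictly less than that of $f_{\min}$; by the definition of $f_{\min}$ as an element of $I$ of minimal order and then of minimal total degree, neither of these nonzero polynomials can belong to $I$.

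For the converse inclusion, I would take $g \in I$ and apply Ritt pseudo-reduction. As long as $k := \ord_{\!x_1}(g) > h$, the derivative $f_{\min}^{(k-h)} \in \ideal{f_{\min}}^{(\infty)}$ has order $k$ in $x_1$ and is linear in $x_1^{(k)}$ with leading coefficient equal to $S$ up to a positive integer factor. Pseudo-dividing $g$ by $f_{\min}^{(k-h)}$ in $x_1^{(k)}$ yields a representation $S^{a_k} g \equiv g_1 \pmod{\ideal{f_{\min}}^{(\infty)}}$ with $\ord_{\!x_1}(g_1) < k$, and iterating drops the order to at most $h$. If the resulting polynomial still has order exactly $h$, a further pseudo-division by $f_{\min}$ in $x_1^{(h)}$, using $J$ as initial, produces a remainder $r \in I$ of order at most $h$ and $\deg_{x_1^{(h)}}(r) < d$. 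Collecting the accumulated powers of $S$ and $J$ then gives $(SJ)^N g \equiv r \pmod{\ideal{f_{\min}}^{(\infty)}}$ for some $N$, so the inclusion reduces to showing $r = 0$.

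The main obstacle I anticipate is precisely this last step. The case $\ord_{\!x_1}(r) < h$ is immediate from the minimality of the order of $f_{\min}$. The more delicate case is $\ord_{\!x_1}(r) = h$ with $\deg_{x_1^{(h)}}(r) < d$: since the paper's $f_{\min}$ is merely of minimal \emph{total} degree rather than minimal degree in the leading derivative, this does not yield a contradiction outright. The cleanest route I see is to iterate a Euclidean-style pseudo-division between $f_{\min}$ and $r$ in $x_1^{(h)}$ inside the PID $\KK(x_1, \ldots, x_1^{(h-1)})[x_1^{(h)}]$; the process must terminate with a polynomial of order strictly less than $h$, which vanishes by minimality, and unwinding the chain shows that the images of $f_{\min}$ and $r$ generate the same principal ideal and differ by a unit in $\KK(x_1, \ldots, x_1^{(h-1)})$. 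This forces $\deg_{x_1^{(h)}}(r) = d$, the desired contradiction. Making this bookkeeping fully compatible with the total-degree version of minimality is the only nontrivial point; the remainder of the proof is a standard application of Ritt--Kolchin reduction.
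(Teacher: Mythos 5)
The paper does not prove this proposition itself (it is cited from \cite[Proposition~1.15]{pogudin2023differential}), so your proposal is judged on its own. The overall strategy --- Ritt pseudo-reduction for $\subseteq$ and primality plus $\sep_{x_1}(f_{\min})\init_{x_1}(f_{\min})\notin I$ for $\supseteq$ --- is the standard one and is sound; the $\supseteq$ direction and the order-reduction phase are complete and correct.

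There is, however, a genuine gap exactly at the point you flag as delicate. Writing $F:=\KK(x_1,\ldots,x_1^{(h-1)})$ and $y:=x_1^{(h)}$, your Euclidean pseudo-division of $f_{\min}$ by $r$ in $F[y]$ does \emph{not} show that $f_{\min}$ and $r$ generate the same principal ideal. The remainders all lie in $I$, so any remainder of degree $0$ in $y$ vanishes by order-minimality; hence the last nonzero remainder --- which is $\gcd_{F[y]}(f_{\min},r)$ --- has positive degree in $y$. But that only says $f_{\min}$ and $r$ share a nontrivial common factor; if $f_{\min}$ were reducible over $F$, the algorithm could terminate immediately (e.g.\ with $r$ a proper factor of $f_{\min}$) without ever producing a degree-$0$ remainder, and no contradiction with $\deg_y r<d$ would follow. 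The missing ingredient is the irreducibility of $f_{\min}$ in $\KK[x_1^{(\leqslant h)}]$ (hence, by Gauss's lemma, in $F[y]$). This you can get from the tools already in play: if $f_{\min}=AB$ with both factors non-units, primality of $I$ puts one factor, say $A$, in $I$; then either $\ord_{x_1}A<h$, contradicting order-minimality, or $\ord_{x_1}A=h$ and $\deg A<\deg f_{\min}$, contradicting total-degree minimality. With irreducibility in hand, the common factor of positive $y$-degree must be an associate of $f_{\min}$ over $F$, forcing $\deg_y r\geqslant d$ and hence $r=0$. Inserting this two-line irreducibility argument repairs the proof; without it, the step ``unwinding the chain shows that $f_{\min}$ and $r$ differ by a unit'' is unjustified.
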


    \begin{example} \label{ex::boytoy}
        For a toy example of such representation consider the following model:
	  	\begin{equation*}
      			\begin{cases}
				x_1' = x_2, \\
				x_2' = - x_1.
			\end{cases}
		\end{equation*}
	$f_{\min}$ can be  obtained by $ (x_1 ' - x_2)' + (x_2' + x_1) = x_1'' + x_1$. Thus, $ f_{\min} = x_1'' + x_1$ and $I = \ideal{x_1''  + x_1}^{(\infty)}$.
    \end{example}

    The following example shows the importance of taking the saturation in Proposition~\ref{prop::idealstructure}.
    \begin{example}
    Consider
        \begin{equation*}
      			\begin{cases}
				x_1' = x_2^2, \\
				x_2' = x_1.
			\end{cases}
		\end{equation*}
    Using Algorithm~\ref{alg:fmin}, which will be presented in Section~\ref{sec:algorithm}, we obtain the minimal polynomial $f_{\min} =  (x_1'')^2 - 4 x_1^2 x_1'$ for the elimination ideal $I$. 
    In this case $\init_{x_1}f_{\min} = 1$ and $\sep_{\!x_1} f_{\min} = 2 x_1''$. 

    Using $(x_1'')^2 \equiv 4x_1^2x_1' \pmod{I}$, we can rewrite $x_1^3 (f_{\min})'$ modulo $I$ as follows:   
    \[
    x_1^3 f_{\min}' = 2 x_1^3 x_1'' x_1^{(3)} - 4 x_1^5 x_1'' - 8 x_1^4 (x_1')^2  \equiv 2 x_1^3 x_1'' x_1^{(3)} - 4 x_1^5 x_1'' - \frac{1}{2} (x_1'')^4 \pmod{I}.  
    \]
    Thus, $x_1^3 f_{\min}' = 2x_1'' (x_1^3 x_1^{(3)} - 2 x_1^5 - \frac{1}{4} (x_1'')^3)$.
    Since the ideal $I$ is prime and $x_1'' \not\in I$, we have
    \[
    x_1^3 x_1^{(3)} - 2 x_1^5 - \frac{1}{4} (x_1'')^3 \in I.
    \]
    However, this polynomial does not belong to $\ideal{f_{\min}}^{(\infty)}$ 
    (that is, without saturation at $x_1''$) because $f_{\min}$ vanishes if $x_1$ is a nonzero constant, and the above polynomial does not.
    \end{example}


\section{Bound for the support and its sharpness}\label{sec:main_results}

\begin{theorem}[Bound for the support]\label{theorem_general_specialized}
    Let $g_1, \ldots, g_n$ be polynomials in $\mathbb{K}[x_1, \ldots, x_n] = \mathbb{K}[\mathbf{x}]$ 
    such that $d := \deg g_1 > 0$ and $D := \max_{2 \leqslant i \leqslant n}\deg g_i > 0$.
    Let $I := \ideal{\mathbf{x}' - \mathbf{g}}^{(\infty)}$ and let $f_{\min} \in \mathbb{K}[x_1^{(\infty)}]$ be the minimal polynomial of $I \cap \mathbb{K}[x^{(\infty)}_1]$.
    Consider a positive integer $\nu$ such that $\ord f_{\min} \leqslant \nu$ ($\nu = n$ can be always used).

    Then  for every monomial  $x_1^{e_0} (x'_1)^{e_1} \ldots (x^{(\nu)}_1)^{e_\nu}$ in $f_{\min}$ the following inequalities hold
     
    \begin{enumerate}
        \item\label{case:d1_leq_D} If $d \leqslant D$, then
            \begin{equation} \label{eq:bound_1}
                e_0 + \sum_{k=1}^{\nu} \bigl(d + (k - 1)(D - 1) \bigr) e_k \leqslant \prod_{k=1}^{\nu} \bigl(d + (k - 1)(D - 1)\bigr);
            \end{equation}
        \item\label{case:d1_ge_D} If $d > D$, then for every $0 \leqslant \ell < \nu$, we have
           \begin{equation} \label{eq:bound_2}
                \begin{split}
                     & \sum_{k=0}^{\ell} \bigl(k(D - 1) + 1 \bigr) e_k + \sum_{i=1}^{\nu - \ell} \bigl( i (d - 1) + \ell (D - 1) + 1\bigr) e_{i + \ell} \leqslant \\
                     & \quad \leqslant \prod_{k = 1}^{\ell} \bigl(d + (k - 1) (D - 1) \bigr) \prod_{i=1}^{\nu - \ell} \bigl( i (d - 1) + \ell (D - 1) + 1\bigr).
                \end{split}
             \end{equation}
    \end{enumerate}
\end{theorem}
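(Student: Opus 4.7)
The plan is to reduce the differential elimination problem to a polynomial one and then apply a weighted B\'ezout bound. The key observation is that, along any trajectory, $x_1^{(k)} = P_k(x_1, \ldots, x_n)$, where $P_k := \mathcal{L}^k(x_1)$ is obtained by iterating the Lie derivative $\mathcal{L} := \sum_{i=1}^{n} g_i \, \partial/\partial x_i$ on $\mathbb{K}[x_1, \ldots, x_n]$. Therefore, viewing $f_{\min}$ as a polynomial in auxiliary variables $z_0, \ldots, z_\nu$ (with $z_k$ standing for $x_1^{(k)}$), $f_{\min}$ generates the kernel of the substitution homomorphism $\Phi\colon \mathbb{K}[z_0, \ldots, z_\nu] \to \mathbb{K}[x_1, \ldots, x_n]$, $z_k \mapsto P_k$. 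The inequalities in the statement will follow from bounds on the weighted degrees of nonzero elements of $\ker \Phi$.

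The first technical step is to bound $\deg P_k$. From $P_{k+1} = \sum_{i=1}^n (\partial P_k / \partial x_i)\, g_i$ one obtains $\deg P_{k+1} \leqslant \deg P_k + \max(d, D) - 1$; iterating this in Case~\ref{case:d1_leq_D} gives $\deg P_k \leqslant d + (k-1)(D-1)$, which is precisely the weight $w_k$ appearing on the left-hand side of~\eqref{eq:bound_1}. In Case~\ref{case:d1_ge_D} this crude bound is too coarse, so I would track a weighted degree that separates the contribution of differentiations along $g_1$ (which brings in a factor $d$) from those along $g_i$, $i \geqslant 2$ (which bring in only $D$). Applying $\ell$ differentiations of one type and $\nu - \ell$ of the other, in the appropriate order, should yield the weight sequence $w_k = k(D-1)+1$ for $k \leqslant \ell$ and $w_{\ell+i} = i(d-1) + \ell(D-1) + 1$ for $i \geqslant 1$ appearing on the left-hand side of~\eqref{eq:bound_2}.

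Second, I would form the polynomial system $F_k := z_k - P_k(x) = 0$ for $k = 0, \ldots, \nu$ in $\mathbb{K}[x_1, \ldots, x_n, z_0, \ldots, z_\nu]$ and bound the elimination ideal via B\'ezout. Since every element of $\ker\Phi$ lies in $(F_0, \ldots, F_\nu) \cap \mathbb{K}[z_0, \ldots, z_\nu]$, it suffices to bound the weighted degree of any polynomial generator of this elimination ideal. After homogenizing the $x_i$ and assigning $z_k$ the weight $w_k$ from the previous step, each $F_k$ becomes weighted-homogeneous of weighted degree $w_k$. In Case~\ref{case:d1_leq_D}, iterated application of the classical B\'ezout theorem (eliminating the $x_i$ one at a time, with the bound at each stage controlled by the corresponding $w_k$) yields the product $\prod_{k=1}^\nu w_k$ as an upper bound on the weighted degree of the result, which is the right-hand side of~\eqref{eq:bound_1}.

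The main obstacle is Case~\ref{case:d1_ge_D}, where the right-hand side of~\eqref{eq:bound_2} is a product of two factors coming from different weight regimes rather than a single product $\prod_k w_k$. The plan is to split the B\'ezout computation into two phases: the first $\ell$ elimination steps use weights aligned with the $x_2, \ldots, x_n$-directions (producing the factor $\prod_{k=1}^\ell (d + (k-1)(D-1))$), while the remaining $\nu - \ell$ steps use weights aligned with the $x_1$-direction, shifted by the $\ell$ already-accumulated $D$-steps (producing the factor $\prod_{i=1}^{\nu-\ell}(i(d-1) + \ell(D-1) + 1)$). The delicate point, where I expect most of the technical work to lie, is to verify that after the first phase the intermediate system still satisfies the degree hypotheses required for the B\'ezout estimate with the shifted weights, and that the two phases combine cleanly into the advertised product.
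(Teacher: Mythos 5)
Your overall strategy --- reduce to polynomial elimination for the system $z_k - P_k$ and bound the eliminant by a weighted B\'ezout argument --- is the same as the paper's, and for Case~\ref{case:d1_leq_D} it is essentially sound: there the single weight $w_k = d+(k-1)(D-1)$ dominates $\deg P_k$, and the product of these weights is the right-hand side of~\eqref{eq:bound_1}. Two technical points still need repair even in that case. First, ``eliminating the $x_i$ one at a time'' via classical B\'ezout does not give the product bound; iterated resultants blow up. What is needed is the bound $\deg \overline{\pi(X)} \leqslant \deg X \leqslant \prod_k \deg F_k$ for the projection of the affine variety (the paper uses Heintz's lemma together with the refined B\'ezout inequality for the degree of an intersection). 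Second, there is no off-the-shelf ``weighted B\'ezout for eliminants''; the paper manufactures one by substituting $x_1^{(k)} \mapsto p_k(z_k)$ with $\deg p_k = \omega_k$, proving that the substitution can be chosen so that the image of $f_{\min}$ stays square-free, and only then applying the unweighted degree bound. Without some such device your weighted-degree claim for the eliminant is unsupported.

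The genuine gap is Case~\ref{case:d1_ge_D} with $\ell \geqslant 1$. Your ``two-phase B\'ezout'' is not the mechanism behind~\eqref{eq:bound_2}: the right-hand side there is still a \emph{single} product of $\nu$ per-equation degree bounds, just computed under a two-regime weight vector ($\omega_k = k(D-1)+1$ for $k\leqslant\ell$ and $\omega_{\ell+i}=i(d-1)+\ell(D-1)+1$ for $i\geqslant 1$). More importantly, your choice of generators $F_k = z_k - P_k$ with $P_k = \mathcal{L}^k(x_1) \in \mathbb{K}[\mathbf{x}]$ fully expanded cannot produce the factor $\prod_{k=1}^{\ell}\bigl(d+(k-1)(D-1)\bigr)$: when $d > D$ one has $\deg P_k = k(d-1)+1$ generically (the term $g_1\,\partial P_{k-1}/\partial x_1$ adds $d-1$ at each step), which already for $k=2$ exceeds $d+(D-1)$, so $\prod_k \max(\omega_k,\deg P_k) \geqslant \prod_k (k(d-1)+1)$, i.e.\ you only ever recover the $\ell=0$ inequality. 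The missing idea is to \emph{not} substitute the lower derivatives of $x_1$: the paper works with generators $x_1^{(k)} - h_k$ where $h_k = (\mathcal{L}^{\ast}_{\bg})^{k-1}(g_1)$ still involves $x_1, x_1',\dots,x_1^{(k-1)}$ as independent variables carrying their own (small) weights $\omega_j$, and it controls the support of $h_k$ by \emph{two} linear functionals, a weighted degree $\ell_1$ and the total derivative order $\ell_2(\balpha)=\sum i\alpha_i \leqslant k-1$. The weighted degree of each generator is then the maximum of the weight functional over the polytope cut out by both constraints, and it is the second constraint that caps this maximum at $d+(k-1)(D-1)$ for $k\leqslant\ell$. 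You would need to add this partial-substitution bookkeeping (or an equivalent) to make Case~\ref{case:d1_ge_D} go through.
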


The corresponding polytopes in the planar case $n = 2$ are shown on Figure~\ref{fig:2d}.

\begin{notation}
 Let $V_{n, d}$ be the space of polynomials of degree at most $d$ in the variables $\mathbf{x} = \tuple{x_1, \ldots, x_n}$ over the field $\mathbb{K}$.
\end{notation}

\begin{theorem}[Generic sharpness in the planar case]\label{thm:2d}
    Let $d_1, d_2 \in \mathbb{Z}_{> 0}$.
    Then there exists a nonempty Zariski open subset $U \subset V_{2, d_1} \times V_{2, d_2}$ such that, for every pair of polynomials  $\tuple{g_1, g_2} \in U$, the Newton polytope of the minimal polynomial $f_{\min}$ of the differential ideal
    \[I \cap \mathbb{K}[x_1^{(\infty)}],\quad \text{where } I = \ideal{x_1' - g_1(x_1, x_2), x_2' - g_2(x_1, x_2)}^{(\infty)}
    \]
    is the one given by Theorem~\ref{theorem_general_specialized} with $\nu = n = 2$ (see also Figure~\ref{fig:2d}).

    In this particular case, the Newton polytope of $f_{\min}$ is given by the following inequalities (on the exponents of $x_1^{e_0} (x'_1)^{e_1}(x''_1)^{e_2}$)
     \begin{enumerate}
        \item If $d_1 \leqslant d_2$, then
        \begin{equation} \label{eq0001}
        e_0 + d_1 e_1 +(d_1 + d_2 - 1)e_2 \leqslant d_1(d_1 + d_2 - 1);
        \end{equation}
        
        \item If $d_1 > d_2$, then
        \begin{equation} \label{eq0002}
            \begin{split}
                &  e_0 + d_1 e_1 +(2d_1-1) e_2 \leqslant d_1(2d_1-1),\\
                &  e_0 + d_2 e_1  +(d_1+d_2-1)e_2 \leqslant d_1(d_1+d_2-1).
            \end{split}
        \end{equation}  
    \end{enumerate}
    \end{theorem}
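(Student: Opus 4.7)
The plan is to reduce the claim to a finite list of non-vanishing conditions on the coefficients of $(g_1, g_2)$ and to certify each of them by producing an explicit example. By Theorem~\ref{theorem_general_specialized} applied with $\nu = n = 2$, for every $(g_1, g_2)$ one has $\operatorname{Newt}(f_{\min}) \subseteq \Pi$, where $\Pi$ denotes the polytope defined by (\ref{eq0001}) when $d_1 \leqslant d_2$ and by (\ref{eq0002}) when $d_1 > d_2$. Enumerating the integer extreme points of $\Pi$, one finds three axial vertices $(d_1(d_1+d_2-1),0,0)$, $(0,d_1+d_2-1,0)$, $(0,0,d_1)$ in case $d_1 \leqslant d_2$, plus a fourth ``diagonal'' vertex $(d_1(d_1-1), d_1, 0)$ in case $d_1 > d_2$ sitting at the intersection of the two facets. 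Since $\operatorname{Newt}(f_{\min}) = \Pi$ is equivalent to every vertex of $\Pi$ lying in $\supp(f_{\min})$, proving the theorem reduces to verifying this for generic $(g_1, g_2)$.

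Next, I would invoke upper-semicontinuity of the support. Viewing $f_{\min}$ as the (suitably normalized) defining equation of the image of the polynomial map $\phi : \mathbb{A}^2 \to \mathbb{A}^3$, $(x_1, x_2) \mapsto (x_1, g_1, \mathcal{L}g_1)$, with $\mathcal{L} := g_1 \partial_{x_1} + g_2 \partial_{x_2}$, the vanishing of any fixed-monomial coefficient of $f_{\min}$ is a Zariski-closed condition on $(g_1, g_2) \in V_{2, d_1} \times V_{2, d_2}$; this can be made concrete by computing $f_{\min}$ as (an irreducible factor of) $\operatorname{Res}_{x_2}(g_1 - y_1, \mathcal{L}g_1 - y_2)$. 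Consequently, for each vertex $v$ of $\Pi$, the locus ``$v \in \supp(f_{\min})$'' is Zariski-open, and the desired set $U$ is the intersection of finitely many such open sets, which is nonempty (hence dense open, by irreducibility of the parameter space) as soon as each individual locus is nonempty.

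For the three axial vertices, the non-vanishing is certified by B\'ezout-style counts for $\phi$. Specializing $y_0 = y_1 = 0$ in $f_{\min}$ yields a univariate polynomial in $y_2$ whose roots are the values of $\mathcal{L}g_1(0, x_2)$ at the $d_1$ zeros of $g_1(0, x_2)$; for generic $(g_1, g_2)$ these are distinct and nonzero, so $\deg_{y_2} f_{\min}(0, 0, y_2) = d_1$, forcing $(0,0,d_1) \in \supp(f_{\min})$. Analogously, $(0, d_1 + d_2 - 1, 0)$ is realized by the values of $g_1(0, x_2)$ at the $d_1 + d_2 - 1$ zeros of $\mathcal{L}g_1(0, x_2)$, and $(d_1(d_1 + d_2 - 1), 0, 0)$ by an affine B\'ezout count for $\{g_1 = \mathcal{L}g_1 = 0\}$ (which is $d_1(d_1+d_2-1)$ in both regimes, since on $\{g_1 = 0\}$ the Lie derivative $\mathcal{L}g_1$ reduces to $g_2 \partial_{x_2} g_1$ of effective degree $d_1 + d_2 - 1$). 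Each non-vanishing condition is verifiable on a small perturbation of a monomial example such as $g_1 = x_2^{d_1}$, $g_2 = x_1^{d_2}$.

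The main obstacle is the diagonal vertex $V_4 = (d_1(d_1-1), d_1, 0)$, present only when $d_1 > d_2$. Unlike the axial vertices, $V_4$ cannot be isolated by setting two coordinates to zero; it is controlled by the Newton polytope of the bivariate restriction $f_{\min}(y_0, y_1, 0)$, which is the implicit equation of the image of the degree-$(2d_1 - 1)$ curve $\{\mathcal{L}g_1 = 0\}$ under $(x_1, x_2) \mapsto (x_1, g_1)$. For this step, the plan is to combine the weighted-B\'ezout framework underlying the proof of Theorem~\ref{theorem_general_specialized} --- applied in the reverse direction to produce a matching lower bound on the $(y_0, y_1)$-Newton polytope of this restriction --- with an explicit monomial specialization for which the relevant leading weighted form is non-degenerate. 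The identifiability viewpoint hinted at in the introduction enters here: one seeks an ODE system in which $x_1$ and $x_1'$ jointly encode the hidden coordinate $x_2$ as algebraically as $\Pi$ permits, thereby forcing $V_4$ into $\supp(f_{\min})$.
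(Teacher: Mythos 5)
Your overall skeleton (show that every vertex of the bounding polytope $\Pi$ lies in $\supp(f_{\min})$ for some explicit instance, then transfer to generic $(g_1,g_2)$ by an openness argument) is the right one and matches the paper's strategy, but the proposal leaves the two genuinely hard steps unproved and contains an error in the vertex enumeration. First, for $d_1>d_2$ the $e_1$-axis vertex of $\Pi$ is $(0,\,2d_1-1,\,0)$, not $(0,\,d_1+d_2-1,\,0)$: the point $(0,2d_1-1,0)$ saturates the first inequality of~\eqref{eq0002} and satisfies the second since $(d_1-1)(d_1-d_2)\geqslant 0$. This vertex is never addressed by your fiber-counting argument (your count of $d_1+d_2-1$ roots of $\mathcal{L}g_1(0,x_2)$ is only valid when $d_1\leqslant d_2$; for $d_1>d_2$ the degree is $2d_1-1$), and the paper needs a separate construction for it (Lemma~\ref{lem6}, with $g_1=x_2^{d_1}+x_1x_2^{d_1-1}$, $g_2=x_2^{d_2}$). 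You also omit the origin, i.e.\ the nonvanishing of the constant term of $f_{\min}$, which the paper obtains by the shift $x_1\mapsto x_1+\varepsilon$ applied to an instance realizing $x_1^{d_1(d_1+d_2-1)}$. Second, the "diagonal" vertex $(d_1(d_1-1),d_1,0)$ is, as you say, the main obstacle, but your treatment of it is a plan rather than a proof; the paper resolves it in Lemma~\ref{lem5} by an explicit system ($g_1=x_1^{d_1}+x_2^{d_1}$, $g_2=x_2^{d_2}+1$), the Poisson product formula for $\Res_{x_2}$, and a weighted grading in which $x_1^{d_1-1}x_1'$ strictly dominates the other terms of each Poisson factor.

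The other genuine gap is the genericity transfer. The assertion that ``$v\in\supp(f_{\min})$'' is a Zariski-open condition on $(g_1,g_2)$ does not follow from $f_{\min}$ being ``an irreducible factor of $\Res_{x_2}(p_1,p_2)$'': taking an irreducible factor is not an algebraic operation on families, the resultant may degenerate to a proper power $c\cdot f_{\min}^m$ with $m>1$ on special fibers (ruling this out is a nontrivial part of the paper's Lemmas~\ref{lem5} and~\ref{lem6}, via a transcendence argument on $a=(x_1'-x_1^{d_1})^{1/d_1}$), and the support of the minimal polynomial is neither upper nor lower semicontinuous in an obvious sense. The paper's Proposition~\ref{prop:specific_to_generic} is exactly the device that fills this hole: a perturbation $\bg^\circ+\varepsilon\ba(\bx)$ with indeterminate coefficients shows that the generic Newton polytope contains a \emph{nonnegative shift} of the special one (together with a separate argument, via linear independence of $\{\mathcal{R}_{\bg}(m)\}$, that the generic specialization is still the \emph{minimal} polynomial), and only because the realized monomials are vertices of the upper bound from Theorem~\ref{theorem_general_specialized} can the shift be pinned to zero. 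Without an argument of this kind, your reduction ``nonempty $\Rightarrow$ open dense'' does not go through.
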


    \begin{figure}[H]
        \centering
        \hspace{-1cm}\begin{subfigure}[b]{0.45\textwidth}
            \hspace{-2cm}\includegraphics[scale=0.2]{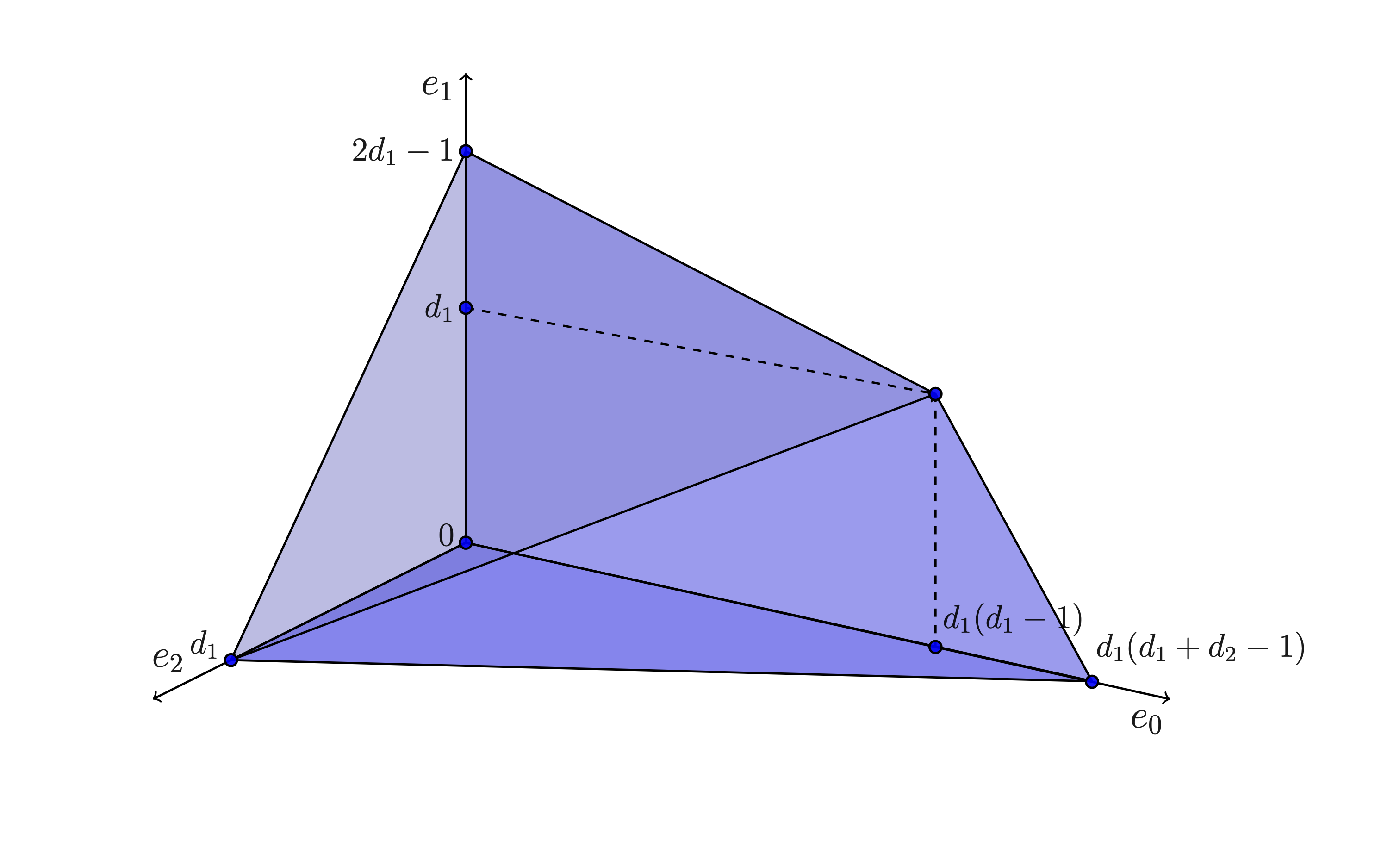}
            \caption{$d_1 > d_2$ (pyramid)} \label{ris1}
        \end{subfigure}
        \hfill
        \begin{subfigure}[b]{0.45\textwidth}
            \hspace{-1.3cm}\includegraphics[scale=0.2]{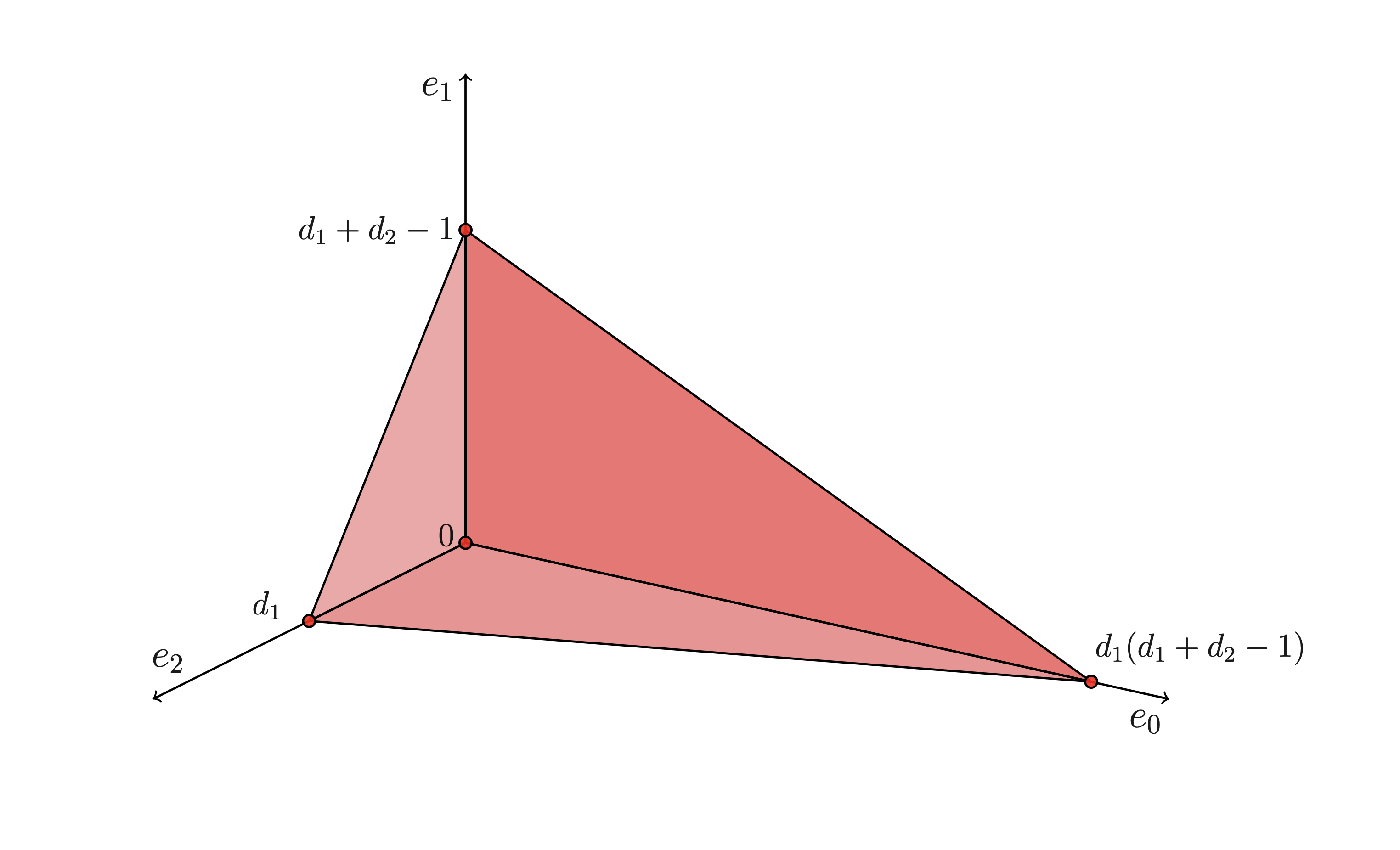}
            \caption{$d_1 \leq d_2$ (tetrahedron)} \label{ris2}
        \end{subfigure}
        \caption{Newton polytopes predicted by Theorem~\ref{theorem_general_specialized} for the planar case $n = 2$}
        \label{fig:2d}
    \end{figure}

    \begin{theorem}[Generic sharpness for $d \leqslant D$]\label{thm:sharp}
        Let $d, D, n$ be positive integers such that $d \leqslant D$.
        Then there exists a nonempty Zariski open subset $U \subset V_{n, d} \times V_{n, D}^{n - 1}$ such that, for every $\bg \in U$, the Newton polytope of the minimal polynomial of $\ideal{\bx' - \bg}^{(\infty)} \cap \KK[x_1^{(\infty)}]$ is the one given by Theorem~\ref{theorem_general_specialized} with $\nu = n$.
    \end{theorem}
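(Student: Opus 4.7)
The plan is to deduce sharpness from the combination of a Zariski openness argument and the construction of one explicit witness $\bg^*$ for each $n$.

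First, Theorem~\ref{theorem_general_specialized}(\ref{case:d1_leq_D}) already provides the inclusion: for every $\bg \in V_{n,d} \times V_{n,D}^{n-1}$, the Newton polytope of $f_{\min}$ is contained in the polytope $P$ cut out by~\eqref{eq:bound_1} together with the non-negativity constraints. To obtain the reverse inclusion for generic $\bg$, it is enough to exhibit a nonzero coefficient in front of the monomial associated to each vertex of $P$, since by convexity the Newton polytope of $f_{\min}$ would then be forced to coincide with $P$.

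Second, I would turn the statement ``all vertex monomials are present'' into a Zariski open condition. After normalising $f_{\min}$ by fixing the coefficient of one chosen vertex monomial to $1$, each remaining coefficient becomes a rational function of the coefficients of $\bg$; the locus where all vertex coefficients are nonzero is therefore Zariski open in $V_{n,d}\times V_{n,D}^{n-1}$. It then suffices to exhibit a single tuple $\bg^*$ whose minimal polynomial has Newton polytope equal to $P$, and to take $U$ to be the resulting open locus.

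Third, following the hint from the introduction about identifiability, I would construct $\bg^*$ as an infinite series (one per dimension $n$) of polynomial vector fields in a triangular/observable-like form with provably good identifiability properties: the output map $\bx \mapsto (x_1, x_1', \ldots, x_1^{(n-1)})$ along trajectories of $\bx' = \bg^*(\bx)$ is generically finite and does not factor through a proper algebraic quotient. A natural candidate is to let $g_1$ be a generic polynomial of degree $d$ in which $x_2$ enters nontrivially, $g_i$ for $2 \leqslant i \leqslant n-1$ a generic polynomial of degree $D$ in which $x_{i+1}$ enters nontrivially, and $g_n$ a generic polynomial of degree $D$ completing the system. Identifiability would then ensure, inductively in $n$, that no spurious cancellation can occur in the coefficients of $f_{\min}$ attached to the vertex monomials of $P$.

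The main obstacle is rigorously verifying in this third step that $\bg^*$ realises every vertex of $P$. This amounts to showing that each of the applications of B\'ezout's theorem used in the proof of Theorem~\ref{theorem_general_specialized} is tight for $\bg^*$: no extra common factors appear, no mass escapes to infinity in the weighted toric compactification, and no drop in the expected intersection multiplicities occurs. I expect the argument to proceed by induction on $n$, propagating sharpness from the $(n-1)$-dimensional truncation to the full system via identifiability, and possibly invoking a Lefschetz-principle or model-theoretic transfer to pass from the explicit construction over $\overline{\mathbb{Q}}$ to generic parameters over an arbitrary characteristic-zero field $\KK$.
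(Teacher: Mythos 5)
Your high-level architecture matches the paper's: prove the upper-bound inclusion via Theorem~\ref{theorem_general_specialized}, exhibit one explicit witness system attaining the full polytope, and transfer to generic $\bg$ by a Zariski-openness argument (this is exactly the role of the paper's Proposition~\ref{prop:specific_to_generic}, though that proposition is more delicate than your sketch suggests: it perturbs the witness as $\bg^\circ + \varepsilon\ba(\bx)$ and needs three separate open conditions to guarantee that the order stays $n$, that the Newton polytope does not degenerate under specialization, and that the specialized polynomial is still \emph{minimal}, i.e.\ that the degree of $f_{\min}$ does not drop --- your claim that the normalized coefficients are rational functions of the coefficients of $\bg$ quietly assumes the support of $f_{\min}$ is locally constant, which is precisely what has to be proved).

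The genuine gap is in your third step, which is where essentially all of the technical content of the paper's proof lives. Your ``natural candidate'' witness --- $g_1$ generic of degree $d$ involving $x_2$, $g_i$ generic of degree $D$ involving $x_{i+1}$ --- is circular: the theorem is precisely the assertion that a generic system is sharp, so a ``generic'' witness cannot be verified without already knowing the result. The paper instead takes the very concrete system $x_1' = x_1^d + (x_2+1)^d + \cdots + (x_n+1)^d$, $x_i' = \alpha_i x_i^D$ with the $\alpha_i$ distinct primes exceeding $d$; the decoupled, explicitly solvable equations for $x_2,\ldots,x_n$ are what make the verification possible. That verification is exactly the list of obstacles you name but do not overcome: (i) the Bézout count is shown to be exact by proving the truncated systems have no solutions at infinity, which reduces to a nontrivial lemma about Vandermonde-type systems proved via Laurent/Puiseux series and a model-theoretic transfer (Lemma~\ref{lem::Puiseux}); (ii) the ideals are shown to be radical and zero-dimensional over the fields $F_\ell$ via Gr\"obner bases and a Jacobian computation; and (iii) the solutions are shown not to collide under projection to $x_1^{(\ell)}$ by proving \emph{global identifiability} of the initial conditions through complex-analytic analysis of the closed-form solutions. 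Only the combination of (i)--(iii) yields $\deg_{x_1^{(\ell)}} f_{\min} \geqslant N_\ell$, which forces the vertex monomial $(x_1^{(\ell)})^{N_\ell}$ to appear; the remaining vertex (the constant term) then requires a shift $x_1 \mapsto x_1 + \varepsilon$, a step your sketch omits. As written, your proposal is a correct plan but not a proof.
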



\section{Discussion of the bound}\label{sec:discussion}

\subsection{Experimental evaluation of the bound's accuracy}\label{sec:bound_data}

In order to determine the accuracy of our bound, we took several triples $(n, d, D)$, and generated random dense ODE models
 $\bx' = \bg(\bx)$ of dimension $n$ with $d = \deg g_1$ and $D = \max_{2 \leqslant i \leqslant n} g_i$ by sampling coefficients uniformly at random from $[-1000, 1000] \cap \mathbb{Z}$.
Tables~\ref{tab:3d} and~\ref{tab:4d} summarize the results for $n = 3$ and $n = 4$, respectively, and contain the following columns:
\begin{itemize}
    \item \emph{\# terms in the bound}: the number of integer points inside the bound for the Newton polytope of the minimal polynomial according to the Theorem~\ref{theorem_general_specialized};
    \item \emph{\# terms in the NP of $f_{\min}$}: the number of lattice points in the Newton polytope of the actual minimal polynomial (computed by our Algorithm~\ref{alg:fmin})
    \item \emph{\# terms in $f_{\min}$}: the number of monomials in the actual minimal polynomial;
    \item \emph{\%}: the ratio between the number of monomials in $f_{\min}$ and the number of monomials in the bound from Theorem~\ref{theorem_general_specialized}. 
\end{itemize}
These numbers were consistent over several independent runs, so, with high probability, they are equal to the generic ones.
From the tables, we can observe that even if the bound is not tight, then it is still quite accurate.
The numbers also indicate that, for dense inputs, the minimal polynomial is almost dense inside its Newton polytope.
Thus, predicting the Newton polytope may lead to nearly optimal support estimates.

\begin{table}[H]
\centering
	\begin{tabular}{ |c|c|c|c|c| } 
    \hline
	\multirow{2}{*}{$[d, D]$}  & \multicolumn{3}{|c|}{\# of terms}& \multirow{2}{*}{\%} \\ \cline{2-4}
    & Theorem~\ref{theorem_general_specialized} & NP of $f_{\min}$ & $f_{\min}$ & \\
	\hline\hline	
    [2,1] & 271 & 261 & 261 & 96\%\\ 
	\hline
	[2,2] & 1292 & 1292 & 1292 & 100\%\\ 
	\hline
	[2,3] & 7875 & 7875 & 7875 & 100\%\\ 
	\hline
	[2,4] & 31757 & 31757 & 31757 & 100\% \\ 
	\hline
	[2,5] & 98771 & 98771 & 98771 & 100\%\\ 
	\hline
	\hline
	[3,1] & 9520 &  8465 &  8409  & 88\%\\ 
	\hline
	[3,2] & 25788 &  25399 & 25399  & 98\% \\ 
    \hline
    [3,3] & 65637 & 65637 & 65637 & 100\% \\
    \hline
\end{tabular}
\caption{Bound for the dimension $n = 3$}\label{tab:3d}
\end{table}
\begin{table}[H]
\centering
	\begin{tabular}{ |c|c|c|c|c| } 
	\hline
	\multirow{2}{*}{$[d, D]$}  & \multicolumn{3}{|c|}{\# of terms}& \multirow{2}{*}{\%} \\ \cline{2-4}
    & Theorem~\ref{theorem_general_specialized} & NP of $f_{\min}$ & $f_{\min}$ & \\
	\hline\hline
	[1,2] & 8189 & 8189 &  8189 & 100\% \\ 
    \hline
    [2,1] & 11021 &  10617 & 10617 & 96 \% \\ 
    \hline
\end{tabular}
\caption{Bound for the dimension $n = 4$}\label{tab:4d}
\end{table}
\begin{remark}\label{rem:resolvent_comparison}As a part of the study of differential resolvents, \cite[Theorem~36]{DAlfonso2006} establishes a degree bound for the minimal polynomial which, in our notation, can be written as $N := (\max(d, D))^{2n^2}$.
In principle, one could use this bound to {\Rb estimate} the size of the support as $\binom{N + n + 1}{n + 1}$, but this is impractical:
already for $d = D = 2$, $n = 3$ the number is $2862209$ (compared to $1292$ in Table~\ref{tab:3d}).
\end{remark}


\subsection{Potential and limitations of the present approach}\label{sec:mixed_fiber}

As described in the introduction, we obtain the bound for the result of differential elimination by constructing a polynomial elimination problem.
More precisely, the differential elimination problem for $\bx' = \bg(\bx)$ is reduced to a polynomial elimination problem for 
\begin{equation}\label{eq:polyproblem}
x_1' - g_1 = \mathcal{L}_{\bg}^\ast(x_1' - g_1) = \ldots = (\mathcal{L}_{\bg}^\ast)^{n - 1}(x_1' - g_1) = 0,
\end{equation}
where the operator $\mathcal{L}^\ast_{\bg} \colon \KK[x_1^{(\infty)}, x_2, \ldots, x_n] \to \KK[x_1^{(\infty)}, x_2, \ldots, x_n]$ will be defined in~\eqref{eq:opD}.
The proof of Theorem~\ref{theorem_general_specialized} only uses the information on the supports of~\eqref{eq:polyproblem} 
and ignores possible relations between the coefficients.
This turns out to be sufficient for establishing a sharp bound in many cases. 
For the remaining cases, one can naturally ask:
\begin{enumerate}[label = (Q\arabic*)]
    \item\label{question:refine} Is it possible to refine the bound from Theorem~\ref{theorem_general_specialized} by using only the information about the supports of the polynomial system~\eqref{eq:polyproblem}? (e.g., by analysis of its mixed fiber polytope~\cite{Esterov2008})
    \item\label{question:sharp} May such a refinement produce a sharp bound for the remaining cases? 
\end{enumerate}

In order to answer these questions, we have conducted the following experiment.
For the smallest cases, for which the bound in Theorem~\ref{theorem_general_specialized} is not sharp, namely $(n, d, D) = (3, 2, 1), (3, 3, 1)$, we take polynomials with the same Newton polytopes as in~\eqref{eq:polyproblem} but sample the coefficients randomly from $[-1000, 1000] \cap \mathbb{Z}$.
Then we perform elimination using Gr\"obner bases, compute the Newton polytope for the resulting {\Rb eliminant} $\tilde{f}_{\min}$, and compare the number of integer 
points in it { \Rb(see column ``generic eliminant for~\eqref{eq:polyproblem}'')} with 
the number we would have obtained by performing differential elimination.

The figures suggest that, the answer to~\ref{question:refine} is ``yes'', that is, there is still some room for improving the bound by only looking at the supports of~\eqref{eq:polyproblem}.
On the other hand, the answer to~\ref{question:sharp} is ``no'' meaning that the system~\eqref{eq:polyproblem} is inherently non-generic.
Thus, making the bound tight would likely require looking beyond the supports of the polynomial system~\eqref{eq:polyproblem} (recent works in this spirit include~\cite{Dickenstein2022,esterov2024engineeredcompleteintersectionsslightly}).

\begin{table}[H]
\centering
	\begin{tabular}{ |c|c|c|c| } 
 \hline
    \multirow{2}{*}{$[n, d, D]$} & \multicolumn{3}{|c|}{\# of {\Rb lattice} points in the Newton polytope}\\
	\cline{2-4}
	 & Theorem~\ref{theorem_general_specialized}  &  {\Rb generic eliminant} for~\eqref{eq:polyproblem} & ${\Rb f_{\min}}$ \\
	\hline
 \hline
	$[3, 2, 1]$ & 271 & 266 & 261 \\
    \hline
    $[3, 3, 1]$ & 9520 & 8661 & 8465 \\ 
    \hline
\end{tabular}
\caption{Comparison, in terms of the number of integer points, of the current bound (from Theorem~\ref{theorem_general_specialized}) and the true value (from $f_{\min}$) to the best possible bound one could achieve by analyzing the supports of~\eqref{eq:polyproblem}.}\label{tab:mixedfiber}
\end{table}


\subsection{On the alternative approach via tropical implicitization}\label{sec:compare_tropical}

As we have mentioned in the introduction, computing the minimal polynomial can be reduced to the implicitization problem.
We will explain this reduction on the system 
\begin{equation}\label{eq:tropical_base_example}
    \begin{cases}
        x'_1 = x_1^2 + x_1 x_2 + x_2^2 + 1,\\
        x'_2 = x_2.
    \end{cases}
\end{equation}
We can write $x_1, x_1', x_1''$ as polynomials in $x_1, x_2$ as follows:
\begin{equation}\label{eq:parametric}
\begin{split}
x_1 &= x_1, \\ 
x_1' &= x_1^2 + x_1 x_2 + x_2^2 + 1,\\
 x_1''  &=  x_1'(2 x_1 + x_2) + x_2' (2 x_2 + x_1) = (x_1^2 + x_1 x_2 + x_2^2 + 1)(2 x_1 + x_2) + x_2 (2 x_2 + x_1).
\end{split}
\end{equation}
These three equations define a two-dimensional parametric surface with local coordinates $(x_1, x_2)$ in a three-dimensional ambient space with coordinates $(x_1, x_1', x_1'')$.
The implicit equation for this surface is exactly the desired relation between $x_1, x_1', x_1''$.
Tropical implicitization~\cite{Sturmfels2007,rose2023tropicalimplicitizationrevisited} allows to produce a bound on the Newton polytope of this implicit equation, and the bound would be sharp if the parametric representation~\eqref{eq:parametric} had generic coefficients. 
However, this is not the case.
Indeed, a direct computation shows that the Newton polytope of the minimal differential equation for $x_1$ is defined by the following inequalities:
for every monomial  $x_1^{e_0} (x'_1)^{e_1} (x''_1)^{e_2}$ in $f_{\min}$ we have
\[
e_0 + e_1 + 2 e_2 \leqslant 4,\quad \text{ and } \quad e_0 + 2 e_1 + 3 e_2 \leqslant 6.
\]
On the other hand, the bound obtained by applying the tropical implicitization methods~\cite{Sturmfels2007,rose2023tropicalimplicitizationrevisited}
to~\eqref{eq:parametric} gives a larger polytope, given by 
\[
e_0 + 2 e_1 + 3 e_2 \leqslant 6.
\]
It turns out that the implicitization problem derived from an ODE using the procedure outlined above is non-generic. 
As a result, it may yield a more conservative bound on the support than the one given by Theorem~\ref{theorem_general_specialized}, 
even when the initial ODE system is randomly chosen to be dense, rather than special like the one in equation~\eqref{eq:tropical_base_example}.
We will illustrate this by taking a random dense ODE model $\bx' = \bg(\bx)$ of dimensions $n = 2$ and $n = 3$ and degrees $d = 2$ and $D = 1$ (in the notation of Theorem~\ref{theorem_general_specialized}) with the coefficients sampled uniformly at random from $[-1000, 1000] \cap \mathbb{Z}$.

In the case, $n = 2, d = 2, D = 1$, Theorem~\ref{theorem_general_specialized} yields a polytope given by 
\[
e_0 + e_1 + 2 e_2 \leqslant 4, \quad \text{ and }\quad e_0 + 2 e_1 + 3 e_2 \leqslant 6.
\]
The polytope obtained via tropical implicitization is larger since it is given only by one of inequalities above, namely, $e_0 + 2 e_1 + 3 e_2 \leqslant 6$.

Similarly, for $n = 3, d = 2, D = 1$, Theorem~\ref{theorem_general_specialized} yields
\[
e_0 + e_1 + e_2 + 2 e_3 \leqslant 8, \quad \text{ and }\quad e_0 + e_1 + 2 e_2 + 3 e_3 \leqslant 12.
\]
On the other hand, the polytope obtained via tropical implicitization is given by 
\[
e_0 + 2 e_1 + 3 e_2 + 4 e_3 \leqslant 24.
\]
In order to quantify the difference, we show the number of lattice points (that is, the number of monomial to consider) for the polytope from Theorem~\ref{theorem_general_specialized} and for the one obtained via tropical implicitization in Table~\ref{tab:tropical}

\begin{table}[htbp!]
\centering
	\begin{tabular}{ |c|c|c| } 
	\hline
    \multirow{2}{4em}{$[n, d, D]$} & \multicolumn{2}{|c|}{\# of terms} \\
    \cline{2-3}
	 & Theorem~\ref{theorem_general_specialized}  & tropical implicitization \\
	\hline
 \hline
	$[2, 2, 1]$ & 19 & 23 \\ 
    \hline
    $[3, 2, 1]$ & 271 & 1292 \\ 
    \hline
    $[3, 3, 1]$ & 9520 & 65637 \\
    \hline
\end{tabular}
\caption{Comparison with the approach via tropical elimination}\label{tab:tropical}
\end{table}


\section{Proofs: general facts and notation}\label{sec:proofs:general}

In this section we will explain how we reduce the differential elimination problem to a polynomial elimination problem.
This construction will be then used both in the proof of the bound for the support (Theorem~\ref{theorem_general_specialized}) and in the proof of its sharpness (Theorems~\ref{thm:2d} and~\ref{thm:sharp})

We will fix some notation used throughout the rest of the paper.

\begin{notation}\label{not:proofs_notation}
    Consider an ODE system $\bx' = \bg(\bx)$ with $\bx = \tuple{x_1, \ldots, x_n}$ and $g_1, \ldots, g_n \in \KK[\bx]$.
    \begin{itemize}
        \item  The differential ideal $\ideal{\bx' - \bg}^{(\infty)} \subset \KK[\bx^{(\infty)}]$ will be denoted by $I_{\bg}$.
        \item We denote the Lie derivative operator $\mathcal{L}_{\bg} \colon \KK[\bx] \to \KK[\bx]$ by $\mathcal{L}_{\bg} := \sum\limits_{i=1}^{n} g_i \frac{\partial}{\partial x_i}$.

        \item We define the operator $\mathcal{L}^\ast_{\bg} \colon \KK[x_1^{(\infty)}, x_2, \ldots, x_n] \to \KK[x_1^{(\infty)}, x_2, \ldots, x_n]$ by the formula
        \begin{equation}\label{eq:opD}
        \mathcal{L}^\ast_{\bg} := \sum\limits_{i=2}^{n} g_i \frac{\partial}{\partial x_i} +  \sum\limits_{{\Rb j=0}}^{\infty} x_1^{({\Rb j}+1)}\frac{\partial}{\partial x_1^{({\Rb j})}}.
        \end{equation}

        \item We define the reduction homomorphism $\mathcal{R}_{\bg} \colon \KK[\bx^{(\infty)}] \to \KK[\bx]$ by $\mathcal{R}_{\bg}(x_i^{(j)}) := \mathcal{L}_{\bg}^j (x_i)$.
    \end{itemize}
\end{notation}


Next in the section we consider an ODE system:
\begin{equation}\label{eq:sigma_general}
\bx' = \bg(\bx), \quad \text{ where } \bx = \tuple{x_1, \ldots, x_n} \text{ and } g_1, \ldots, g_n \in \KK[\bx].
\end{equation}

The following lemma shows that the problem of computing the minimal polynomial of the differential elimination ideal $I_{\bg} \cap \KK[x_1^{(\infty)}]$ can be reduced to a polynomial elimination problem for polynomials
\[
  x_1' - g_1, \; \mathcal{L}_{\bg}^\ast(x_1' - g_1), \ldots, \; (\mathcal{L}_{\bg}^\ast)^{\nu - 1}(x_1' - g_1),
\]
where $\nu$ is the order of the minimal differential polynomial for $x_1$.

\begin{lemma} \label{lem:ideal_is_wanted}
 For the system~\eqref{eq:sigma_general} for every $s \geqslant 0$:
    \[
    \ideal{x'_1 - g_1, \mathcal{L}^\ast_{\bg}(x_1' - g_1), \ldots, (\mathcal{L}_{\bg}^\ast)^s(x_1' - g_1) } = I_{\bg} \cap \mathbb{K}[x_1^{(\leqslant s + 1)}, x_2, \ldots, x_n].
    \]
    Furthermore, this ideal is also equal to $\ideal{ x_1' - g_1, x_1'' - \mathcal{L}_{\bg}(g_1), \ldots, x_1^{(s+1)} - \mathcal{L}^s_{\bg}(g_1)}$.
\end{lemma}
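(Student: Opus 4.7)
The plan rests on two ingredients: an identification of $I_{\bg}$ as $\ker \mathcal{R}_{\bg}$, and a triangularity observation that lets one take normal forms modulo the right-hand ideal using $\mathcal{R}_{\bg}$. I would first prove $I_{\bg} = \ker \mathcal{R}_{\bg}$. Since $I_{\bg}$ is a differential ideal containing each $x_i' - g_i$, an easy induction on the derivative order gives $x_i^{(k)} \equiv \mathcal{L}_{\bg}^k(x_i) \pmod{I_{\bg}}$, so every polynomial $P \in \KK[\bx^{(\infty)}]$ satisfies $P \equiv \mathcal{R}_{\bg}(P) \pmod{I_{\bg}}$; thus $\ker \mathcal{R}_{\bg} \subseteq I_{\bg}$. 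Conversely, $\mathcal{R}_{\bg}$ restricts to the identity on $\KK[\bx]$ and intertwines the derivation $(\,)'$ on $\KK[\bx^{(\infty)}]$ with $\mathcal{L}_{\bg}$ on $\KK[\bx]$ (verified on generators), so $\mathcal{R}_{\bg}$ kills every derivative of each $x_i' - g_i$, giving $I_{\bg} \subseteq \ker \mathcal{R}_{\bg}$.

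Next, I would prove the ``Furthermore'' equality by letting $K_s$ denote the right-hand ideal. The inclusion $K_s \subseteq I_{\bg} \cap \KK[x_1^{(\leqslant s+1)}, x_2, \ldots, x_n]$ is immediate from $\mathcal{R}_{\bg}(x_1^{(k+1)} - \mathcal{L}_{\bg}^k(g_1)) = \mathcal{L}_{\bg}^{k+1}(x_1) - \mathcal{L}_{\bg}^k(g_1) = 0$. For the reverse inclusion, observe that the generators of $K_s$ are triangular in $x_1', \ldots, x_1^{(s+1)}$ with tails $\mathcal{L}_{\bg}^k(g_1) \in \KK[\bx]$. Hence any $f \in \KK[x_1^{(\leqslant s+1)}, x_2, \ldots, x_n]$ can be reduced modulo $K_s$, by iteratively substituting $x_1^{(k+1)} \mapsto \mathcal{L}_{\bg}^k(g_1)$ from $k = s$ down to $k = 0$, to a polynomial $\tilde f \in \KK[\bx]$; since these substitutions coincide with the action of $\mathcal{R}_{\bg}$ on the relevant variables, necessarily $\tilde f = \mathcal{R}_{\bg}(f)$. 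If $f \in I_{\bg} = \ker \mathcal{R}_{\bg}$, then $\tilde f = 0$ and thus $f \in K_s$.

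Finally, to show the first equality let $J_s$ denote the left-hand ideal. Since $\mathcal{L}^\ast_{\bg}(x_1^{(i)}) = x_1^{(i+1)}$ for every $i \geqslant 0$, the $k$-th generator of $J_s$ equals $x_1^{(k+1)} - (\mathcal{L}^\ast_{\bg})^k(g_1)$. I would prove $J_s = K_s$ by induction on $s$: the base case $s=0$ is trivial, and for the inductive step it suffices to show $(\mathcal{L}^\ast_{\bg})^k(g_1) - \mathcal{L}_{\bg}^k(g_1) \in K_{k-1}$. A generator-by-generator check gives the intertwining $\mathcal{R}_{\bg} \circ \mathcal{L}^\ast_{\bg} = \mathcal{L}_{\bg} \circ \mathcal{R}_{\bg}$, hence $\mathcal{R}_{\bg}((\mathcal{L}^\ast_{\bg})^k(g_1)) = \mathcal{L}_{\bg}^k(g_1)$. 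Moreover, each application of $\mathcal{L}^\ast_{\bg}$ raises the $x_1$-order by at most one, so $(\mathcal{L}^\ast_{\bg})^k(g_1) \in \KK[x_1^{(\leqslant k)}, x_2, \ldots, x_n]$; the triangular normal-form reasoning from the previous step then reduces it modulo $K_{k-1}$ to $\mathcal{R}_{\bg}((\mathcal{L}^\ast_{\bg})^k(g_1)) = \mathcal{L}_{\bg}^k(g_1)$. The main bookkeeping difficulty is verifying carefully that the order bound $(\mathcal{L}^\ast_{\bg})^k(g_1) \in \KK[x_1^{(\leqslant k)}, x_2, \ldots, x_n]$ keeps one inside the scope of $K_{k-1}$ rather than forcing use of a higher-order generator; everything else is routine verification on generators.
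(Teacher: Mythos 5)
Your proof is correct, and at its core it uses the same mechanism as the paper's: the generators are triangular with leading variables $x_1^{(i+1)}$, so reduction modulo them is a substitution landing in $\KK[\bx]$, and an element of $I_{\bg}$ must reduce to zero. The packaging differs in two ways worth noting. First, you front-load the identity $I_{\bg} = \ker \mathcal{R}_{\bg}$ and derive everything from it; the paper instead phrases the reduction as a Gr\"obner-basis normal form (Buchberger's first criterion applied to generators whose leading terms are distinct variables) and closes the argument by citing $I_{\bg} \cap \KK[\bx] = \{0\}$ from an external reference, only proving $\ker \mathcal{R}_{\bg} = I_{\bg}$ afterwards as a separate lemma (and by a Gr\"obner argument rather than your direct check that $\mathcal{R}_{\bg}$ intertwines $(\,)'$ with $\mathcal{L}_{\bg}$). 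Your version is therefore more self-contained. Second, for the equivalence of the two generating sets, the paper simply remarks that the Gr\"obner argument ``applies verbatim'' with $\mathcal{L}_{\bg}$ in place of $\mathcal{L}_{\bg}^\ast$, which shows both generating sets cut out the same elimination ideal; your explicit induction via the intertwining $\mathcal{R}_{\bg} \circ \mathcal{L}_{\bg}^\ast = \mathcal{L}_{\bg} \circ \mathcal{R}_{\bg}$ and the order bound $(\mathcal{L}_{\bg}^\ast)^k(g_1) \in \KK[x_1^{(\leqslant k)}, x_2, \ldots, x_n]$ spells out the same fact at the level of generators. Both routes are sound; yours trades brevity for explicitness and independence from the cited lemma.
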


\begin{corollary}
    The minimal differential polynomial in $I_{\bg} \cap \KK[x_1^{(\infty)}]$ 
    is the generator of the principal ideal 
    $\ideal{{\Rb x_1'} - g_1, \mathcal{L}_{\bg}^\ast(x_1' - g_1), \ldots, (\mathcal{L}_{\bg}^\ast)^{\nu - 1}(x_1' - g_1)} {\Rb \cap \KK[x_1^{(\infty)}]}$.
\end{corollary}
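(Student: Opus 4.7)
The plan is to combine Lemma~\ref{lem:ideal_is_wanted} with a height/dimension argument and the irreducibility of $f_{\min}$. I read the corollary as stating that, after eliminating $x_2, \ldots, x_n$, the ideal
\[
J := \ideal{x_1' - g_1, \mathcal{L}_{\bg}^\ast(x_1' - g_1), \ldots, (\mathcal{L}_{\bg}^\ast)^{\nu - 1}(x_1' - g_1)} \subset \KK[x_1^{(\leqslant \nu)}, x_2, \ldots, x_n]
\]
contracts to a principal ideal in $\KK[x_1^{(\leqslant \nu)}]$ whose generator is $f_{\min}$; here $\nu := \ord f_{\min}$.

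The first step is to apply Lemma~\ref{lem:ideal_is_wanted} with $s = \nu - 1$, which immediately gives $J = I_{\bg} \cap \KK[x_1^{(\leqslant \nu)}, x_2, \ldots, x_n]$. Intersecting both sides with $\KK[x_1^{(\leqslant \nu)}]$ reduces the problem to showing that $I_{\bg} \cap \KK[x_1^{(\leqslant \nu)}]$ is principal and generated by $f_{\min}$. This contraction is prime (as $I_{\bg}$ is prime by the discussion following~\eqref{id}) and contains $f_{\min}$, which lies in $\KK[x_1^{(\leqslant \nu)}]$ by the definition of $\nu$.

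Next I would pin down the height of the contraction. By the minimality of the order of $f_{\min}$, the intersection $I_{\bg} \cap \KK[x_1^{(\leqslant \nu - 1)}]$ is the zero ideal, so the images of $x_1, x_1', \ldots, x_1^{(\nu - 1)}$ in $\KK[x_1^{(\leqslant \nu)}]/(I_{\bg} \cap \KK[x_1^{(\leqslant \nu)}])$ are algebraically independent; hence the quotient has transcendence degree $\nu$ over $\KK$, and the contraction is a prime of height one. In the UFD $\KK[x_1^{(\leqslant \nu)}]$ every height-one prime is principal, so this contraction equals $\ideal{p}$ for some irreducible $p$ which must divide $f_{\min}$. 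To identify $p$ with $f_{\min}$ (up to a constant), I would argue that $f_{\min}$ is itself irreducible: any nontrivial factorization $f_{\min} = f_1 f_2$ would, by primality of $I_{\bg}$, push one of the factors into $I_{\bg}$, and that factor would have either strictly smaller order than $\nu$ (contradicting the minimality of order) or order exactly $\nu$ but strictly smaller total degree (contradicting the minimality of degree in the definition of $f_{\min}$).

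The proof is essentially a routine chain of commutative-algebra facts once Lemma~\ref{lem:ideal_is_wanted} is available; the only delicate step is the irreducibility of $f_{\min}$, where both halves of the ``minimal order first, then minimal total degree'' convention are used in a crucial way. I do not anticipate any obstacle beyond this piece of bookkeeping.
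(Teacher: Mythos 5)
Your proof is correct. Two remarks on how it relates to the paper. First, the corollary as printed is slightly imprecise: the ideal $\ideal{x_1' - g_1, \ldots, (\mathcal{L}_{\bg}^\ast)^{\nu-1}(x_1'-g_1)}$ in $\KK[x_1^{(\leqslant \nu)}, x_2, \ldots, x_n]$ has height $\nu$ and is not itself principal for $\nu > 1$; the intended object is its contraction to $\KK[x_1^{(\leqslant \nu)}]$, exactly as you read it (this is also how the statement is used later, in the proof of Theorem~\ref{theorem_general}, where the authors write $\ideal{x_1'-h_1,\ldots,x_1^{(\mu)}-h_\mu}\cap\KK[x_1^{(\leqslant\mu)}]=\ideal{f_{\min}}$). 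Second, the paper offers no proof of the corollary at all: it is treated as immediate from Lemma~\ref{lem:ideal_is_wanted} together with the fact $I_{\bg}\cap\KK[x_1^{(\leqslant\nu)}]=\ideal{f_{\min}}$, which the authors take for granted (it is a standard consequence of characteristic-set theory, cf.\ their Proposition~\ref{prop::idealstructure} and the cited lecture notes). Your contribution is to supply a self-contained derivation of that fact: the contraction is a nonzero prime whose quotient has transcendence degree exactly $\nu$ (minimality of the order gives the algebraic independence of $x_1,\ldots,x_1^{(\nu-1)}$), hence a height-one prime in a polynomial ring, hence principal, and the generator is identified with $f_{\min}$ via irreducibility of $f_{\min}$, which you correctly extract from both clauses of the minimality convention. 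As you note, the irreducibility detour is optional: the generator $p$ is a nonzero element of the elimination ideal dividing $f_{\min}$, so minimality of order forces $\ord p = \nu$ and minimality of degree forces $\deg p = \deg f_{\min}$, whence $p \sim f_{\min}$ directly. Either way the argument is complete.
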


\begin{proof}[Proof of Lemma~\ref{lem:ideal_is_wanted}]
    Denote 
    \[
    J := \ideal{x_1' - g_1, \mathcal{L}_{\bg}^\ast(x_1'  - g_1), \ldots, (\mathcal{L}_{\bg}^\ast)^s(x_1' - g_1)} \subset \KK[x_1^{(\leqslant s + 1)}, x_2, \ldots, x_n].
    \]
    We observe that, for $f \in I_{\bg} \cap \KK[x_1^{(\infty)}, x_2, \ldots, x_n]$, we have
    \[
    \mathcal{L}_{\bg}^\ast (f) = f' - \sum\limits_{i = 2}^n (x_i' - g_i) \frac{\partial f}{\partial x_i} \in I_{\bg}.
    \]
    Since $x_1' - g_1 \in I_{\bg}$, we have $J \subset I_{\bg} \cap \KK[x_1^{(\leqslant s + 1)}, x_2, \ldots, x_n]$.
    For the reverse inclusion, assume for contradiction that there exists a $p \in I_{\bg} \cap \mathbb{K}[x_1^{(\leqslant s + 1)}, x_2, \ldots, x_n]$ with $p \not\in J$. 
    We fix the monomial ordering on $\KK[x_1^{(\leqslant s + 1)}, x_2, \ldots, x_n]$ to be the lexicographic monomial ordering with 
    \[
    x_1^{(s + 1)} > x_1^{(s)} > \ldots > x'_1 > x_1 > x_2 > \ldots > x_n.
    \]
    The leading term of $(\mathcal{L}_{\bg}^\ast)^i(x_1' - g_1)$ is $x_1^{(i + 1)}$.
    Therefore, the leading terms of all generators of $J$ are distinct variables. 
    Hence this set is a Gr\"obner basis of $J$
    by the first Buchberger criterion~\cite{buchberger1979criterion}. 
    Then the result of the reduction of $p$ with respect to the Gr\"obner basis belongs to $\mathbb{K}[\mathbf{x}]$ and is distinct from zero.  
    Thus, we get a contradiction with $p \in I_{\bg}$ because $I_{\bg} \cap \KK[\mathbf{x}] = \{0\}$ by~\cite[Lemmas 3.1 and 3.2]{hong2020global}.

    Since the argument above applies verbatim if we use $\mathcal{L}_{\bg}$ instead of $\mathcal{L}_{\bg}^\ast$, $J$ coincides with $\ideal{x_1' - g_1, x''_1 - \mathcal{L}_{\bg}(g_1), \ldots, x_1^{(s+1)} - \mathcal{L}^s_{\bg}(g_1)}$.
\end{proof}

The following lemma will be used to reduce the differential ideal membership to a polynomial substitution.

\begin{lemma}[{cf. \cite[Remark 7]{DAlfonso2006}}] \label{lem::substitution}
    For the system~\eqref{eq:sigma_general} we have $\ker(\mathcal{R}_{\bg}) = I_{\bg}$.
\end{lemma}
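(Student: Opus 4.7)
The plan is to prove both inclusions by establishing a single stronger statement: for every $P \in \KK[\bx^{(\infty)}]$, one has $P \equiv \mathcal{R}_{\bg}(P) \pmod{I_{\bg}}$. Once this is in hand, the two inclusions follow almost formally from the fact (used already in the proof of Lemma~\ref{lem:ideal_is_wanted}) that $I_{\bg} \cap \KK[\bx] = \{0\}$.

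The key step is the pointwise claim: for every $1 \leqslant i \leqslant n$ and $j \geqslant 0$,
\[
x_i^{(j)} \equiv \mathcal{L}_{\bg}^{j}(x_i) \pmod{I_{\bg}}.
\]
I would prove this by induction on $j$. The base case $j = 0$ is trivial, and $j = 1$ is precisely the statement $x_i' - g_i \in I_{\bg}$. For the inductive step, I would differentiate the congruence $x_i^{(j)} \equiv \mathcal{L}_{\bg}^{j}(x_i) \pmod{I_{\bg}}$ inside $\KK[\bx^{(\infty)}]$; this is legitimate because $I_{\bg}$ is a differential ideal. The left-hand side becomes $x_i^{(j+1)}$, while the right-hand side, since $\mathcal{L}_{\bg}^{j}(x_i)$ is a polynomial in $x_1, \ldots, x_n$ only, becomes $\sum_{k=1}^{n} x_k'\, \partial_{x_k} \mathcal{L}_{\bg}^{j}(x_i)$. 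Replacing each $x_k'$ by $g_k$ modulo $I_{\bg}$ turns this into $\sum_k g_k\, \partial_{x_k} \mathcal{L}_{\bg}^{j}(x_i) = \mathcal{L}_{\bg}\bigl(\mathcal{L}_{\bg}^{j}(x_i)\bigr) = \mathcal{L}_{\bg}^{j+1}(x_i)$, closing the induction.

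Because $\mathcal{R}_{\bg}$ is a ring homomorphism fixing $\KK$ and the congruence $x_i^{(j)} \equiv \mathcal{R}_{\bg}(x_i^{(j)}) \pmod{I_{\bg}}$ holds on the generators of $\KK[\bx^{(\infty)}]$, it extends multiplicatively to give $P \equiv \mathcal{R}_{\bg}(P) \pmod{I_{\bg}}$ for every $P \in \KK[\bx^{(\infty)}]$. From here both directions are immediate: if $P \in I_{\bg}$ then $\mathcal{R}_{\bg}(P) \in I_{\bg} \cap \KK[\bx] = \{0\}$, so $I_{\bg} \subseteq \ker \mathcal{R}_{\bg}$; and if $\mathcal{R}_{\bg}(P) = 0$ then $P \equiv 0 \pmod{I_{\bg}}$, giving $\ker \mathcal{R}_{\bg} \subseteq I_{\bg}$.

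The only delicate point is justifying that we may differentiate a congruence modulo $I_{\bg}$ in the inductive step, but this is exactly the content of $I_{\bg}$ being a differential ideal, and hence not really an obstacle. Everything else is bookkeeping with the definitions of $\mathcal{L}_{\bg}$ and $\mathcal{R}_{\bg}$.
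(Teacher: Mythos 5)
Your proof is correct. It takes a slightly different route from the paper's: you establish the congruence $P \equiv \mathcal{R}_{\bg}(P) \pmod{I_{\bg}}$ directly by induction (differentiating inside the differential ideal and replacing $x_k'$ by $g_k$), and then close the argument by importing the fact $I_{\bg} \cap \KK[\bx] = \{0\}$ from the literature — the same fact the paper cites from Hong et al.\ in the proof of Lemma~\ref{lem:ideal_is_wanted}. The paper instead runs a Gr\"obner basis argument: the set $G = \{x_j^{(i)} - \mathcal{L}_{\bg}^i(x_j)\}$ is a Gr\"obner basis of $I_{\bg}$ for a suitable lexicographic order (first Buchberger criterion, since the leading terms are distinct variables), which yields the isomorphism $\KK[\bx^{(\infty)}]/I_{\bg} \simeq \KK[\bx]$ and identifies $\mathcal{R}_{\bg}(f)$ with the normal form of $f$. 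The two arguments share the same core computation — the paper's claim that $G \subset I_{\bg}$ and that $f - \mathcal{R}_{\bg}(f) \in I_{\bg}$ is exactly your inductive congruence, which the paper leaves implicit — but they package the conclusion differently. Your version is more elementary and spells out the step the paper glosses over, at the cost of relying on the external fact that $I_{\bg}$ meets $\KK[\bx]$ trivially; the paper's version is self-contained on that point, since the quotient isomorphism delivers it for free, and it additionally exhibits the quotient structure that is reused elsewhere (e.g.\ in Lemma~\ref{lem:sharp_radical}). Both are complete proofs.
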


\begin{proof}
  Let the dimension of the ODE system be $n$.
  Denote by $\prec$ the lexicographic monomial ordering on $\KK[\mathbf{x}^{\infty}]$ given by the variable ordering $x_j^{(i)}\prec x_k^{(l)}$ iff $i > l$ or $(i = l)\;\&\;(j<k)$.
  Note that the set $G:=\{ x_j^{(i)} - \mathcal{L}_{\bg}^i(g_j) \}_{1 \leqslant j \leqslant n, i \geqslant 1}$ is a Gröbner
  basis of $I_{\bg}$ w.r.t. $\prec$ by the first Buchberger criterion since the leading terms of all elements of $G$ are distinct variables {\Rb(for the notion of Gröbner bases in polynomial rings with infinite 
variables, we refer to~\cite{IimaYoshino2009}; the Buchberger 
criterion is given in~\cite[Proposition~1.13]{IimaYoshino2009}).} 
  This yields an isomorphism
  \[
  \mathbb{K}[x_1^{(\infty)},\ldots, x_n^{(\infty)}] / I_{\bg} \simeq \mathbb{K}[x_1,\ldots, x_n]
  \]
  of $\mathbb{K}$-algebras induced by sending any
  $g\in \mathbb{K}[x_1^{(\infty)},\ldots, x_n^{(\infty)}]$ to its normal form
  w.r.t. $G$.

  On the other hand, $\mathcal{R}_{\bg}(f)\in \mathbb{K}[x_1,\dots,x_n]$ and
  $f-\mathcal{R}_{\bg}(f)\in I_{\bg}$. 
  Hence $\mathcal{R}_{\bg}(f)$ must be the normal form of
  $f$ w.r.t. $G$ and $\prec$ because the monomials in $x_1,\dots,x_n$ form
  a $\mathbb{K}$-vector space basis of
  $\mathbb{K}[x_1^{(\infty)},\ldots, x_n^{(\infty)}] / I_{\bg}$. 
  But then
  $f\in I_{\bg}$ if and only if $\mathcal{R}_{\bg}(f) = 0$.
\end{proof}



\section{Proofs: the bound for the support}\label{sec:proofs_bound}

\subsection{Weighted B\'ezout bound}

{\Ra The key tool we use to prove the upper bound is the B\'ezout bound. 
In our setup, because of the presence of high-order derivatives, we found it beneficial to use the weighted version of the bound which we establish in this section.}
 
\begin{lemma} \label{lem:square_free}
    Let $g $ be a square-free polynomial in $\mathbb{K}[\mathbf{x}, y] = \mathbb{K}[x_1, \ldots, x_n, y]$. Then for every $\omega \in \mathbb{Z}_{\geqslant 0}$ there exists a homomorphism $\varphi$
    \[
    \varphi: \mathbb{K}[\mathbf{x}, y] \rightarrow \mathbb{K}[\mathbf{x}, z]
    \]
    with $\varphi(x_i) = x_i$ and $\varphi(y) = {\Ra{p}}$, 
    where $p \in \mathbb{K}[z]$, $\deg p = \omega$, such that the polynomial $\varphi(g)$ is square-free.
\end{lemma}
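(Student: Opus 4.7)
The plan is to choose $p$ generically and verify that such a choice preserves square-freeness. First I would introduce independent indeterminates $a_0, \ldots, a_\omega$, set
\[
p(z) := a_0 + a_1 z + \cdots + a_\omega z^\omega,
\]
and consider the universal substitution
\[
\tilde\varphi \colon \mathbb{K}[\mathbf{x}, y] \to \mathbb{K}[a_0, \ldots, a_\omega, \mathbf{x}, z], \qquad x_i \mapsto x_i, \quad y \mapsto p(z).
\]
The problem then splits into two parts: first show $\tilde\varphi(g)$ is square-free in $\mathbb{K}[a_0, \ldots, a_\omega, \mathbf{x}, z]$, and then specialize the $a_i$ to values in $\mathbb{K}$.

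The first part is easy once one observes that the assignment $a_0 \mapsto u - a_1 z - \cdots - a_\omega z^\omega$ (identity on the other generators) is a $\mathbb{K}$-algebra automorphism of the ambient polynomial ring, rewriting it as $\mathbb{K}[u, a_1, \ldots, a_\omega, \mathbf{x}, z]$ and sending $p(z)$ to $u$. Under this automorphism $\tilde\varphi(g)$ becomes $g(\mathbf{x}, u)$, which is square-free in $\mathbb{K}[\mathbf{x}, u]$ (just $g$ with $y$ renamed $u$). Since adjoining further indeterminates preserves the irreducible factorization of a polynomial, $g(\mathbf{x}, u)$ remains square-free in the larger ring, and hence so does $\tilde\varphi(g)$.

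For the specialization, square-freeness of $\tilde\varphi(g)$ in $\mathbb{K}[a_0, \ldots, a_\omega, \mathbf{x}, z]$ passes to $\mathbb{K}(a_0, \ldots, a_\omega)[\mathbf{x}, z]$ by Gauss's lemma, so generic reducedness in characteristic zero yields a nonempty Zariski open subset of $\mathbb{A}^{\omega+1}(\mathbb{K})$ on which the specialization $g(\mathbf{x}, p_{\mathbf{a}_0}(z))$ remains square-free. Intersecting with the Zariski open condition $a_\omega \neq 0$ (forcing $\deg p_{\mathbf{a}_0} = \omega$) and using that $\mathbb{K}$ is infinite, one extracts a $\mathbb{K}$-rational point $\mathbf{a}_0$ giving the desired $p$ and $\varphi$.

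The step expected to require the most care is the specialization. A concrete justification decomposes $\tilde\varphi(g) = \prod_i F_i$ into distinct irreducibles in $\mathbb{K}[a_0, \ldots, a_\omega, \mathbf{x}, z]$ and establishes, via discriminant and resultant arguments, that each $F_i(\mathbf{a}_0, \mathbf{x}, z)$ is square-free outside a proper Zariski closed subset of $\mathbf{a}_0$ and that the $F_i(\mathbf{a}_0, \mathbf{x}, z)$ remain pairwise coprime outside a proper closed subset, producing the nonempty Zariski open set used above.
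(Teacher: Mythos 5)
Your proof is correct, but it takes a genuinely different route from the paper's. The paper fixes the concrete one-parameter family $p(z) = z^{\omega} + a$ and verifies square-freeness of $g(\mathbf{x}, p(z))$ directly: it first notes that no square from $\mathbb{K}[\mathbf{x}]$ can divide the substituted polynomial, then factors $\frac{\partial}{\partial z}\, g(\mathbf{x},p(z))$ as $\bigl(\frac{\partial g}{\partial y}\bigr)_{y=p(z)}\cdot p'(z)$ so that $\Disc_{z}\bigl(g(\mathbf{x},p(z))\bigr)$ splits into two resultants --- the first nonzero because $g$ is square-free, the second nonzero once $a$ is chosen so that $z \nmid g(\mathbf{x}, z^{\omega}+a)$ --- and it treats $\omega = 0$ by a separate discriminant argument. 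You instead keep all $\omega+1$ coefficients of $p$ generic; your triangular automorphism $a_0 \mapsto u - a_1 z - \cdots - a_\omega z^{\omega}$ is a nice observation that makes square-freeness of the universal substitution immediate, handles $\omega = 0$ and $\omega > 0$ uniformly, and subsumes both of the paper's preliminary observations at once. The price is that all the quantitative content is deferred to the specialization step, which you only sketch: to complete it one must take, for each irreducible factor $F_i$, a discriminant with respect to a variable among $\mathbf{x}, z$ that $F_i$ actually involves, impose non-vanishing of the relevant leading coefficients so that degrees do not drop under specialization, and add pairwise resultants for coprimality --- in effect re-deriving the computations the paper carries out directly. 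Both arguments are valid; the paper's is more explicit about which $p$ works (a monic $z^\omega + a$ for all but finitely many $a$), while yours is more conceptual and uniform.
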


\begin{proof}
    Let us consider the cases $\omega = 0$ and $\omega > 0$ of the lemma separately.
    
    Assume $\omega = 0$. For every $x_i$ let us consider the discriminant $D_i := \Disc_{\!x_i} \; (g(\mathbf{x}, y))$. 
    Since $g$ is square-free, $D_i$ is a nonzero polynomial in $\mathbb{K}(\mathbf{x})[y]$ for every $i$. 
    Let us denote by $A$ the set of zeros $ \{ \alpha_1, \ldots, \alpha_m \}$ of the polynomials $D_1, \ldots, D_n$. 
    {\Ra Then any homomorphism $\varphi$ with $\varphi(y) \in \mathbb{K} \setminus A$ satisfied the requirement of the lemma.}

    Consider the case $\omega > 0$. 
    First of all note that since $g$ is square-free $g(\mathbf{x}, p(z))$ {\Ra cannot} be divisible by a square of a polynomial in $\mathbb{K}[\mathbf{x}]$. 
    Indeed, if that was the case then $g(\mathbf{x},y)$ would be divisible by the same square. 
    Then to ensure that $\varphi(g)$ is square-free we consider
    \[
    D := \Disc_{\!z} \bigl(g(\mathbf{x}, p(z))\bigr).
    \]
    Since $\frac{\partial}{\partial z} g(\mathbf{x}, p(z)) = Q_1 Q_2$ for $Q_1 := (\frac{\partial}{\partial y} g(\mathbf{x},y))_{y = p(z)}$ and $Q_2 := \frac{\partial}{\partial z} p(z)$, then 
    \[
    D = \Res_{\!z} (g(\mathbf{x}, p(z)), Q_1) \Res_{\!z}(g(\mathbf{x}, p(z)), Q_2). 
    \]
    Since $g$ is square-free,
    \[
    \Res_{\!z}\bigl(g(\mathbf{x}, p(z)), Q_1\bigr)=\Res_{\!z}\bigl(g(\mathbf{x}, y)_{y = p(z)}, (\frac{\partial}{\partial y} g(\mathbf{x},y))_{y = p(z)}\bigr) = \varphi \bigl(\Disc_{\!y}(g(\mathbf{x}, y))\bigr)
    \]
    is nonzero. 

    Let $\varphi(y)$ be $z^{\omega} + a$ for $a \in \mathbb{K}$. 
    Since $\mathbb{K}$ is infinite, we can choose values $a_i \in \mathbb{K}$ such that $\tilde{g}(y) := g(a_1, \ldots, a_n, y) \neq 0$. 
    Then over $\overline{\mathbb{K}}$ we have $\tilde{g}(z^\omega) =\prod\limits_{i=1}^N(z^\omega - \alpha_i)$. Let us choose $a \notin \{ \alpha_1, \ldots, \alpha_N \}$. Then $z$ does not divide $\tilde{g}(z^\omega - a)$ and hence also does not divide $g(\mathbf{x},\varphi(y))$. We conclude that
    \[
    \Res_{\!z}\bigl(g(\mathbf{x}, p(z)), Q_2\bigr) = \Res_{\!z}\bigl(g(\mathbf{x}, p(z)), \omega z^{\omega - 1}\bigr)
    \]
    is nonzero, finishing the proof.
\end{proof}

\begin{lemma} \label{lem:multivariate_square_free}
    Let $g $ be a square-free polynomial in $\mathbb{K}[\mathbf{x}, \mathbf{y}] = \mathbb{K}[x_1, \ldots, x_n, y_1, \ldots, y_k]$. Then for every $\tuple{\omega_1, \ldots, \omega_k} \in \mathbb{Z}_{\geqslant 0}^k$ there exists a homomorphism $\varphi$
    \[
    \varphi: \mathbb{K}[\mathbf{x}, \mathbf{y}] \rightarrow \mathbb{K}[\mathbf{x}, z_1, \ldots, z_k]
    \]
    with $\varphi(x_i) = x_i$ and $\varphi(y_i) = p_i(z_i)$, where $p_i \in \mathbb{K}[z_i]$ and $\deg p_i(z_i) = \omega_i$ such that the polynomial $\varphi(g)$ is square-free.
\end{lemma}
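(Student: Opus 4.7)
The plan is to prove the statement by induction on $k$, reducing the multivariate substitution problem to repeated applications of the univariate case (Lemma~\ref{lem:square_free}). The base case $k = 1$ is exactly Lemma~\ref{lem:square_free}, so the entire content of the proof lies in the inductive step.

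For the inductive step, I would view $g \in \mathbb{K}[x_1, \ldots, x_n, y_1, \ldots, y_k]$ as a polynomial in the single variable $y_1$ over the base ring $\mathbb{K}[x_1, \ldots, x_n, y_2, \ldots, y_k]$ (which plays the role of "$\mathbf{x}$" in the univariate lemma). Since $g$ is square-free as an element of the full polynomial ring, it remains square-free under this reinterpretation. Applying Lemma~\ref{lem:square_free} with parameter $\omega_1$ produces a polynomial $p_1(z_1)\in \mathbb{K}[z_1]$ of degree $\omega_1$ such that the image $g_1 := g(\mathbf{x}, p_1(z_1), y_2, \ldots, y_k)$ is square-free in $\mathbb{K}[\mathbf{x}, z_1, y_2, \ldots, y_k]$.

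I would then apply the inductive hypothesis to $g_1$ with $(k - 1)$ remaining $y$-variables $(y_2, \ldots, y_k)$, treating $(x_1, \ldots, x_n, z_1)$ as the preserved "$\mathbf{x}$"-block and $(\omega_2, \ldots, \omega_k)$ as the prescribed degrees. This yields a homomorphism $\varphi_2$ sending $y_i \mapsto p_i(z_i)$ with $\deg p_i = \omega_i$ for $i = 2, \ldots, k$, such that $\varphi_2(g_1)$ is square-free in $\mathbb{K}[\mathbf{x}, z_1, \ldots, z_k]$. Composing with the first substitution gives the desired $\varphi$.

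There is no serious obstacle: the entire delicate argument (constructing a concrete substitution via discriminants and resultants) is already carried out in the univariate Lemma~\ref{lem:square_free}. The only point to be checked carefully is that when we re-apply the univariate lemma at each stage, we interpret $g_1$ in the polynomial ring whose coefficient ring contains the newly introduced variables $z_1, \ldots, z_{i-1}$, and that square-freeness is stable under this change of ambient ring. This holds because square-freeness of a polynomial in $\mathbb{K}[x_1,\ldots,x_m, y]$ is equivalent to its discriminant with respect to any one of the variables being nonzero, a property that does not depend on whether we enlarge the base field/ring with additional transcendentals.
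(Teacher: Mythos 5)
Your proposal is correct and takes essentially the same approach as the paper: induction on $k$ with the univariate Lemma~\ref{lem:square_free} as the only real ingredient. The sole difference is the order of the two steps — you peel off $y_1$ via the univariate lemma and then invoke the inductive hypothesis on the remaining variables, while the paper applies the inductive hypothesis to $y_1,\ldots,y_{k-1}$ first and then the univariate lemma to $y_k$ — which is an immaterial reordering.
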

\begin{proof}
We will show this by induction on $k$. 
The base case $k=1$ follows from Lemma~\ref{lem:square_free}. 

We fix $k > 1$ and substitute $y_1,\ldots,y_{k-1}$ with $p_1(z_1), \ldots, p_{k-1}(z_{k-1})$ in $g$, by the induction hypothesis the resulting polynomial $\tilde{g}$ is square-free. 
Now the statement follows for $k$ by applying Lemma \ref{lem:square_free} to $\tilde{g}$ with $\bx$ being $\mathbf{x}\cup \{z_1,\ldots,z_{k-1}\}$ and $y$ being $y_k$. 
\end{proof}

\begin{lemma} \label{lem:total_degree}
    Let $p_1, \ldots, p_n$ be polynomials of degrees $d_1, \ldots, d_n$ in $\mathbb{K}[\mathbf{x}, \mathbf{y}]$ 
    with  $\bx = \tuple{x_1, \ldots, x_{m}}$ and $\by = \tuple{y_1, \ldots, y_k}$.
    Suppose that the ideal $I = \ideal{p_1, \ldots, p_n} \cap \mathbb{K}[\mathbf{y}]$ is principal, that is,
    $I = \ideal{g}$, and let $g = g(\mathbf{y})$ be a nonzero square-free polynomial. 
    Then 
    $\deg g \leqslant \prod_{i = 1}^n d_i$.
\end{lemma}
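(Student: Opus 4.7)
The plan is to combine Bézout's inequality (in the form due to Heintz) with a generic linearization in the $\by$-variables. Working over an algebraic closure $\overline{\KK}$, set $X := V(p_1, \ldots, p_n) \subseteq \mathbb{A}^{m+k}$ and let $\pi \colon \mathbb{A}^{m+k} \to \mathbb{A}^k$ denote projection onto the $\by$-coordinates. Standard elimination theory gives $\overline{\pi(X)} = V(g)$; since $g$ is square-free, $V(g)$ is a reduced hypersurface of degree $\deg g$ in $\mathbb{A}^k$.

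The first step is to reduce to $k = 1$ by a generic linear substitution. Consider the homomorphism $\varphi \colon \KK[\bx, \by] \to \KK[\bx, z]$ sending $y_i \mapsto \alpha_i + \beta_i z$ for generic $(\alpha, \beta) \in \KK^{2k}$. Then (a) $\deg \varphi(p_i) \leqslant d_i$ for each $i$; (b) $\varphi(g)$ has degree $\deg g$ in $z$ and is square-free in $\KK[z]$, because a generic line meets the reduced hypersurface $V(g)$ transversally at $\deg g$ distinct points; and (c), most importantly, all $\deg g$ roots of $\varphi(g)$ correspond to points of $V(g)$ that actually lie in the image $\pi(X)$, and not merely in its closure. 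Point (c) is the delicate one: by Chevalley's theorem $\pi(X)$ is constructible and dense in $V(g)$, so $V(g) \setminus \pi(X)$ is contained in a proper Zariski-closed subset of $V(g)$ of dimension at most $k - 2$, which is avoided by a generic line of $\mathbb{A}^k$ on dimension grounds.

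Using (c), for every root $\zeta$ of $\varphi(g)$ I can pick $\bx^\ast \in \overline{\KK}^m$ with $(\bx^\ast, \alpha + \beta \zeta) \in X$, which is equivalent to $(\bx^\ast, \zeta) \in V(\varphi(p_1), \ldots, \varphi(p_n))$. The $\deg g$ distinct roots of $\varphi(g)$ thus give $\deg g$ pairwise disjoint non-empty subvarieties of $V(\varphi(p_1), \ldots, \varphi(p_n)) \subseteq \mathbb{A}^{m+1}$ (distinguished by their $z$-coordinate), so this algebraic set has at least $\deg g$ irreducible components and total degree at least $\deg g$. Applying Bézout's inequality to $\varphi(p_1), \ldots, \varphi(p_n)$ in $\mathbb{A}^{m+1}$ gives
\[
\deg V(\varphi(p_1), \ldots, \varphi(p_n)) \leqslant \prod_{i=1}^n \deg \varphi(p_i) \leqslant \prod_{i=1}^n d_i,
\]
and chaining the two bounds yields $\deg g \leqslant \prod_{i=1}^n d_i$. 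The main technical obstacle is point (c): one has to show that the complement $V(g) \setminus \pi(X)$ is strictly smaller than $V(g)$, so that it is missed by a generic line; this is precisely where the assumptions that the elimination ideal is principal and that its generator $g$ is square-free enter the argument. Everything else is a routine combination of standard genericity arguments and Bézout-type bounds.
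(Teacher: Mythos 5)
Your proof is correct in substance but takes a genuinely different route from the paper. The paper applies the B\'ezout inequality upstairs to get $\deg V(p_1,\ldots,p_n) \leqslant \prod d_i$ and then invokes, as a black box, Heintz's lemma that the degree of the (closure of the) image of a variety under a linear projection does not exceed the degree of the variety; square-freeness of $g$ enters only to identify $\deg \overline{\pi(X)}$ with $\deg g$. You instead cut $V(g)$ with a generic line, lift the $\deg g$ intersection points back to $X$ using constructibility of $\pi(X)$, and apply B\'ezout to the substituted system in $\mathbb{A}^{m+1}$ --- in effect you re-prove the relevant special case of Heintz's projection lemma. What this buys is a more self-contained argument (only the B\'ezout inequality is used as an external input) which makes visible exactly where principality, square-freeness, and density of the image are needed; the cost is that you must handle the genericity and lifting details yourself, which the paper outsources to references.

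One step needs to be shored up. From ``$\deg g$ pairwise disjoint non-empty subvarieties of $W := V(\varphi(p_1),\ldots,\varphi(p_n))$'' you conclude that $W$ has at least $\deg g$ irreducible components, but this inference is false in general: a single irreducible component could contain points over many distinct values of $z$ (e.g.\ a component dominating the $z$-line meets every fiber). The missing observation is that $g \in \ideal{p_1,\ldots,p_n}$, hence $\varphi(g) \in \ideal{\varphi(p_1),\ldots,\varphi(p_n)}$, and $\varphi(g)$ is a polynomial in $z$ alone with $\deg g$ distinct roots $\zeta_1,\ldots,\zeta_{\deg g}$; therefore $W$ is contained in the disjoint union of the hyperplanes $\{z = \zeta_j\}$, each irreducible component of $W$ lies in exactly one of them, and since each hyperplane meets $W$ non-trivially there are indeed at least $\deg g$ components, whence $\deg W \geqslant \deg g$. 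With that sentence added, the chain $\deg g \leqslant \deg W \leqslant \prod_i \deg \varphi(p_i) \leqslant \prod_i d_i$ is complete.
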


\begin{proof}

Let us consider the algebraic variety $X := \mathbb{V} (p_1, \ldots, p_n) \subset \mathbb{A}^{m+k}$. 
Since $X = \mathbb{V}(p_1) \cap \ldots \cap \mathbb{V}(p_n)$, then for the degree of variety $X$ 
by {\Rb~\cite[Theorem~1]{heintz1983definability}} we have 
\[
\deg X \leqslant \prod\limits_{i=1}^n \deg \mathbb{V}(p_i) = \prod\limits_{i=1}^n \deg p_i = \prod\limits_{i=1}^n d_i.
\]
Denote by $\pi: \mathbb{A}^{m+k} \rightarrow \mathbb{A}^{k}$ the projection onto the 
last $k$ components. 
Then by~\cite[Chapter 4, \S4, Theorem~3]{cox1997ideals} $\mathbb{V}(I) \subset \mathbb{A}^k$ is 
the Zariski closure of $\pi (X)$ and since $I$ is the principal ideal generated by a 
square-free polynomial $g$, we obtain $\mathbb{V}(g) =  \overline{ \pi (X)}$ and 
$\deg \overline{ \pi (X)} = \deg g$ {\Rb~\cite[Remark 2(3)]{heintz1983definability}}.

Applying~\cite[Lemma~2]{heintz1983definability} to the projection $\pi$ and the algebraic variety $X$ we obtain $\deg \overline{\pi (X)} \leqslant \deg X$. So
\[
   \deg g = \deg \overline{\pi (X)} \leqslant \deg X \leqslant \prod\limits_{i=1}^{n} d_i.\qedhere
\]  
\end{proof}


\subsection{Bound for the support in a general form}

We will derive Theorem~\ref{theorem_general_specialized} from the following more general bound.

\begin{theorem} \label{theorem_general}
    Let $g_1, \ldots, g_n$ be polynomials in $\mathbb{K}[x_1, \ldots, x_n] = \mathbb{K}[\mathbf{x}]$  with $d := \deg g_1 > 0$ and $D := \max_{2 \leqslant i \leqslant n}\deg g_i > 0$.
    Let $f_{\min} \in \mathbb{K}[x_1^{(\infty)}]$ be the minimal polynomial of $I_{\bg} \cap \mathbb{K}[x^{(\infty)}_1]$ and consider a positive integer $\nu$ such that
    $\ord f_{\min} \leqslant \nu$.
    
    Then for every vector $(\omega_1, \ldots, \omega_\nu) \in (\mathbb{Z}_{\geqslant 0})^{\nu}$ and every monomial  $x_1^{e_0} (x'_1)^{e_1} \ldots (x^{(\nu)}_1)^{e_\nu}$ in $f_{\min}$, the following inequality holds
    \begin{equation}\label{eq:theorem_general_bound}
         e_0 + \sum_{i=1}^{\nu} \omega_{i} e_i \leqslant \prod_{k=1}^\nu \max \biggl( \omega_k, d + (k-1)(D - 1), \max_{ 1 \leqslant j < k} \bigl(d + \dfrac{k-1}{j} (\omega_j - 1) \bigr) \biggr)
    \end{equation}
\end{theorem}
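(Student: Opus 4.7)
The plan is to reduce the differential elimination problem to a polynomial one via Lemma~\ref{lem:ideal_is_wanted}, perform a substitution that sends $f_{\min}$ to a square-free polynomial, and bound the degree of the result by a B\'ezout-type argument applied to carefully chosen generators of an auxiliary ideal.

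First, it is enough to treat the case $\nu = \ord f_{\min}$, since larger $\nu$ only multiplies additional factors on the right-hand side of~\eqref{eq:theorem_general_bound}. By Lemma~\ref{lem:ideal_is_wanted}, $f_{\min}$ generates the polynomial elimination ideal $J \cap \KK[x_1, x_1', \ldots, x_1^{(\nu)}]$, where $J := \ideal{x_1^{(k)} - \tilde{g}_k(\bx) : k = 1, \ldots, \nu}$ with $\tilde{g}_k := \mathcal{L}_{\bg}^{k-1}(g_1)$. As the generator of a prime ideal, $f_{\min}$ is square-free, so Lemma~\ref{lem:multivariate_square_free} provides $p_k \in \KK[z_k]$ of degree $\omega_k$ (with an obvious adjustment for $\omega_k = 0$, which simply drops the corresponding factor from the bound) such that $\bar{f} := f_{\min}(x_1, p_1(z_1), \ldots, p_\nu(z_\nu))$ is nonzero and square-free. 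Letting $J' := \ideal{p_k(z_k) - \tilde{g}_k(\bx) : k}$, a direct geometric computation shows that the closure of the image of $V(J')$ under the projection $(\bx, \bz) \mapsto (x_1, \bz)$ equals $V(\bar{f})$; combined with the square-freeness of $\bar{f}$ and the projection bound of~\cite[Lemma~2]{heintz1983definability}, this yields $\deg \bar{f} \leq \deg V(J')$. At the same time, $\deg \bar{f} \geq e_0 + \sum_i \omega_i e_i$ for every monomial in $f_{\min}$, since the substituted monomial $x_1^{e_0} p_1(z_1)^{e_1} \cdots p_\nu(z_\nu)^{e_\nu}$ has total degree exactly $e_0 + \sum \omega_i e_i$ in $(x_1, \bz)$, and its weighted-leading monomials do not cancel for generic choices of the leading coefficients of the $p_k$.

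What remains is to produce polynomials $H_k \in J'$ with $\prod_k \deg H_k \leq \prod_k \mu_k$ and conclude using the multiplicative bound of~\cite[Theorem~1]{cox2005using}. The naive choice $H_k := p_k(z_k) - \tilde{g}_k(\bx)$ has degree $\max(\omega_k, \deg \tilde{g}_k)$, and the standard inequality $\deg \tilde{g}_k \leq d + (k-1)(D-1)$ (valid when $d \leq D$) already gives $\max(\omega_k, \deg \tilde{g}_k) \leq \mu_k$, settling that case. When $d > D$, the third argument of $\mu_k$ is the decisive one, so I would replace each $H_k$ by a refinement $\tilde{H}_k := H_k + \sum_{j<k} \alpha_{k,j} H_j \in J'$ of strictly smaller total degree. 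The idea is that, when $\tilde{g}_k = \mathcal{L}_{\bg}^{k-1}(g_1)$ is expanded via the product rule, its high-$\bx$-degree monomials can be regrouped so as to expose $\tilde{g}_j$ as a factor with some multiplicity, which is then substituted by $p_j(z_j)$ modulo $H_j$; trading a factor of $\tilde{g}_j$ (of $\bx$-degree up to $\delta_j$) for $p_j(z_j)$ (of $\bx$-degree $0$ and $z_j$-degree $\omega_j$) reduces the total degree, and a careful count of the available multiplicity is where the coefficient $(k-1)/j$ in $d + \frac{k-1}{j}(\omega_j - 1)$ appears.

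The main obstacle is this last combinatorial step: one has to make the regrouping of monomials of $\tilde{g}_k$ explicit and verify, inductively in $k$, that the procedure always produces an $\tilde{H}_k$ of total degree at most $\mu_k$. This requires a detailed analysis of the Newton polytope of the iterated Lie derivative $\mathcal{L}_{\bg}^{k-1}(g_1)$, together with a careful construction of the reducers $\alpha_{k,j}$ modulo the earlier generators $H_1, \ldots, H_{k-1}$.
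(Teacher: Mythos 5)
Your overall skeleton --- reduce to a polynomial elimination problem via Lemma~\ref{lem:ideal_is_wanted}, make $f_{\min}$ square-free after the substitution $x_1^{(k)} \mapsto p_k(z_k)$ via Lemma~\ref{lem:multivariate_square_free}, and bound the degree of the projection by the product of the degrees of the generators (Lemma~\ref{lem:total_degree}) --- is exactly the paper's, and your observation that the naive generators $x_1^{(k)} - \mathcal{L}_{\bg}^{k-1}(g_1)$ already suffice when $d \leqslant D$ is correct: in that case $\deg \mathcal{L}_{\bg}^{k-1}(g_1) \leqslant d+(k-1)(D-1)$ and you even get the slightly stronger bound $\prod_k\max\bigl(\omega_k, d+(k-1)(D-1)\bigr)$.

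There is, however, a genuine gap precisely where the theorem's characteristic term $d + \frac{k-1}{j}(\omega_j - 1)$ must be produced: your treatment of the case $d > D$ is only a plan. You propose to lower the degree of $H_k$ by regrouping the high-degree monomials of $\mathcal{L}_{\bg}^{k-1}(g_1)$ so as to expose powers of the earlier $\tilde g_j$ and trade them for $p_j(z_j)$ modulo $H_j$, and you assert that ``a careful count of the available multiplicity'' yields the coefficient $(k-1)/j$ --- but this count is never carried out, and it is the entire technical content of the proof. The paper sidesteps the a posteriori regrouping by never expanding the derivatives of $x_1$ in the first place: it uses the operator $\mathcal{L}^\ast_{\bg}$ of~\eqref{eq:opD}, so the $k$-th generator is $x_1^{(k)} - h_k$ with $h_k \in \KK[x_1^{(\leqslant k-1)}, x_2, \ldots, x_n]$, and every monomial $m(\balpha,\bbeta)$ of $h_k$ satisfies the two linear constraints $\ell_1(\balpha,\bbeta)\leqslant d+(k-1)(D-1)$ and $\ell_2(\balpha)=\sum_i i\alpha_i\leqslant k-1$ (Lemma~\ref{lem:base_operator} and Corollary~\ref{cor:polyhedron}); the second constraint is exactly your ``available multiplicity'' of $x_1^{(j)}$, namely $\alpha_j \leqslant \frac{k-1}{j}$. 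The degree of the substituted generator is then the maximum of $L(\balpha,\bbeta)=\sum\omega_i\alpha_i+\sum\beta_i$ over the resulting polytope, and the vertex enumeration of Lemma~\ref{lem:max} is where $d + \frac{k-1}{j}(\omega_j-1)$ actually appears. Without this analysis, or an equivalent explicit induction, your $\tilde H_k$ is not constructed and the inequality~\eqref{eq:theorem_general_bound} is not established for $d > D$.
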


We will first introduce the necessary notations and prove a few preliminary lemmas that are essential for the proof of Theorem~\ref{theorem_general}.

{\Ra Further in the section} we consider the set of polynomials defined by 
    \begin{equation} \label{derivatives}
        h_1 := g_1, \quad  h_i := \mathcal{L}_{\bg}^\ast h_{i-1}.
    \end{equation}

\begin{notation} 
    For vectors $\balpha = \tuple{\alpha_1, \ldots, \alpha_n} \in \mathbb{Z}_{\geqslant 0}^n$ and $\bbeta = \tuple{\beta_1, \ldots, \beta_n} \in \mathbb{Z}^n$, we denote
    \begin{align*}
    m(\balpha, \bbeta) &:= x_1^{\beta_1} \ldots x_n^{\beta_n} (x_1')^{\alpha_1} \ldots (x_1^{(n)})^{\alpha_n},\\
    \ell_1(\balpha, \bbeta) &:= \sum\limits_{i = 1}^n \beta_i + \sum\limits_{i = 1}^n (iD - i + 1) \alpha_i,\\
    \ell_2(\balpha) &:= \sum\limits_{i = 1}^n i\alpha_i.
    \end{align*}
\end{notation}

\begin{lemma} \label{lem:base_operator}
    For every monomial $m(\balpha, \bbeta)$ {\Rb such that 
    $\ord_{x_1}m(\balpha, \bbeta) \leqslant n-1$ } and every monomial $m(\tilde{\balpha}, \tilde{\bbeta})$ in $\mathcal{L}_{\bg}^\ast(m(\balpha, \bbeta))$, the following inequalities hold:
    \[
    \ell_1(\tilde{\balpha}, \tilde{\bbeta}) \leqslant \ell_1(\balpha, \bbeta) + D - 1 \quad \text{ and } \quad \ell_2(\tilde{\balpha}) \leqslant \ell_2(\balpha) + 1.
    \]
\end{lemma}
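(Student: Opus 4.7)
The plan is to expand $\mathcal{L}_{\bg}^\ast\bigl(m(\balpha,\bbeta)\bigr)$ using the definition~\eqref{eq:opD} and inspect how each individual term arising from the product rule affects the two weights $\ell_1$ and $\ell_2$. Any monomial $m(\tilde{\balpha}, \tilde{\bbeta})$ appearing in the image will come from exactly one of the following elementary operations:
\begin{itemize}
    \item[(a)] multiplication by a monomial of $g_j$ and differentiation with respect to $x_j$, for some $j \in \{2,\ldots,n\}$ with $\beta_j > 0$, coming from the summand $g_j \partial/\partial x_j$;
    \item[(b)] for some $k \geqslant 0$ with the exponent of $x_1^{(k)}$ positive, replacement of one factor $x_1^{(k)}$ by $x_1^{(k+1)}$, coming from the summand $x_1^{(k+1)}\partial/\partial x_1^{(k)}$.
\end{itemize}

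For case (a) I would use only that $\deg g_j \leqslant D$. Any monomial of $g_j$ has total degree at most $D$ in $x_1,\ldots,x_n$, so after multiplying $m(\balpha,\bbeta)$ by this monomial and dividing by $x_j$, the sum $\sum_i \beta_i$ grows by at most $D - 1$, while the $\alpha$-exponents are unchanged. Hence $\ell_1$ grows by at most $D - 1$ and $\ell_2$ does not change, which is stronger than what is claimed.

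For case (b), I would split into $k \geqslant 1$ and $k = 0$. If $k \geqslant 1$ then $\alpha_k \mapsto \alpha_k - 1$ and $\alpha_{k+1} \mapsto \alpha_{k+1} + 1$; the $\bbeta$-part is unchanged, so
\[
\ell_1(\tilde{\balpha},\tilde{\bbeta}) - \ell_1(\balpha,\bbeta) = \bigl((k+1)D - (k+1) + 1\bigr) - \bigl(kD - k + 1\bigr) = D - 1,
\]
and $\ell_2(\tilde{\balpha}) - \ell_2(\balpha) = (k+1) - k = 1$. If $k = 0$ then $\beta_1 \mapsto \beta_1 - 1$ and $\alpha_1 \mapsto \alpha_1 + 1$, so $\ell_1$ changes by $-1 + (D - 1 + 1) = D - 1$ and $\ell_2$ changes by $1$. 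In every subcase both inequalities hold (with equality in case (b)), so the lemma follows by collecting the contributions from all summands.

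The argument is essentially a bookkeeping verification, so there is no substantive obstacle; the only point requiring attention is the separate treatment of $k = 0$ in (b), where $\beta_1$ rather than an $\alpha$-component is decremented, and the implicit convention that $\tilde{\balpha}$ is allowed one extra coordinate so that the term with $k$ equal to the current maximal derivative index fits into the notation $m(\tilde{\balpha},\tilde{\bbeta})$.
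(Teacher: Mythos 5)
Your proof is correct and follows essentially the same route as the paper's: expand $\mathcal{L}_{\bg}^\ast$ term by term, treat the $g_j\,\partial/\partial x_j$ contributions and the $x_1^{(k+1)}\partial/\partial x_1^{(k)}$ contributions separately, and check the effect on $\ell_1$ and $\ell_2$ in each case (including the separate $k=0$ subcase where $\beta_1$ is decremented). The bookkeeping matches the paper's computation exactly.
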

\begin{proof}
  Note that
  \[
  \mathcal{L}_{\bg}^\ast(m(\balpha, \bbeta)) = \sum\limits_{i=2}^{n}g_i \frac{\partial}{\partial x_i} m(\balpha, \bbeta) + \sum\limits_{i=0}^{\infty} x_1^{(i+1)} \frac{\partial}{\partial x_1^{(i)}} m(\balpha, \bbeta).
  \]
  Thus, for a monomial $m(\tilde{\balpha}, \tilde{\bbeta})$ in $\mathcal{L}_{\bg}^\ast(m(\balpha, \bbeta))$ we have two options (below $\sim$ stands for the proportionality up to a constant):
  \begin{enumerate}
      \item $m(\tilde{\balpha}, \tilde{\bbeta}) \sim m_1 \frac{\partial}{\partial x_i} m(\balpha, \bbeta)$ for some $2 \leqslant i \leqslant n$ and some monomial $m_1$ in $g_i$.
      \item $m(\tilde{\balpha}, \tilde{\bbeta}) \sim x_1^{(i+1)} \frac{\partial}{\partial x_1^{(i)}} m(\balpha, \bbeta)$ for some $0 \leqslant i \leqslant n$.
  \end{enumerate}

  Consider the first option. In this case, $ \tilde{\balpha} = \balpha$ and $|\tilde{\bbeta}| \leqslant |\bbeta| + D - 1$. Thus, we get 
  \[
  \ell_1(\tilde{\balpha}, \tilde{\bbeta}) \leqslant \ell_1(\balpha, \bbeta) + D - 1 \quad \text{ and } \quad \ell_2(\tilde{\balpha}) = \ell_2(\balpha) \leqslant \ell_2(\balpha) + 1.
  \]

  Consider the second option. 
  If $i = 0$, then $m(\tilde{\balpha}, \tilde{\bbeta}) = x_1' \frac{\partial}{\partial x_1} m(\balpha, \bbeta)$. 
  So, $\tilde{\beta}_1 = \beta_1 - 1$ and $\tilde{\alpha}_1 = \alpha_1 + 1$. Hence,
  \[
  \ell_1(\tilde{\balpha}, \tilde{\bbeta}) \leqslant \ell_1(\balpha, \bbeta) + D - 1 \quad \text{ and } \quad \ell_2(\tilde{\balpha}) = \ell_2(\balpha) + 1.
  \]
  
  For $1 \leqslant i \leqslant n$ we have $m(\tilde{\balpha}, \tilde{\bbeta}) = x_1^{(i+1)} \frac{\partial}{\partial x_1^{(i)}} m(\balpha, \bbeta)$. 
  So, $\tilde{\alpha}_i = \alpha_i - 1$, $\tilde{\alpha}_{i+1} = \alpha_{i+1} + 1$ and $\tilde{\bbeta} = \bbeta$. 
  Thus,
  \[
    \ell_1(\tilde{\balpha}, \tilde{\bbeta}) \leqslant \ell_1(\balpha, \bbeta) + D - 1 \quad \text{ and } \quad \ell_2(\tilde{\balpha}) = \ell_2(\balpha) + 1.
  \]
  This concludes the proof.
\end{proof}

\begin{corollary} \label{cor:polyhedron}
    For {\Ra every $k \leqslant n$ and }every monomial $m(\balpha, \bbeta)$ in $h_k$ {\Ra (see~\eqref{derivatives})}, the following inequalities hold:
    \[
    \ell_1(\balpha, \bbeta) \leqslant d + (k - 1)(D - 1)\quad \text{ and } \quad \ell_2(\balpha) \leqslant k - 1.
    \]
\end{corollary}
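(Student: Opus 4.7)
The plan is to prove Corollary~\ref{cor:polyhedron} by straightforward induction on $k$, with Lemma~\ref{lem:base_operator} serving as the inductive step.

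For the base case $k = 1$, the polynomial $h_1 = g_1$ lies in $\KK[\bx]$, so every monomial in $h_1$ has the form $m(\mathbf{0}, \bbeta)$ with $|\bbeta| \leqslant d$. Then $\ell_1(\mathbf{0}, \bbeta) = |\bbeta| \leqslant d = d + 0 \cdot (D - 1)$ and $\ell_2(\mathbf{0}) = 0 = 1 - 1$, so both inequalities hold.

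For the inductive step, I would assume the two inequalities hold for every monomial of $h_k$, and prove them for $h_{k+1} = \mathcal{L}_{\bg}^\ast h_k$. The key observation is that the support of a sum is contained in the union of the supports of the summands, so every monomial $m(\balpha, \bbeta)$ appearing in $h_{k+1}$ must already appear in $\mathcal{L}_{\bg}^\ast(m(\tilde{\balpha}, \tilde{\bbeta}))$ for at least one monomial $m(\tilde{\balpha}, \tilde{\bbeta})$ of $h_k$. By Lemma~\ref{lem:base_operator} applied to this monomial, together with the induction hypothesis, we get
\[
\ell_1(\balpha, \bbeta) \leqslant \ell_1(\tilde{\balpha}, \tilde{\bbeta}) + (D - 1) \leqslant d + (k - 1)(D - 1) + (D - 1) = d + k(D - 1),
\]
and similarly $\ell_2(\balpha) \leqslant \ell_2(\tilde{\balpha}) + 1 \leqslant (k - 1) + 1 = k$, which completes the induction.

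There is no real obstacle here: the corollary is essentially a direct iteration of Lemma~\ref{lem:base_operator} along the recursive definition~\eqref{derivatives} of $h_k$. The only subtlety worth noting explicitly is that cancellations between contributions coming from different monomials of $h_k$ cannot introduce new monomials, so controlling the support monomial-by-monomial under one application of $\mathcal{L}_{\bg}^\ast$ is enough to control it after $k - 1$ applications.
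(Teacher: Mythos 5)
Your proof is correct and is essentially the paper's argument: both establish the base case $k=1$ from $\deg g_1 \leqslant d$ and then iterate Lemma~\ref{lem:base_operator} along the recursion $h_{k} = \mathcal{L}_{\bg}^\ast h_{k-1}$, with your explicit induction being just a cleaner phrasing of the paper's ``apply the lemma $k-1$ times.'' Your remark that cancellations cannot create new monomials is a useful detail the paper leaves implicit.
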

\begin{proof}
    We observe that $g_1$ is a polynomial of degree at most $d$ in $x_1, \ldots, x_n$, so for every monomial $m(\balpha, \bbeta)$ in $h_1$ we have 
    \[
    \ell_1(\balpha, \bbeta) \leqslant d \quad \text{ and } \quad \ell_2(\balpha) = 0.
    \]
    
    Consider $h_k = \mathcal{L}_{\bg}^\ast h_{k-1} = (\mathcal{L}_{\bg}^\ast)^{k-1} (h_1)$.
    {\Ra Then $\ord_{x_1}h_k \leqslant k$.}
    {\Ra 
    By applying Lemma~\ref{lem:base_operator} $k-1$ times to each monomial $m(\balpha, \bbeta)$ in $h_1$, 
    we obtain the desired inequalities for every monomial $m(\tilde{\balpha}, \tilde{\bbeta})$ in $h_k$.
    }
\end{proof}

\begin{lemma} \label{lem:max}
   Consider $\bomega = \tuple{\omega_1, \ldots, \omega_n} \in \mathbb{Z}_{>0}^n$.
   Let $L\colon \mathbb{R}^n\times \mathbb{R}^n \rightarrow \mathbb{R}$ be the linear function defined by
   \[
     L(\balpha, \bbeta) := \sum_{i=1}^{n} \omega_i \alpha_i + \sum_{i=1}^{n} \beta_i.
   \]
  And let $P_k\subset \mathbb{R}^n\times \mathbb{R}^n$, $k \geqslant 1$ be the polyhedron defined by the inequalities
    \begin{equation} \label{eq:polyhedron_P_k}
         \ell_1(\balpha, \bbeta) \leqslant  d + (k-1)(D - 1), \quad \ell_2(\balpha) 
         \leqslant k - 1, \quad \alpha_i \geqslant 0, 
         {\Rb \quad \text{ and } \alpha_j = 0 \text{ for } j \geqslant k}.
    \end{equation}
    Then we have
     \[
    \max_{(\balpha, \bbeta) \in P_k} L(\balpha, \bbeta) = 
    \max \biggl(d + (k-1)(D - 1), \;\;\max_{ 1 \leqslant j < k} \bigl(d + \dfrac{k-1}{j} 
    (\omega_j - 1) \bigr) \biggr).
    \]
\end{lemma}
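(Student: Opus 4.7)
The quantity in question is the optimum of a linear program: maximize the linear function $L$ over a polytope $P_k$ cut out by two non-trivial inequalities together with non-negativity. My plan is a direct LP analysis: reduce to a lower-dimensional LP, derive an upper bound via a weighted combination of the two defining constraints, and exhibit a feasible point achieving that bound.

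First, I would observe that $\sum_i \beta_i$ appears identically with coefficient $1$ both in $L$ and in $\ell_1$, so I can collapse $\beta_1,\ldots,\beta_n$ into a single non-negative scalar $\beta := \sum_i \beta_i$ without changing the optimum. At any maximizer the $\ell_1$-constraint must be tight (otherwise $\beta$ could be increased), so substituting
\[
\beta \;=\; d+(k-1)(D-1) - \sum_{i=1}^{n}(iD-i+1)\alpha_i
\]
rewrites the objective as
\[
L \;=\; d+(k-1)(D-1) + \sum_{i=1}^{n} c_i\,\alpha_i, \qquad c_i := \omega_i - i(D-1) - 1,
\]
and the remaining problem is to maximize $\sum_i c_i\alpha_i$ subject to $\sum_i i\alpha_i \leqslant k-1$, $\alpha_i\geqslant 0$, and the induced feasibility constraint $\beta\geqslant 0$.

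Second, I would split into cases. If every $c_i \leqslant 0$, then $\sum c_i\alpha_i\leqslant 0$, attained at $\alpha=0$, yielding $L = d+(k-1)(D-1)$, the first term of the claimed maximum. Otherwise, let $j^*$ be an index achieving $\max_i c_i/i > 0$. Using $\alpha_i\geqslant 0$ to drop the non-positive terms, the weighted estimate
\[
\sum_{i} c_i \alpha_i \;\leqslant\; \max_i\Bigl(\tfrac{c_i}{i}\Bigr)^{\!+}\,\sum_i i\alpha_i \;\leqslant\; (k-1)\,\tfrac{c_{j^*}}{j^*}
\]
gives $L \leqslant d + (k-1)(\omega_{j^*}-1)/j^*$, which matches the $j^*$-th term of the right-hand side of the lemma. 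Equality is achieved by the explicit choice $\alpha_{j^*} = (k-1)/j^*$, all other $\alpha_i = 0$, and $\beta$ saturating $\ell_1$; a direct computation shows $L = d+(k-1)(\omega_{j^*}-1)/j^*$ at this point.

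The main technical obstacle is twofold. The range restriction $1\leqslant j<k$ must be justified: for $j\geqslant k$ one has $(k-1)/j<1$, and a short comparison (using $\omega_j\in\mathbb{Z}_{>0}$ together with the bound $c_j/j\leqslant c_i/i$ available at some smaller $i$) shows that such $j$ are dominated either by the first term $d+(k-1)(D-1)$ or by some $j<k$. The second subtlety is the feasibility of the lower-bound construction: the condition $\beta\geqslant 0$ amounts to $j^*\geqslant (k-1)/d$, and when the maximizing $j^*$ falls below this threshold one must either reselect $j^*$ among admissible indices or split weight across two indices straddling $(k-1)/d$ so that both $\ell_1$ and $\ell_2$ are saturated simultaneously; in either case the same bound is reached, closing the proof.
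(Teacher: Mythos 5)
Your reduction --- collapsing the $\beta_i$ into one scalar, saturating $\ell_1$, and rewriting $L = d+(k-1)(D-1)+\sum_i c_i\alpha_i$ with $c_i=\omega_i-i(D-1)-1$ --- is correct and gives a cleaner route to the upper bound than the paper's argument, which instead enumerates all candidate vertices of $P_k$ by the pattern of active constraints and evaluates $L$ on each. The problem is that the two issues you defer to the end as ``technical obstacles'' are not closable as described: they are precisely the points where the stated equality fails, and your proposed fixes assert false statements. (The paper's own proof has the same two soft spots, so this is as much a defect of the lemma as of your write-up.)

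On the restriction to $1\leqslant j<k$: you claim indices $j\geqslant k$ are dominated via a bound ``$c_j/j\leqslant c_i/i$ available at some smaller $i$''. No such bound holds in general. Take $n=k=2$, $d=2$, $D=1$, $\omega_1=1$, $\omega_2=M$. The point with $\alpha_2=\tfrac12$, $\beta_1=\tfrac32$ and all other coordinates zero lies in $P_2$ and gives $L=\tfrac{M+3}{2}$, while the right-hand side of the lemma equals $2$; so for $M\geqslant 2$ the maximum over the real polytope strictly exceeds the claimed value and no comparison argument can rescue the step. (The paper's vertex class~(A2) ranges over all $1\leqslant i\leqslant n$, yet its conclusion silently keeps only $i<k$.) What saves the application is that the lemma is only ever used to bound $L$ on the \emph{lattice} points of $P_k$ (exponent vectors of monomials of $h_k$), where $\ell_2(\balpha)\leqslant k-1$ with integer $\alpha_i\geqslant 0$ forces $\alpha_i=0$ for all $i\geqslant k$; your averaging argument applied to that face yields exactly $\max_{1\leqslant j<k}$ and is all that Theorem~\ref{theorem_general} needs.

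On attainment when $j^*<(k-1)/d$: your candidate point has $\beta=d-(k-1)/j^*<0$, and no redistribution recovers the value. Take $n=2$, $k=3$, $d=D=1$, $\omega_1=M$, $\omega_2=1$; then $\ell_1\leqslant 1$ forces $\alpha_1+\alpha_2+\beta_1+\beta_2\leqslant 1$, so $\max_{P_3}L=M$, whereas the formula gives $2M-1$. Hence the ``$=$'' is a strict overestimate here and the ``split weight across two indices'' device cannot reach it. (Again the paper shares this gap: its type-(A2) point with $\beta_j<0$ is not in $P_k$.) This direction is harmless downstream, since only the inequality $\max L\leqslant(\text{right-hand side})$ is used. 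In short: your upper-bound argument, restricted to the face $\alpha_k=\cdots=\alpha_n=0$, proves everything the paper actually needs from this lemma, but the equality as stated cannot be proved because it fails in both directions, exactly at the two steps you left open.
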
  
\begin{proof}
{\Rb We introduce the notation $B := \sum_{i=1}^n \beta_i$ and note that $\beta$'s in $L$ and $\ell_1$ can be replaced with $B$:
\[
 L(\balpha, \bbeta) = \sum_{i=1}^{n} \omega_i \alpha_i + B, \quad \ell_1(\balpha, \bbeta) = B + \sum\limits_{i = 1}^n (iD - i + 1) \alpha_i.
\]
Since $L(\balpha, \bbeta)$ is linear and increasing in $B$, and the only inequality defining $P_k$ involving $B$ is $\ell_1(\balpha, \bbeta) \leqslant d + (k-1)(D-1)$, 
the maximum is achieved when $\ell_1(\balpha, \bbeta) = d + (k-1)(D-1)$.
We use this equality to find the optimal value $\widetilde{B}$ of $B$:
$$ \tilde{B} = d + (k-1)(D-1) - \sum_{i=1}^{n}(iD - i + 1) \alpha_i.$$
Substituting into $L(\balpha, \bbeta)$ gives us: 
\[
\tilde{L}(\balpha) = \sum_{i=1}^{n} \bigl(\omega_i - (iD - i + 1) \bigr)\alpha_i + d + (k-1)(D-1).
\]
Thus the problem reduces to maximizing $\tilde{L}(\balpha)$ 
over the polyhedron $\tilde{P}_k$ defined by:
\[ 
\sum_{i=1}^n i \alpha_i \leqslant k - 1, \quad \alpha_i \geqslant 0, \text{ and }
\alpha_j = 0 \text{ for } j \geqslant k. 
\]
Since $\tilde{L}(\balpha)$ is linear, its maximum over $\tilde{P}_k$
is attained at a vertex. The vertices of $\tilde{P}_k$ are
are exactly the origin and the $k-1$ points $v_i$, $1 \leqslant i < k$,
defined by $\alpha_i = \frac{k-1}{i}$ for some $1 \leqslant i < k$ and $\alpha_j = 0$ for $j \neq i$.
We evaluate $\tilde{L}(\balpha)$ at each vertex:
\begin{itemize}
    \item At the origin ($\alpha_i = 0$ for all $i$):
    \[ \tilde{L}(\balpha) = d + (k-1)(D-1).\]
    \item At $v_i$ for $1 \leqslant i < k$:
    \[ \tilde{L}(\balpha) = d + \frac{k-1}{i}(\omega_i - 1).\]
\end{itemize} 

Taking the maximum over all vertices yields
\[
\max_{(\balpha, \bbeta) \in P_k} L(\balpha, \bbeta) = \max \biggl(d + (k-1)(D - 1), \max_{ 1 \leqslant j < k} \bigl(d + \dfrac{k-1}{j} (\omega_j - 1) \bigr) \biggr).
\]
This concludes the proof of our claim.
}

\end{proof}

\begin{proof}[Proof of Theorem~\ref{theorem_general}]
Let us denote by $\mu$ the order of the minimal polynomial $f_{\min}$. We first recall that $\mu \leqslant n$ by~\cite[Theorem~3.16 and Corollary~3.21]{hong2020global}. 

By Lemma~\ref{lem:ideal_is_wanted}, we get
\[
\ideal{x'_1 - h_1, \ldots, x_1^{(\mu)} - h_{\mu}} \cap \mathbb{K}[x_1^{(\leqslant \mu)}] = I_{\bg} \cap \mathbb{K}[x_1^{(\leqslant\mu)}].
\]
Since $I_{\bg} \cap \mathbb{K}[x_1^{(\leqslant \mu)}] = \ideal{f_{\min}}$, the ideal $\ideal{x'_1 - h_1, \ldots, x_1^{(\mu)} - h_{\mu}} \cap \mathbb{K}[x_1^{(\leqslant \mu)}] = \ideal{f_{\min}}$ is prime and principal. 

Let $1 \leqslant k \leqslant \mu$.
For monomials $m(\balpha, \bbeta) \in h_k$ Corollary~\ref{cor:polyhedron} implies the following inequalities
\begin{equation*} 
            \begin{split}
                 \ell_1(\balpha, \bbeta) & \leqslant d + (k-1)(D - 1), \\
                 \ell_2(\balpha) & \leqslant k - 1,\\
                 \alpha_i & \geqslant 0.
            \end{split}
            \end{equation*}
We denote by $P_k$ the polyhedron defined by these inequalities.

Let us define a homomorphism 
\[
  \varphi \colon \mathbb{K}[\mathbf{x}, x_1', \ldots, x_1^{(\mu)}] \to \mathbb{K}[\mathbf{x}, z_1, \ldots, z_\mu],
\]
such that
\begin{align*}
  x_i & \mapsto  x_i, \\
  x_1^{(i)} & \mapsto p_i(z_i), \; \deg p_i(z_i) = \omega_i.
\end{align*}
Following Lemma \ref{lem:multivariate_square_free} we will chose 
$p_i$ such that $\tilde{f}_{\min}:=\varphi(f_{\min})$ is square-free.
We define $\tilde{f}_k : = \varphi(x_1^{(k)} - h_k)$ for $1 \leqslant k \leqslant \mu$.
Note that $\deg \tilde{f}_k \leqslant \max (\omega_k,  \max\limits_{(\balpha, \bbeta) \in P_k} L(\balpha, \bbeta))$, where $L(\balpha, \bbeta) := \sum_{i=1}^{n} \omega_i \alpha_i + \sum_{i=1}^{n} \beta_i$.
Then by Lemma~\ref{lem:max} we obtain
\[
    \deg \tilde{f}_k \leqslant \max \Biggl(\omega_k,  \max \bigl(d + (k-1)(D - 1), \max_{ 1 \leqslant j < k} (d + \dfrac{k-1}{j} (\omega_j - 1)) \bigr) \Biggr).
\]
Applying Lemma~\ref{lem:total_degree} with $p_i = \tilde{f}_i$ to the principal ideal $\ideal{\tilde{f}_1, \ldots, \tilde{f}_\mu} \cap \mathbb{K}[x_1, z_1, \ldots, z_\mu]$ we obtain $\deg \tilde{f}_{\min}  \leqslant \prod\limits_{i=1}^\mu \deg \tilde{f}_i$.

Applying $\varphi$ to a monomial $m = x_1^{e_0} (x'_1)^{e_1} \ldots (x_1^{(\mu)})^{e_\mu}$ in the support of $f_{\min}$ we obtain
$\varphi(m) = c x_1^{e_0} (z_1^{\omega_1})^{e_1} \ldots (z_\mu^{\omega_\mu})^{e_\mu} + q$ where $c \in \mathbb{K}^\ast$ and $\deg q \leqslant e_0 + \sum_{i=1}^{\mu} \omega_i e_i$.
Together with the obtained degree bound for $\tilde{f}_{\min}$, this gives
\begin{equation}\label{eq:theorem_general_bound_internal}
e_0 + \sum_{i=1}^{\mu} \omega_i e_i \leqslant \prod_{k=1}^\mu \max \biggl( \omega_k, d + (k-1)(D - 1), \max_{ 1 \leqslant j < k} \bigl(d + \dfrac{k-1}{j} (\omega_j - 1) \bigr) \biggr).
\end{equation}
Furthermore, for every monomial  $x_1^{e_0} (x'_1)^{e_1} \ldots (x_1^{(\nu)})^{e_\nu}$ in $f_{\min}$, we have $e_{\mu + 1} = \ldots = e_{\nu} = 0$, so the left-hand side of~\eqref{eq:theorem_general_bound} is equal to the left-hand side of~\eqref{eq:theorem_general_bound_internal}.
The right-hand side of~\eqref{eq:theorem_general_bound} differs from the right-hand side of~\eqref{eq:theorem_general_bound_internal} by extra factors $\geqslant 1$.
Thus, \eqref{eq:theorem_general_bound_internal} implies~\eqref{eq:theorem_general_bound}.
\end{proof}


\subsection{Proof of Theorem~\ref{theorem_general_specialized}}

\begin{proof}[Proof of Theorem~\ref{theorem_general_specialized}]
    We will deduce the result from Theorem~\ref{theorem_general} through an appropriate choice of the vectors $\bomega \in \mathbb{Z}_{\geqslant 0}^{\nu}$.
    To this end, for $\bomega = \tuple{\omega_1, \ldots, \omega_\nu} \in \mathbb{Z}_{\geqslant 0}^{\nu}$ and $1 \leqslant k \leqslant \nu $, we denote (cf.~\eqref{eq:theorem_general_bound}) 
    \begin{align*}
         m_k(\bomega) & := \max_{ 1 \leqslant j < k} \bigl(d + \frac{k-1}{j} (\omega_j - 1)\bigl),\\
        M_k(\bomega) & := \max \bigl(\omega_k, d + (k-1)(D-1), m_k(\bomega)\bigr).
    \end{align*}       
    
    \emph{Assume $d \leqslant D$.}
    Take $\bomega$ with
    \[
    \omega_i := d + (i - 1)(D - 1) \text{ for } i = 1, \ldots, \nu.
    \] 
    Let $1 \leqslant k \leqslant \nu$.
    Then, for each $1 \leqslant i < k \leqslant \nu$, we have
    \[
    d + \frac{k-1}{i}(\omega_i - 1) = d + \frac{k-1}{i}(d + (i - 1)(D - 1) - 1) \leqslant d + (k-1)(D - 1),
    \]
    so $m_k(\bomega) \leqslant d + (k-1)(D - 1)$ and $M_k(\bomega) = d + (k-1)(D-1)$. Applying Theorem~\ref{theorem_general} to $\bomega$ we obtain the inequality \eqref{eq:bound_1}.

    \emph{Assume $d > D$.}
    Fix $0 \leqslant \ell < \nu$ and take $\bomega$ such that  
    \begin{align*}
              \omega_i & := i ( D - 1 ) + 1 \text{ for } i = 1,\ldots,\ell, \\ 
              \omega_i & := (i - \ell)(d - 1) + \ell (D - 1) + 1 \text{ for } i = \ell + 1, \ldots, \nu.
    \end{align*}  
    Let $1 \leqslant k \leqslant \ell$. Then
    \[
     d + \dfrac{k-1}{i}(\omega_i - 1) = d + (k - 1)(D - 1) = m_k(\bomega),
    \]
    and  
    \[
    d + (k-1)(D-1) > (D-1) + 1 + (k-1)(D-1) = k(D-1) + 1 = \omega_k.
    \]
    Thus, we obtain
    \[
    M_k(\bomega) = d + (k-1)(D-1) \text{ for } k = 1, \ldots, \ell.
    \]

    Assume now that $\ell + 1 \leqslant k \leqslant \nu$. 
    Then we have
    \[
    d + \frac{k-1}{j}(\omega_j - 1) = d + \frac{k-1}{j} \bigl( (j-\ell)(d - 1) + \ell (D -1) \bigr) > d + (k-1)(D - 1).
    \]
    For $\ell+1 \leqslant i \leqslant j < k$ we have
    \[
    \frac{\omega_i - 1}{i} = \frac{i (d - 1) - \ell (d - D)}{i} \leqslant \frac{j (d - 1) - \ell (d - D)}{j} = \frac{\omega_j - 1}{j}.
    \]
    Hence $m_k(\bomega)= d + \omega_{k-1} - 1 = \omega_k$ and
    \[
    M_k(\bomega) = (k - \ell)(d - 1) + \ell (D-1) + 1.
    \]
    We apply Theorem~\ref{theorem_general} to the constructed vector $\bomega$, and obtain the $\ell$-th inequality in~\eqref{eq:bound_2}.
\end{proof}


\section{Proofs: sharpness of the bound}\label{sec:proofs_sharp}

The proof of the sharpness is organized as follows. 
{\Ra First, in Section~\ref{sec:generic}, we prove Proposition~\ref{prop:specific_to_generic}, 
which can be used to prove generic sharpness if (not necessarily generic) sharpness has been established, i.e., it suffices to find a single specific ODE system attaining the bound.}
Then we prove Theorem~\ref{thm:sharp} by constructing a family of ODE models satisfying $d \leqslant D$ (namely, shifts of~\eqref{eq:sigma}) on which the bound is reached. This is achieved by counting the roots of certain polynomial systems obtained as truncations of the original differential ideal and showing that these roots do not collide under the projection on $x_1$ and its derivatives.
This gives us the desired lower bounds on the degrees.
The root counting part of the argument relies on properties of some {\Ra Vandermonde} type systems which we establish in Section~\ref{sec:vandermonde}.
The projection part of the argument follows from the fact that the considered ODE models are globally observable which is established by complex-analytic examination of the solutions (thanks to the choice of ODE model, {\Ra their} solutions admit a closed form representation) (see Section~\ref{sec:proof_general_sharp}).

The proof of Theorem~\ref{thm:2d} (see Section~\ref{sec:planar_sharpness}) also exhibits specific ODE models reaching the bound.
The presence of the desired monomials is proved using the resultant representation of the minimal polynomial.


\subsection{From sharpness to generic sharpness}
\label{sec:generic}

{\Ra In this section, we prove a proposition that allows us to relate the existence of a specific ODE system attaining 
the bound for the support of its minimal polynomial to generic sharpness, as stated in Theorems~\ref{thm:2d} and~\ref{thm:sharp}.}
The idea of the proof of this proposition was suggested to us by Carlos D'Andrea.

We fix the ground field $\KK$.
For a polytope $\mathcal{P} \subset \mathbb{R}_{\geqslant 0}^n$, by $V(\mathcal{P})$ we will denote a vector space of polynomials $g \in \KK[\bx]$ (where $\bx = \tuple{x_1, \ldots, x_n}$) with support included $\mathcal{P}$. 

\begin{proposition}\label{prop:specific_to_generic}
    Let $\mathcal{P}_1, \ldots, \mathcal{P}_n$ be polytopes in $\mathbb{R}_{\geqslant 0}^n$.
    Let $\bx' = \bg(\bx)$ be an ODE system such that $g_i \in V(\mathcal{P}_i)$ 
    for every $1 \leqslant i \leqslant n$ {\Ra and} such that {\Ra the minimal polynomial of the 
    corresponding elimination ideal $\bigl(\bx'- \bg(\bx) \bigr)^{(\infty)} \cap \KK[x_1^{(\infty)}]$} is of order $n$.
    We denote the Newton polytope of $f_{\min}$ by $\mathcal{N}$.

    Then there exists a nonempty Zariski open $U \subset V(\mathcal{P}_1) \times \cdots \times V(\mathcal{P}_n)$ such that, 
    for every $\bg^\ast \in U$, the order of the minimal polynomial for $x_1$ in the system $\bx' = \bg^\ast(\bx)$ is $n$ 
    and the Newton polytope of this minimal polynomial contains a nonnegative shift of $\mathcal{N}$.
\end{proposition}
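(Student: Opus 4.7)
The plan is a generic-specialization argument. Treat the coefficients $\mathbf{c}$ of $\bg$ as formal indeterminates over $\KK$, so that $V(\mathcal{P}_1)\times\cdots\times V(\mathcal{P}_n)$ is parametrized by the coordinates of $\mathbf{c}$, and set
\[
J := \big\langle x_1' - g_1,\; x_1'' - \mathcal{L}_{\bg}(g_1),\ldots,\; x_1^{(n)} - \mathcal{L}_{\bg}^{n-1}(g_1) \big\rangle \subset \KK[\mathbf{c}][\bx,x_1',\ldots,x_1^{(n)}].
\]
The Gr\"obner-basis argument in the proof of Lemma~\ref{lem:ideal_is_wanted} applies verbatim over $\KK[\mathbf{c}]$ and shows that $J$ is prime, that $\KK[\mathbf{c}][\bx,x_1',\ldots,x_1^{(n)}]/J \cong \KK[\mathbf{c}][\bx]$, and that the condition $\mathrm{ord}\,f_{\min} = n$ at a point $\mathbf{c}^\ast$ is equivalent to the non-vanishing in $\bx$ of the Jacobian of $(x_1,\mathcal{L}_{\bg}(x_1),\ldots,\mathcal{L}_{\bg}^{n-1}(x_1))$---a Zariski-open condition on $\mathbf{c}$ which holds at $\mathbf{c}^\circ$ by hypothesis and so holds generically. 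Over $\KK(\mathbf{c})$, Lemma~\ref{lem:ideal_is_wanted} makes $J\cap\KK(\mathbf{c})[x_1,\ldots,x_1^{(n)}]$ principal; let $F\in J\cap\KK[\mathbf{c}][x_1,\ldots,x_1^{(n)}]$ be a generator after clearing denominators, and let $\mathcal{N}^{gen}$ be its Newton polytope in the $x$-variables.

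The central step is to show $\mathcal{N}^{gen}\supseteq v+\mathcal{N}$ for some $v\in\mathbb{Z}_{\geqslant 0}^{n+1}$. The direct substitution $\mathbf{c}\mapsto\mathbf{c}^\circ$ may annihilate $F$ entirely (primitivity in $\KK[\mathbf{c}]$ only forbids a common polynomial factor, not a common zero), so I would restrict $\mathbf{c}$ to a generic affine line $\mathbf{c}(t) = \mathbf{c}^\circ + t\bm{\omega}$ through $\mathbf{c}^\circ$. For generic $\bm{\omega}$, every $\mathbf{c}$-coefficient of $F$ that is non-zero in $\KK[\mathbf{c}]$ remains non-zero after restriction, so $F(\mathbf{c}(t),x)$ keeps Newton polytope $\mathcal{N}^{gen}$. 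Factor out the largest power of $t$: $F(\mathbf{c}(t),x) = t^k F^\dagger(t,x)$ with $F^\dagger(0,x)\ne 0$. Because the quotient $\KK[t,\bx,x_1',\ldots,x_1^{(n)}]/J|_{\mathbf{c}(t)}$ is the polynomial ring $\KK[t,\bx]$, $t$ is a non-zero-divisor modulo $J|_{\mathbf{c}(t)}$, hence $F^\dagger(t,x)\in J|_{\mathbf{c}(t)}\cap\KK[t][x_1,\ldots,x_1^{(n)}]$. Specializing $t=0$ puts $F^\dagger(0,x)$ in $J|_{\mathbf{c}^\circ}\cap\KK[x_1,\ldots,x_1^{(n)}] = \langle f_{\min}\rangle$ (Lemma~\ref{lem:ideal_is_wanted} once more), so $F^\dagger(0,x) = h(x)f_{\min}(x)$ for some non-zero $h$. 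Choosing $v$ to be the exponent of any monomial appearing in $h$,
\[
v + \mathcal{N}\;\subseteq\;\mathcal{N}(h)+\mathcal{N}\;=\;\mathcal{N}(F^\dagger(0,x))\;\subseteq\;\mathcal{N}(F^\dagger)\;=\;\mathcal{N}^{gen},
\]
where the middle equality uses that the Newton polytope of a product is the Minkowski sum of the Newton polytopes of the factors.

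To conclude, I would take $U\subset V(\mathcal{P}_1)\times\cdots\times V(\mathcal{P}_n)$ to be the intersection of three non-empty Zariski open subsets: the locus where the above Jacobian is non-zero (forcing order $=n$), the locus where every non-identically-zero $\mathbf{c}$-coefficient of $F$ remains non-zero, and the locus where the (finitely many) denominators of the reduced Gr\"obner basis of $J$ over $\KK(\mathbf{c})$ do not vanish. By the standard principle that reduced Gr\"obner bases specialize verbatim away from these denominators, $F(\bg^\ast,x)$ is a $\KK^\ast$-multiple of $f_{\min}^{\bg^\ast}$ for every $\bg^\ast\in U$, yielding $\mathcal{N}(f_{\min}^{\bg^\ast}) = \mathcal{N}^{gen}\supseteq v+\mathcal{N}$, as required. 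The main obstacle is the potential identical vanishing of $F(\mathbf{c}^\circ,x)$: the generic-line reduction, combined with the non-zero-divisor property of $t$ modulo $J|_{\mathbf{c}(t)}$, is what rescues the specialization step.
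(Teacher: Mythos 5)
Your proposal is correct and, at its core, follows the same strategy as the paper's proof: perturb $\bg^\circ$ along a generic direction, expand the resulting one‑parameter family's minimal polynomial in the perturbation parameter, observe that the lowest‑order coefficient lies in the elimination ideal at $\bg^\circ$ and is therefore divisible by $f_{\min}$, and conclude via the Minkowski‑sum property of Newton polytopes. Your line $\mathbf{c}^\circ + t\bm{\omega}$ with the factorization $F(\mathbf{c}(t),x)=t^kF^\dagger(t,x)$ is exactly the paper's $\bg^\circ+\varepsilon\ba$ with $\tilde f_{\min}=f_0\varepsilon^s+\mathcal{O}(\varepsilon^{s+1})$, except that the paper keeps the direction $\ba$ fully generic (indeterminate) and only later fixes $\varepsilon^\ast$, while you fix a generic $\bm{\omega}$ up front; your justification that $F^\dagger(0,x)\in\langle f_{\min}\rangle$ via the non‑zero‑divisor property of $t$ modulo $J|_{\mathbf{c}(t)}$ is a clean algebraic substitute for the paper's computation $\mathcal{L}^i_{\bg^\circ+\varepsilon\ba}(x_1)=\mathcal{L}^i_{\bg^\circ}(x_1)+\mathcal{O}(\varepsilon)$. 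The one step where you genuinely diverge is the last one: to guarantee that the specialized generator $F(\bg^\ast,\cdot)$ is actually the minimal polynomial (and not a proper multiple), you invoke specialization of reduced Gr\"obner bases, whereas the paper gives a self‑contained argument: it compares $\deg\tilde f_{\min}$ with the degree of the minimal polynomial of the fully generic system and defines $U_3$ by the nonvanishing of a minor witnessing linear independence of $\{\mathcal{R}_{\ba}(m)\}$ over all monomials $m$ of smaller degree, which directly forbids lower‑degree elements in $I_{\bg^\ast}\cap\KK[x_1^{(\infty)}]$. Your route works, but be aware that the correct open set for Gr\"obner specialization is cut out by more than ``the denominators of the reduced Gr\"obner basis'': one also needs the division certificates (reductions of the original generators to zero, and the S‑polynomial reductions) to specialize, so you should either state the precise result you are using (e.g., the generic branch of a comprehensive Gr\"obner system, or Kalkbrener's stability theorem) or replace this step by a degree argument as in the paper. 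Also note, as you implicitly do, that $I_{\bg^\circ}\cap\KK[x_1^{(\leqslant n)}]=\langle f_{\min}\rangle$ uses the order‑$n$ hypothesis (the elimination ideal is then a height‑one prime in a UFD); this is the same fact the paper relies on.
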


\begin{proof}
    We set $V := V(\mathcal{P}_1) \times \cdots \times V(\mathcal{P}_n)$.
    Consider $\bg^\ast \in V$.
    Then, by Lemma~\ref{lem::substitution} the minimal polynomial for $x_1$ has order $n$ if and only if $x_1, \mathcal{L}_{\bg^\ast}(x_1), \ldots,  \mathcal{L}_{\bg^\ast}^{n - 1}({\Rb x_1})$ are algebraically independent.
    This is equivalent to the fact that the Jacobian of $x_1, g_1^\ast, \ldots, \mathcal{L}_{\bg^\ast}^{n - 1}({\Rb x_1})$ is nonsingular by~\cite[Proposition 2.4]{ehrenborg1993apolarity}.
    This nonsingularity condition defines an open subset $U_1 \subset V$.
    Since $\bg \in U_1$, this subset is nonempty.

    Let $N = \dim V$ and fix a basis for $V$. For the rest of the proof, we will identify $N$-dimensional vectors over any field $\mathbb{L} \supset \KK$ with tuples $\mathbf{q} \in \mathbb{L}[\bx]^n$ with $\operatorname{supp}(q_i) \subset \mathcal{P}_i$ for every $ i = 1, \ldots, n$.
    We introduce new variables $a_1, \ldots, a_N, \varepsilon$ and denote $\ba := \tuple{a_1, \ldots, a_N}$. 
    We will write $\ba(\bx)$ for the corresponding element of $\KK(\ba)[\bx]^n$.
    We consider an ODE system $\bx' = \bg + \varepsilon\ba(\bx)$ and denote the minimal polynomial for $x_1$ in this system by $\tilde{f}_{\min}$.
    By clearing denominators, if necessary, we will assume that $\tilde{f}_{\min}\in \KK[\ba, \varepsilon, x_1^{(\infty)}]$.
    We consider the expansion of $\tilde{f}_{\min}$ with respect to $\varepsilon$:
    \[
    \tilde{f}_{\min} = f_0 \varepsilon^s + \mathcal{O}(\varepsilon^{s + 1}).
    \]
    By Lemma~\ref{lem::substitution}, $\tilde{f}_{\min}$ vanishes under the substitution $x_1^{(i)} \to \mathcal{L}_{\bg + \varepsilon\ba(\bx)}^i(x_1)$ for $0 \leqslant i \leqslant n$.
    Since, for every $0 \leqslant i \leqslant n$, we have $\mathcal{L}_{\bg + \varepsilon\ba(\bx)}^i(x_1) = \mathcal{L}_{\bg}^i(x_1) + \mathrm{O}(\varepsilon)$, we deduce that $\mathcal{R}_{\bg}(f_0) = 0$.
    The minimality of $f_{\min}$ implies that $f_0$ is divisible by $f_{\min}$.
    Therefore, the Newton polytopes of $f_0$ and, thus, $\tilde{f}_{\min}$ contain a nonnegative shift of $\mathcal{N}$.
    Consider the vertices of the Newton polytope of $\tilde{f}_{\min}$.
    The corresponding coefficients are polynomials in $\ba$ and $\varepsilon$; we denote them by $p_1(\ba, \varepsilon), \ldots, p_{\ell}(\ba, \varepsilon) \in \KK[\ba, \varepsilon]$.
    We fix any nonzero $\varepsilon^\ast \in \KK$ such that none of $p_1(\ba, \varepsilon^\ast), \ldots, p_\ell(\ba, \varepsilon^\ast)$ is identically zero.
    We define an open subset $U_2 \subset V$ by 
    \[
    U_2 := \{\bg^\ast \in V \mid \forall \; 1 \leqslant i \leqslant \ell \colon p_i( (\bg^\ast - \bg) / \varepsilon^\ast , \varepsilon^\ast) \neq 0\}.
    \]
    Since polynomials $p_1(\ba, \varepsilon^\ast), \ldots, p_n(\ba, \varepsilon^\ast)$ are nonzero, we have $U_2 \neq \varnothing$.

    Let $\hat{f}_{\min}$ be the minimal polynomial for $x_1$ in $\bx' = \ba(\bx)$.
    Since the coefficients of the right-hand side of this system as well as of $\bx' = \bg + \varepsilon \ba(\bx)$ are algebraically independent over $\KK$, the degrees of $\tilde{f}_{\min}$ and $\hat{f}_{\min}$ coincide.
    We denote this degree by $\Rb \delta$.
    Let $\mathcal{M}$ be the collection of all the monomials in $x_1^{(\leqslant n)}$ of degree at most ${\Rb \delta} - 1$.
    Due to the minimality of $\hat{f}_{\min}$ and Lemma~\ref{lem::substitution}, polynomials $\{\mathcal{R}_{\ba}(m) \mid m \in \mathcal{M}\}$ are linearly independent over~$\KK(\ba)$.
    Consider these polynomials in the monomial basis and choose a nonsingular minor, we denote its determinant by $q(\ba)$.
    Let $U_3 \subset V$ be the open subset defined by $q(\ba) \neq 0$.

    Finally, we set $U := U_1 \cap U_2 \cap U_3$ and will prove that it satisfies the requirements of the proposition.
    Let $\bg^\ast \in U$, and we denote the minimal polynomial for $x_1$ in $\bx' = \bg^\ast(\bx)$ by $f_{\min}^\ast$.
    The fact that $\bg^\ast \in U_1$ implies that $f_{\min}^\ast$ is of order $n$.
    By setting $\ba^\ast(\bx) := (\bg^\ast - \bg)/ \varepsilon^\ast $, we see that the elimination ideal $I_{\bg^\ast} \cap \KK[x_1^{(\infty)}]$ contains $\tilde{f}_{\min}^\ast := \tilde{f}_{\min}|_{\ba = \ba^\ast}$.
    By the choice of $U_2$, the Newton polytope of $\tilde{f}_{\min}^\ast$ is the same as for $\tilde{f}_{\min}$.
    In particular, it contains {\Ra a} shift {\Ra of} $\mathcal{N}$, so it remains to show that $\tilde{f}_{\min}^\ast$ is the minimal polynomial of $I_{\bg^\ast} \cap \KK[x_1^{(\infty)}]$.
    To this end, we use that $\bg^\ast \in U_3$ which implies that $\{\mathcal{R}_{\bg^\ast}(m) \mid m \in \mathcal{M}\}$ are linearly independent over $\KK$, so there is no polynomial in $I_{\bg^\ast} \cap \KK[x_1^{(\infty)}]$ of degree less than ${\Rb \delta} = \deg \tilde{f}_{\min}^\ast$.
    So $\tilde{f}_{\min}^\ast$ is the minimal polynomial in $I_{\bg^\ast} \cap \KK[x_1^{(\infty)}]$.
\end{proof}


\subsection{Auxiliary statement about Vandermonde type systems}\label{sec:vandermonde}

In this section, we will prove a lemma which will be used in the next section to analyze the solutions at infinity of the polynomial systems obtained by taking Lie derivatives.
Throughout the section we will fix the field of Laurent series $\KK((\bz)) = \KK((z_1))((z_2))\ldots((z_n))$ with the lexicographic monomial ordering with
    \[
        z_n > z_{n-1} > \ldots > z_1.
    \] 

\begin{lemma} \label{lem::Puiseux}
    Let $n, d, D$ be positive integers and $\bgamma \in \mathbb{Z}_{\geqslant 0}^{n}$ such that $ \gamma_1 < \ldots < \gamma_n $.
    For $\balpha$ in $\overline{\KK}^n$ consider the square system:
    \begin{equation} \label{eq::Zariski_new}
         \begin{cases}
             \alpha_1^{\gamma_1} x_1^{d + \gamma_1(D-1)} + \cdots +  \alpha_n^{\gamma_1} x_n^{d + \gamma_1(D-1)} = 0,\\
             \vdots\\
            \alpha_1^{\gamma_{n}} x_1^{d + \gamma_{n}(D-1)} + \cdots   + \alpha_n^{\gamma_{n}} x_n^{d + \gamma_{n}(D-1)}  = 0.\\
         \end{cases}
     \end{equation}
     Then there exists a non-empty Zariski open subset $U \subset \overline{\KK}^n$ such that for every choice $\balpha \in U$ the system \eqref{eq::Zariski_new} has no nonzero solutions in $\overline{\KK}^n$.
\end{lemma}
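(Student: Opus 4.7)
My plan is to reduce to exhibiting a single $\balpha^\ast \in \overline{\KK}^n$ at which system \eqref{eq::Zariski_new} has no nonzero solution, and then to construct such $\balpha^\ast$ via a Puiseux specialization. Because each
\[
P_i(\bx) := \sum_{j = 1}^{n} \alpha_j^{\gamma_i} x_j^{d + \gamma_i(D - 1)}
\]
is homogeneous in $\bx$ of degree $d + \gamma_i(D - 1)$, the incidence locus
\[
Z := \{(\balpha, [\bx]) \in \overline{\KK}^n \times \mathbb{P}^{n - 1} \colon P_1 = \cdots = P_n = 0\}
\]
is closed; projecting it to $\overline{\KK}^n$ yields a closed subset (by properness of $\mathbb{P}^{n - 1}$) whose complement $U$ is Zariski open. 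So it suffices to find one $\balpha^\ast \in U$.

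For the construction I would specialize $\alpha_j = t^{a_j}$ for pairwise distinct nonnegative integers $a_1, \ldots, a_n$ and show that no nonzero solution exists in the Puiseux series field $\overline{\KK((t))}^n$. Since the image of $Z$ under the projection is cut out by polynomials in $\balpha$ with coefficients in $\KK$, the absence of such a solution at $\balpha = (t^{a_1}, \ldots, t^{a_n}) \in \overline{\KK((t))}^n$ means these polynomials do not vanish identically, so $U$ is nonempty and specializing $t$ to a generic element of $\KK^\ast$ yields a concrete $\balpha^\ast \in U$.

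The core of the argument is combinatorial. Write a hypothetical nonzero Puiseux solution as $x_j = c_j t^{b_j} + (\text{higher order})$ with $c_j \in \overline{\KK}^\ast$ and $b_j \in \mathbb{Q}$ on the nonempty set $S := \{j \colon x_j \neq 0\}$; the case $|S| = 1$ is immediately ruled out because a single nonzero term $\alpha_j^{\gamma_i} x_j^{d + \gamma_i(D - 1)}$ cannot vanish. For $|S| \geq 2$, the $t$-adic valuation of the $j$-th term in the $i$-th equation is
\[
\phi_i(j) := a_j \gamma_i + b_j\bigl(d + \gamma_i(D - 1)\bigr) = r_j + \gamma_i s_j, \qquad r_j := b_j d, \quad s_j := a_j + b_j(D - 1),
\]
so the $i$-th equation can vanish only if $\min_{j \in S} \phi_i(j)$ is attained by at least two indices, i.e.\ $\gamma_i$ is a breakpoint of the lower envelope of the affine lines $L_j(\gamma) := r_j + \gamma s_j$ for $j \in S$. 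Two such lines coincide iff $(r_j, s_j) = (r_{j'}, s_{j'})$, which (using $d > 0$) forces $b_j = b_{j'}$ and then $a_j = a_{j'}$; distinctness of the $a_j$ rules this out, so the lower envelope of $|S|$ distinct lines has at most $|S| - 1 \leq n - 1$ breakpoints, contradicting the need for the $n$ distinct values $\gamma_1 < \cdots < \gamma_n$ to all be breakpoints. The only subtlety to treat carefully is that breakpoints where three or more lines meet still count as single breakpoints, so the bound $\leq |S| - 1$ is preserved across all degenerate configurations.
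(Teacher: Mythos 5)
Your proof is correct, but it takes a genuinely different route from the paper's in both of its stages. For the existence of a single good $\balpha$, the paper works over the iterated Laurent series field $\KK((z_1))\cdots((z_n))$ with $\balpha^\ast=(z_1,\ldots,z_n)$ and runs an induction on $n$: it dehomogenizes with respect to $x_n$, extracts the lowest-order coefficients in $z_n$, and recognizes in them a system of the same shape in $n-1$ variables. You instead use a single Puiseux variable $t$ with $\alpha_j=t^{a_j}$ and replace the induction by a valuation-theoretic counting argument: each equation forces $\gamma_i$ to be a breakpoint of the lower envelope of the $|S|$ pairwise distinct affine functions $L_j(\gamma)=r_j+\gamma s_j$, and distinct lines admit at most $|S|-1\leqslant n-1$ such breakpoints, which cannot accommodate $n$ distinct values $\gamma_1<\cdots<\gamma_n$. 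This is an essentially tropical argument that avoids induction entirely; the one point you flag (concurrent lines still give a single breakpoint) is indeed the only delicate point, and the bound does survive it since the slopes of the envelope's pieces strictly decrease. For the passage from one good point to a nonempty Zariski open set, the paper shows the good locus is constructible, produces a Zariski dense family $(s_1z_1,\ldots,s_nz_n)$ of good points over $\overline{\KK((\mathbf{z}))}$, and transfers via elementary equivalence of algebraically closed fields of characteristic zero; you instead observe that homogeneity in $\bx$ makes the bad locus the image of a closed subset of $\overline{\KK}^n\times\mathbb{P}^{n-1}$ under a proper projection, hence closed and cut out by a universal resultant system over the prime field, so a single good point over $\overline{\KK((t))}$ already certifies that some member of that system is a nonzero polynomial. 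Your transfer is more economical (properness plus elimination theory in place of model theory, and no need for a dense family of good points), while the paper's induction is perhaps more self-contained in that it never invokes the classification of breakpoints of a lower envelope.
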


The idea of the shorter proof given below was proposed to us by Joris van der Hoeven.

\begin{proof} 
    We will prove the statement of Lemma \ref{lem::Puiseux} for the field $\overline{\KK((\bz))}^n$ instead of $\overline{\KK}^n$ and then demonstrate the transition to the original formulation.
    The proof begins with constructing $\balpha \in \overline{\KK((\bz))}^n$ such the system \eqref{eq::Zariski_new} has no nonzero solutions in $\overline{\KK((\bz))}^n$. Next, we establish the existence of an open set $U \subset \overline{\KK((\bz))}^n$ such that for all $\balpha \in U$ the system \eqref{eq::Zariski_new} has no nonzero solutions. Finally, we demonstrate the transition to the original statement of the lemma.

    \emph{Step 1.} Consider $\balpha^\ast = \tuple{z_1, \ldots, z_n}$. 
    We show that for $\balpha^\ast$ the system \eqref{eq::Zariski_new} has no nonzero solutions in $\overline{\KK((\bz))}^n$ via induction on $n$. For the base case of induction at $n=1$ the system \eqref{eq::Zariski_new} takes the shape
    \begin{equation} \label{eq::Zariski_basecase}
            \alpha_1^{\gamma_1} x_1^{d + \gamma_1(D-1)} = 0.
    \end{equation}
    Note that for every $\alpha_1 \neq 0$, equation \eqref{eq::Zariski_basecase} has no nonzero solutions in $\overline{\KK((z_1))}$, so as for $\alpha_1 = z_1$.
    
    Now consider the system \eqref{eq::Zariski_new} for $n > 1$.
    Assume for contradiction that the system \eqref{eq::Zariski_new} has a nonzero solution $\tuple{x_1^\ast,\ldots, x_n^\ast}$ in $\overline{\KK((\bz))}^n$. 
    If $x_n^\ast = 0$, then the square system of the first $n-1$ equations of the system \eqref{eq::Zariski_new} has no nonzero solutions with $\balpha = \tuple{z_1, \ldots, z_{n-1}}$ by the induction hypothesis.  

    If $x_n^\ast \neq 0$, then we dehomogenize the system \eqref{eq::Zariski_new} with respect to $x_n$ (i.e., set $\tilde{x}_i = \frac{x_i^\ast}{x_n^\ast}$):
    \begin{equation} \label{eq::Zariski_dehomogenized}
         \begin{cases}
             z_1^{\gamma_1}\tilde{x}_1^{d+\gamma_1(D-1)} + \cdots + z_{n-1}^{\gamma_1}\tilde{x}_{n-1}^{d+\gamma_1(D-1)} + z_n^{\gamma_1} = 0,\\
             \vdots\\
            z_1^{\gamma_n} \tilde{x}_1^{d + \gamma_n(D-1)} + \cdots   + z_{n-1}^{\gamma_n} \tilde{x}_{n-1}^{d + \gamma_n(D-1)} + z_n^{\gamma_n}  = 0.\\
         \end{cases}
     \end{equation}
        
    Let $a$ be the minimum of orders of $\tilde{x}_i$ in $z_n$. For every $1 \leqslant i \leqslant n-1$ we express $\tilde{x}_i$ as a Puiseux series in $z_n$ with coefficients in $\overline{\KK((z_1))\ldots((z_{n-1}))}$
    \begin{equation} \label{eq::Zariski_substitution}
        \tilde{x}_i = \theta_i z_n^a + p_i(z_n),
    \end{equation}
    where $\theta_i \in \overline{\KK((z_1))\ldots((z_{n-1}))}$ and $p_i(z_n)$ is a Puiseux series in $z_n$ with coefficients in $\overline{\KK((z_1))\ldots((z_{n-1}))}$ such that $\deg p_i(z_n) > a$.
    Note that by the construction $\theta_i \neq 0$ for some $1 \leqslant i \leqslant n-1$.

    We substitute \eqref{eq::Zariski_substitution} to the system \eqref{eq::Zariski_dehomogenized}  and consider the lowest terms of the equations.
    Assume $a(d + \gamma_1(D-1)) > \gamma_1$. Then the lowest term of the first equation is $z_n^{\gamma_1}$ and it does not cancel. So we do not have the first equality.
    Thus, we can further assume that $a(d + \gamma_1(D-1)) \leqslant \gamma_1$. 
    Since $a \leqslant \frac{\gamma_1}{d + \gamma_1(D-1)}$ and $\gamma_i = \gamma_1 + N_i$ for some $N_i \in \mathbb{Z}_{>0}$ for every $2 \leqslant i \leqslant n$, we have 
    \[
        a(d + \gamma_i(D-1)) \leqslant \frac{\gamma_1( d + \gamma_i(D-1)}{d + \gamma_1(D-1))} = \gamma_1(1 + \frac{N_i(D-1)}{d + \gamma_1 (D-1)}) < \gamma_i.
    \]
    Thus, for the coefficients for the lowest terms of the last $n-1$ equations of the system \eqref{eq::Zariski_substitution} we have
    \begin{equation*}
        \begin{cases}
            z_1^{{\gamma}_2} \theta_1^{d + {\gamma}_2(D-1)} + \cdots + z_{n-1}^{{\gamma}_2} \theta_{n-1}^{d + {\gamma}_2(D-1)} = 0, \\
            \vdots \\
            z_1^{{\gamma}_{n}} \theta_1^{d + {\gamma}_{n}(D-1)} + \cdots + z_{n-1}^{{\gamma}_{n}} \theta_{n-1}^{d + {\gamma}_{n}(D-1)} = 0. \\
        \end{cases}
    \end{equation*}
    This system as a system in variables $\theta_1, \ldots, \theta_n$ has no nonzero solutions by the induction hypothesis, and so we get a contradiction with $\theta_i \neq 0$ for some $1 \leqslant i \leqslant n-1$. This finishes the first step of the proof.

    Since the substitution $\tuple{z_1, \ldots, z_n} \rightarrow \tuple{s_1 z_1, \ldots, s_n z_n}$, $s_i \in \mathbb{Z} \setminus \{ 0 \}$ to the system \eqref{eq::Zariski_new} does not change the lowest terms of equations \eqref{eq::Zariski_dehomogenized} in $z_n$, with the same argument as for $\balpha^\ast$ we can show that for $\balpha = \tuple{s_1 z_1, \ldots, s_n z_n}$ the system \eqref{eq::Zariski_new} has no nonzero solutions in $\overline{\KK((\bz))}^n$.

    \emph{Step 2.} 
    Consider the first-order formula in the language of fields 
    \[
        \Phi(\balpha) := \forall \bx \bigl((\bx \text{ is a solution of \eqref{eq::Zariski_new} with coefficients $\balpha$} ) \Rightarrow (x_1 = 0 \vee \ldots \vee x_n=0) \bigr)
    \]
    and let $E_0$ be the set $\{ \balpha \in \overline{\KK}^n \; | \; \Phi(\balpha)\}$.
    The set $E_0$ is constructible, so $E_0$ is contained in some proper Zariski closed subset or $E_0$ contains a nonempty Zariski open set \cite[Ch.~II, §~3, Ex.~3.18(b)]{hartshorne1977graduate}.
    
    Assume that $E_0$ is contained in some proper Zariski closed set defined by the polynomial $p(\balpha)$ over $\overline{\KK}$. Then the formula
    \begin{equation} \label{eq::formula}
        \forall \balpha \bigl(\Phi(\balpha) \Rightarrow (p(\balpha) = 0)\bigr)
    \end{equation}
    is true over $\overline{\KK}$. Since $\overline{\KK}$ and $\overline{\KK((\bz))}$ are elementary equivalent as algebraic closed fields of zero characteristic \cite[Th.~1.4, Th.~1.6]{marker2017model}, formula \eqref{eq::formula} is true over $\overline{\KK((\bz))}$. 
    On the other hand, as shown in Step 1, the set $E_1 := \{ \balpha \in \overline{\KK((\bz))}^n \; | \; \Phi(\balpha) \}$ contains a Zariski dense set $\{(s_1 z_1, \ldots, s_n z_n) \mid s_i \in \mathbb{Z} \setminus \{ 0\}\}$. 
    This leads to a contradiction with the assumption that $E_0$ was contained in a proper Zariski closed set.

    Thus, $E_0$ contains a nonempty Zariski open subset $U$, such that for every choice $\balpha \in U \subset \overline{\KK}^n$ the system \eqref{eq::Zariski_new} has no nonzero solutions in $\overline{\KK}^n$. This concludes the proof.
\end{proof}

\subsection{Proof of Theorem~\ref{thm:sharp}}\label{sec:proof_general_sharp}

\begin{notation}\label{not:lower_bound}
For this section, we will fix positive integers $d, D, n$ satisfying $d \leqslant D$.
For a tuple $\balpha = \tuple{\alpha_2, \ldots, \alpha_n} \in \KK^{n - 1}$, we consider an ODE system $\bx' = \bg_{\balpha}(\bx)$:
\begin{equation}\label{eq:sigma}
\begin{cases}
    x_1' = x_1^{d} + (x_2 + 1)^{d} + \cdots + (x_n + 1)^d,\\
    x_2' = \alpha_2 x_2^D,\\
    \vdots\\
    x_n' = \alpha_n x_n^D.
\end{cases}
\end{equation}
That is, $g_{\balpha, 1} = x_1^{d} + (x_2 + 1)^{d} + \cdots + (x_n + 1)^d$ and $g_{\balpha, i} = \alpha_i x_i^D$ for $2 \leqslant i \leqslant n$.

We also denote (see Notation~\ref{not:proofs_notation})
\begin{equation}\label{eq:pi}
  p_{\balpha, i} := x_1^{(i)} - \mathcal{L}_{\bg_{\balpha}}^{i - 1}( g_1 ) \quad\text{ for }1 \leqslant i \leqslant n.
\end{equation}
We further denote $I_{\balpha} := \ideal{p_{\balpha, 1}, \ldots, p_{\balpha, n}} \subseteq \mathbb{K}[x_1, \ldots, x_1^{(n)}, x_2, \ldots, x_n]$. 
Finally, we introduce the following fields of rational functions
\begin{equation}\label{eq:field}
F_\ell := \KK(x_1, \ldots, x_1^{(\ell - 1)}, x_1^{(\ell + 1)}, \ldots, x_1^{(n)}) \quad \text{ where }\quad 0 \leqslant \ell \leqslant n.
\end{equation}
A polynomial which reaches the bound given in Theorem~\ref{theorem_general_specialized}, by formula~\eqref{eq:bound_1}, must have the following degrees in $x_1, x_1', \ldots, x_1^{(n)}$: 
\[
N_0 := \prod\limits_{k = 1}^n (d + (k - 1)(D - 1)), \quad N_\ell := \frac{N_0}{d + (\ell - 1)(D - 1)} \text{ for } {\Rb \ell} = 1, \ldots, n.
\]

\end{notation}

\begin{lemma}\label{lem:sharp_radical}
    For every $0 \leqslant \ell \leqslant n $, the ideal generated by $p_{\balpha, 1}, \ldots, p_{\balpha, n}$ in $F_\ell[x_1^{(\ell)}, x_2, x_3, \ldots, x_n]$ is radical.
\end{lemma}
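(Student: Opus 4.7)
The plan is to observe that the ideal in question is obtained from a prime ideal by localization, which by standard commutative algebra must be either prime or the unit ideal; both are radical. So essentially no computation is required, and the main content of the lemma is to name a fact we will need later.

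First, I would invoke the fact, cited in Section~\ref{sec:preliminaries} from \cite{Moog1990}, that the differential ideal $I_{\bg_{\balpha}} = \ideal{\bx' - \bg_{\balpha}}^{(\infty)}$ is prime in $\KK[\bx^{(\infty)}]$. By Lemma~\ref{lem:ideal_is_wanted}, the ordinary ideal
\[
I_{\balpha} \;=\; \ideal{p_{\balpha, 1}, \ldots, p_{\balpha, n}} \;\subset\; \KK[x_1^{(\leqslant n)}, x_2, \ldots, x_n]
\]
coincides with the contraction $I_{\bg_{\balpha}} \cap \KK[x_1^{(\leqslant n)}, x_2, \ldots, x_n]$. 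Since the contraction of a prime ideal to a subring is automatically prime (if $ab \in \mathfrak{q} \cap A$ then $a \in \mathfrak{q}$ or $b \in \mathfrak{q}$, hence in $\mathfrak{q}\cap A$), I would conclude that $I_{\balpha}$ is prime.

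Next, I would observe that $F_\ell[x_1^{(\ell)}, x_2, \ldots, x_n]$ is precisely the localization of $\KK[x_1^{(\leqslant n)}, x_2, \ldots, x_n]$ at the multiplicative set
\[
S := \KK[x_1^{(j)} : 0 \leqslant j \leqslant n,\, j \neq \ell] \setminus \{0\},
\]
because inverting all nonzero polynomials in the variables $x_1^{(j)}$ with $j \neq \ell$ turns that sub-polynomial ring into its fraction field $F_\ell$, leaving $x_1^{(\ell)}, x_2, \ldots, x_n$ as the remaining polynomial variables. By the standard correspondence between primes under localization, the extension $S^{-1}I_{\balpha}$ is either a prime ideal of $F_\ell[x_1^{(\ell)}, x_2, \ldots, x_n]$ (when $I_{\balpha} \cap S = \varnothing$) or equals the whole ring (when $I_{\balpha} \cap S \neq \varnothing$). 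Both outcomes give a radical ideal, which yields the lemma.

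There is no substantial obstacle: the argument is a two-step reduction, first to the primality of the differential ideal $I_{\bg_{\balpha}}$ and then to elementary commutative algebra on localizations. The only point that deserves a brief sentence in the write-up is that the unit ideal is vacuously radical, so the dichotomy above really does cover both alternatives without any further case analysis.
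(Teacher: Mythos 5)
Your proof is correct and follows essentially the same route as the paper's: show that $I_{\balpha}$ is prime in $\KK[x_1^{(\leqslant n)}, x_2, \ldots, x_n]$, then observe that radicality survives localization at $S = \KK[x_1^{(j)} : j \neq \ell] \setminus \{0\}$ (the paper cites the compatibility of radicals with localization, you use the prime-or-unit dichotomy; both are fine since the lemma only claims radicality). The one cosmetic difference is the source of primality: the paper re-runs the Gr\"obner-basis argument (the generators have the distinct variables $x_1^{(i)}$ as leading terms, so the quotient is isomorphic to $\KK[\bx]$), whereas you deduce it from the primality of the differential ideal $I_{\bg_{\balpha}}$ together with Lemma~\ref{lem:ideal_is_wanted} — a legitimate shortcut that reuses results already established.
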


\begin{proof}  
    We fix a monomial ordering on $\KK[x_1^{(\leqslant n)}, x_2, \ldots, x_n]$ to be the lexicographic monomial ordering with 
    \[
    x_1^{(n)} > x_1^{(n-1)} > \ldots > x'_1 > x_1 > x_2 > \ldots > x_n.
    \]
    The leading term of $p_{\balpha, i}$ is $x_1^{(i)}$.
    Therefore, the leading terms of all generators of $I_{\balpha}$ are distinct variables. 
    Hence this set is a Gr\"obner basis of $I_{\balpha}$
    by the Buchberger's first criterion. 
    Then $\KK[x_1^{(\leqslant n)}, x_2, \ldots, x_n] / I_{\balpha} \simeq \mathbb{K}[\mathbf{x}]$ and the ideal $I_{\balpha}$ is prime and, thus, radical. 
    By \cite[Proposition 3.11]{atiyah2018introduction} the ideal generated by $I_{\balpha}$ in the localization $F_\ell[x_1^{(\ell)}, x_2, x_3, \ldots, x_n]$ will be radical.
\end{proof}

\begin{corollary} \label{cor::sharp_radical}
   For every $0 \leqslant \ell \leqslant n $, the ideal generated by $p_{\balpha, 1}, \ldots, p_{\balpha, \ell - 1},  p_{\balpha, \ell + 1}, \ldots, p_{\balpha, n}$ in $F_\ell[x_2, x_3, \ldots, x_n]$ is radical. 
\end{corollary}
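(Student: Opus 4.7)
The plan is to eliminate $x_1^{(\ell)}$ from Lemma~\ref{lem:sharp_radical} by a substitution, since the generator $p_{\balpha, \ell}$ is \emph{linear} in $x_1^{(\ell)}$ with unit leading coefficient.

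First I would observe that, by the definition~\eqref{eq:pi}, each $\mathcal{L}_{\bg_{\balpha}}^{i-1}(g_1)$ lies in $\KK[x_1, \ldots, x_n]$, since iterating the operator $\mathcal{L}_{\bg_{\balpha}} = \sum_i g_{\balpha,i}\,\partial/\partial x_i$ does not introduce any proper derivatives $x_1^{(j)}$.  Viewing $x_1$ as an element of the base field $F_\ell$ defined in~\eqref{eq:field}, we may therefore regard $h := \mathcal{L}_{\bg_{\balpha}}^{\ell-1}(g_1)$ as an element of $F_\ell[x_2, \ldots, x_n]$.  Then $p_{\balpha,\ell} = x_1^{(\ell)} - h$ inside $F_\ell[x_1^{(\ell)}, x_2, \ldots, x_n]$, and for $i \neq \ell$ the polynomial $p_{\balpha, i} = x_1^{(i)} - \mathcal{L}_{\bg_{\balpha}}^{i-1}(g_1)$ does not involve $x_1^{(\ell)}$ at all, so it already lies in $F_\ell[x_2, \ldots, x_n]$.

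Next I would introduce the substitution homomorphism
\[
\psi \colon F_\ell[x_1^{(\ell)}, x_2, \ldots, x_n] \longrightarrow F_\ell[x_2, \ldots, x_n],\qquad x_1^{(\ell)} \mapsto h,\quad x_j \mapsto x_j \text{ for } j\geqslant 2.
\]
This map is surjective, and since $p_{\balpha,\ell}$ is monic of degree one in $x_1^{(\ell)}$, its kernel is exactly $(p_{\balpha, \ell})$.  Because $\psi$ fixes every $p_{\balpha, i}$ with $i \neq \ell$, it induces a ring isomorphism
\[
F_\ell[x_1^{(\ell)}, x_2, \ldots, x_n]\bigl/(p_{\balpha,1}, \ldots, p_{\balpha,n}) \;\cong\; F_\ell[x_2, \ldots, x_n]\bigl/(p_{\balpha,1}, \ldots, \widehat{p_{\balpha,\ell}}, \ldots, p_{\balpha,n}).
\]

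Finally, Lemma~\ref{lem:sharp_radical} asserts that the left-hand quotient is reduced, hence so is the right-hand one, which is exactly the statement that the ideal $(p_{\balpha,1}, \ldots, \widehat{p_{\balpha,\ell}}, \ldots, p_{\balpha,n}) \subset F_\ell[x_2, \ldots, x_n]$ is radical.  There is no real obstacle here; the only point requiring care is verifying that $h \in F_\ell[x_2, \ldots, x_n]$ (i.e.\ that no $x_1^{(j)}$ with $j \geqslant 1$ appears), which follows from the structure of the Lie derivative on $\KK[\bx]$.
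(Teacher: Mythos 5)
Your proof is correct and follows essentially the same route as the paper: both arguments hinge on $p_{\balpha,\ell}$ being monic and linear in $x_1^{(\ell)}$ while the remaining generators do not involve $x_1^{(\ell)}$, and then transfer radicality from Lemma~\ref{lem:sharp_radical}. The paper phrases this as the identification $J = I_{\balpha} \cap F_\ell[x_2,\ldots,x_n]$ (a contraction of a radical ideal), whereas you make the underlying mechanism explicit via the quotient-ring isomorphism induced by substituting $x_1^{(\ell)} \mapsto h$; these are the same argument in two guises.
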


\begin{proof}
    Let $J = \ideal{p_{\balpha, 1}, \ldots, p_{\balpha, \ell - 1},  p_{\balpha, \ell + 1}, \ldots, p_{\balpha, n}} \subset F_\ell[x_2, \ldots, x_n]$.
    Since $p_{\balpha, \ell}$ is linear in $x_1^{(\ell)}$, we have $J = I_{\balpha} \cap F_\ell[x_2, x_3, \ldots, x_n]$.
    Then Lemma~\ref{lem:sharp_radical} implies that $J$ is radical.
\end{proof}

\begin{lemma} \label{lem::jacobi}
    If all the coordinates of $\balpha$ are nonzero, then for every $0 \leqslant \ell \leqslant n$, the images of $\{ x_1, x_1', \ldots, x_1^{(\ell - 1)}, x_1^{(\ell + 1)}, \ldots, x_1^{(n)} \}$ are algebraically independent in $\mathbb{K}[x_1^{(\leqslant n)}, x_2, \ldots, x_n]/I_{\balpha} \cong \mathbb{K}[\mathbf{x}]$.
\end{lemma}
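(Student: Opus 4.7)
Via the isomorphism $\KK[x_1^{(\leqslant n)}, x_2, \ldots, x_n]/I_{\balpha} \cong \KK[\bx]$ established in the proof of Lemma~\ref{lem:sharp_radical}, under which $x_1^{(j)}$ maps to $h_j := \mathcal{L}_{\bg_{\balpha}}^j(x_1) \in \KK[\bx]$, the lemma reduces to the assertion that for every $\ell \in \{0, 1, \ldots, n\}$ the $n$ polynomials $\{h_j\}_{j \in \{0,\ldots,n\}\setminus\{\ell\}}$ are algebraically independent over $\KK$ in $\KK[\bx]$. Since $\mathrm{trdeg}_{\KK}\KK[\bx] = n$, this is equivalent to the nonvanishing of the Jacobian determinant $\Delta_\ell := \det\bigl(\partial h_j / \partial x_i\bigr)_{i = 1,\ldots,n;\, j \in \{0,\ldots,n\} \setminus \{\ell\}}$ as an element of $\KK[\bx, \balpha]$. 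My plan is to prove $\Delta_\ell \not\equiv 0$ for each $\ell$.

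As a first step I would collect structural information about the $h_j$ via induction on $j$, using the recursion $h_{j+1} = g_1 \cdot \partial h_j/\partial x_1 + \sum_{k=2}^n \alpha_k x_k^D \cdot \partial h_j/\partial x_k$. This yields: $h_j$ has $x_1$-degree exactly $j(d-1) + 1$ with constant leading coefficient $c_j := \prod_{i=0}^{j-1}(i(d-1) + 1)$, which is nonzero as $\charr \KK = 0$; $\partial h_j/\partial x_1$ has $x_1$-degree $j(d-1)$ with constant leading coefficient $c_{j+1}$; and for $i \geqslant 2$, $\partial h_j/\partial x_i$ has $x_1$-degree at most $(j-1)(d-1)$. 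For $\ell \geqslant 1$, the row $j = 0$ of the Jacobian is $(1, 0, \ldots, 0)$, so cofactor expansion reduces $\Delta_\ell$ to a smaller $(n-1) \times (n-1)$ determinant in the variables $x_2, \ldots, x_n$.

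To actually exhibit a point where $\Delta_\ell$ does not vanish, I would follow the paper's stated plan of a complex-analytic observability argument. Let $\ba \in \overline{\KK}^n$ be generic initial data with all $\alpha_i$ nonzero, and let $\bx(t)$ be the analytic solution of the ODE with $\bx(0) = \ba$, so that $h_j(\ba) = (d^j x_1/dt^j)(0)$. The subsystem $x_i' = \alpha_i x_i^D$ for $i \geqslant 2$ is decoupled and admits the closed form $x_i(t) = a_i\bigl(1 - (D-1)\alpha_i a_i^{D-1} t\bigr)^{-1/(D-1)}$ when $D > 1$ (respectively $a_i e^{\alpha_i t}$ when $D = 1$), so the driven equation for $x_1(t)$ has Taylor coefficients at $t = 0$ that transparently encode $\ba$. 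For generic $\ba$, the characteristic parameters of the $n-1$ decoupled trajectories are pairwise distinct, and I would argue that any $n$ of the first $n+1$ Taylor coefficients of $x_1(t)$ suffice to reconstruct $\ba$ up to a local analytic isomorphism. Translating this back, the map $\ba \mapsto (h_j(\ba))_{j \neq \ell}$ is a local isomorphism at a generic point, giving $\Delta_\ell \not\equiv 0$.

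The main obstacle, which I expect will consume the bulk of the proof, is treating all values of $\ell$ uniformly --- in particular the case $\ell = 0$, where the row-expansion shortcut is unavailable. A direct leading-$x_1$-degree analysis of $\Delta_\ell$ fails because the dominant $x_1$-parts of $\partial h_j/\partial x_i$ (for $i \geqslant 2$) are all proportional to $d c_j (x_i + 1)^{d-1} x_1^{(j-1)(d-1)}$, producing a rank-$1$ leading submatrix so that its top-degree term vanishes identically. I would circumvent this either by tracking subleading terms carefully or, more likely, by reducing the generic nonvanishing of $\Delta_\ell$ to the Vandermonde-type nondegeneracy established in Lemma~\ref{lem::Puiseux}; the decoupled powers $\alpha_k x_k^D$ entering the recursion for $h_j$ make such a reduction very natural after an appropriate rescaling of $\bx$ and $\balpha$.
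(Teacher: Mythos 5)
Your setup is the right one and matches the paper's: reduce to the nonvanishing of the Jacobian $\Delta_\ell$ of $\{h_j\}_{j\neq\ell}$, note that the row $j=0$ is $(1,0,\ldots,0)$, and observe (correctly) that grading by $x_1$-degree fails because the top $x_1$-parts of $\partial h_j/\partial x_i$ for $i\geqslant 2$ factor as a function of $i$ times a function of $j$, i.e.\ form a rank-one block. But the proof stops exactly where the work begins. Your first route --- ``any $n$ of the first $n+1$ Taylor coefficients of $x_1(t)$ suffice to reconstruct $\ba$ locally'' --- is not an argument: it is a restatement of $\Delta_\ell\neq 0$ at a generic point, and nothing in the closed-form solutions is actually used to establish it. (Note also that the identifiability-style reasoning the paper does use elsewhere only yields that the \emph{full} collection $h_0,\ldots,h_n$ has transcendence degree $n$, i.e.\ that \emph{some} $n$-element subset is independent; the lemma needs \emph{every} subset omitting one element, uniformly in $\ell$.) Your second route, ``reduce to Lemma~\ref{lem::Puiseux},'' is likewise not exhibited: that lemma is a statement about a square system having no nonzero solutions, used in the paper for the points-at-infinity count, and there is no evident translation of Jacobian nonsingularity into that form; moreover it would at best give the conclusion for generic $\balpha$, not for all $\balpha$ with nonzero coordinates.

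The missing idea is to grade by the variables $x_2,\ldots,x_n$ instead of by $x_1$. With respect to the degree-lexicographic order on $x_2>\cdots>x_n$, the leading term of $\partial h_i/\partial x_j$ (for $j\geqslant 2$) is $c_i\alpha_j^{\,i-1}x_j^{\,d+(D-1)(i-1)-1}$ with $c_i$ a nonzero constant; the exponent depends only on the row $i$ and is strictly increasing in $i$ (for $D>1$), while the variable is determined by the column. All permutation products in the determinant of this leading-term matrix have the same total degree, and the lexicographically largest monomial (put the largest exponent on $x_2$, the next on $x_3$, etc.) is attained by exactly one permutation --- the antidiagonal --- so the determinant of the leading terms is nonzero and equals the leading term of $\Delta_\ell$ itself. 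This generalized-Vandermonde mechanism is precisely what replaces the rank-one collapse you ran into, and it is the content of the paper's proof that your proposal does not supply.
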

\begin{proof}
    Denote $R:=\mathbb{K}[x_1^{(\leqslant n)}, x_2, \ldots, x_n]/I_{\balpha}$.
    We will denote the images of $x_1^{(\leqslant n)}, x_2, \ldots, x_n$ in $R$ by the same letters.
    We have the following equalities modulo $I_{\balpha}$:
    \[
      x_1^{(i)} = \mathcal{L}_{\bg_{\balpha}}^i(x_1) =: h_i(\bx), \quad \text{ for } 0 \leqslant i \leqslant n.
    \]
    Let $J$ be the Jacobian determinant of $h_0, \ldots,h_{\ell - 1}, h_{\ell + 1}, \ldots,  h_n$  with respect to $\mathbf{x}$.
    Since $h_0 = x_1$, the determinant $J$ is equal to the Jacobian of $h_1, \ldots,h_{\ell - 1}, h_{\ell + 1}, \ldots,  h_n$ with respect to $x_2, \ldots, x_n$.
   
    Consider the matrix consisting of the leading terms of each entry of this 
    Jacobian with respect to the degree lexicographic ordering with
    \[
        {\Rb  x_2 > \cdots > x_n > x_1}.
    \]
    Since 
    \[
    \lt\left( \frac{\partial h_i}{\partial x_j} \right) = c_i \alpha_j^{i - 1} x_j^{d + (D - 1)(i - 1) - 1},\quad \text{ for } 1 \leqslant i \leqslant n, \; 2 \leqslant j \leqslant n,
    \]
    where $c_i =  d \prod_{j = 1}^{i - 1} \bigl(d + j (D - 1)\bigr)$, the corresponding determinant of the leading terms is equal to
    \begin{equation*} 
     J_{\lt} = \begin{vmatrix}  
       c_1 x_2 ^{d-1} &\ldots & c_1 x_n ^{d-1} \\
        c_2 \alpha_2 x_2^{d + (D-1) -1 } & \ldots &  c_2 \alpha_n  x_n^{d + (D-1) -1 } \\
         \vdots & \ddots & \vdots \\
         c_{\ell - 1} \alpha_2^{\ell - 2} x_2^{d + (\ell - 1)(D - 1) - 1} &\ldots &  c_{\ell - 1} \alpha_n^{\ell - 2} x_n^{d + (\ell - 1)(D - 1) - 1}\\  
         c_{\ell + 1} \alpha_2^{\ell } x_2^{d + (\ell + 1)(D - 1) - 1} &\ldots &  c_{\ell + 1} \alpha_n^{\ell } x_n^{d + (\ell + 1)(D - 1) - 1}\\
         \vdots & \ddots & \vdots \\
          c_n \alpha_2^{n - 2} x_2^{d + (n - 1)(D - 1) - 1} &\ldots &  c_n \alpha_n^{n - 2} x_n^{d + (n - 1)(D - 1) - 1}\\
        \end{vmatrix}.
       \end{equation*}
       The leading term of the determinant $J_{\lt}$ is the product over the antidiagonal 
       \[
        c x_2^{d + (n - 1)(D - 1)- 1} x_3^{d + (n - 2)(D - 1)- 1} {\Ra \cdots} x_{\ell}^{d + (\ell + 1) (D - 1) - 1} x_{\ell + 1}^{d + (\ell - 1) (D - 1) - 1} {\Ra \cdots} x_n^{d-1}
       \]
       for a nonzero constant $c$. 
       Since $\lt(J_{\lt}) \neq 0$, we have $\lt(J) = \lt(J_{\lt}) \neq 0$.
       The claimed algebraic independence then follows from the Jacobian criterion \cite[Theorem~2.2]{ehrenborg1993apolarity}.
\end{proof}

\begin{lemma}\label{lem:sharp:count}
    Following Notation \ref{not:lower_bound}, consider the system of equations
    \begin{equation} \label{eq::number_of_solutions_zero}
        p_{\balpha, 1} = \cdots = p_{\balpha, n} = 0.
    \end{equation}
    Then there exists a non-empty Zariski open subset $U\subset \mathbb{K}^{n-1}$ such that, for every $\balpha\in U$ and every $0 \leqslant \ell \leqslant n$, the system \eqref{eq::number_of_solutions_zero} has exactly $N_{\ell}$ distinct solutions in $\overline{F}_\ell$, the algebraic closure~of~$F_\ell$.
 \end{lemma}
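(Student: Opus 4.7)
My plan is to combine a projective B\'ezout count with Lemma~\ref{lem::Puiseux} applied to the system at infinity. For each $\ell$, I first recast the system as a square polynomial system over $F_\ell$: for $\ell \geq 1$, $p_{\balpha, \ell}$ is linear in $x_1^{(\ell)}$ and can be used to eliminate this unknown, leaving the $n - 1$ equations $\{h_k - x_1^{(k)} = 0\}_{k \neq \ell}$ in the $n - 1$ unknowns $x_2, \ldots, x_n$, where $h_k := \mathcal{L}_{\bg_{\balpha}}^{k - 1}(g_1)$; for $\ell = 0$ I keep all $n$ equations in $(x_1, \ldots, x_n)$ over $F_0$. In both cases the relevant degree of $h_k$ equals $d + (k - 1)(D - 1)$, so the product of degrees is exactly $N_\ell$. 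By Corollary~\ref{cor::sharp_radical} the corresponding ideal is radical, so every projective solution is simple, and it suffices to show that, for generic $\balpha$, the system has no projective solutions at infinity.

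Let $\tilde T_k$ denote the leading homogeneous component of $h_k$ (in total degree for $\ell = 0$, in $x_2, \ldots, x_n$ for $\ell \geq 1$, treating $x_1$ as an element of $F_\ell$). Using the recursion $h_{k + 1} = \mathcal{L}_{\bg_{\balpha}}(h_k) = g_1 \partial_{x_1} h_k + \sum_{i \geq 2} \alpha_i x_i^D \partial_{x_i} h_k$, I would prove by induction on $k$ that, modulo the ideal generated by $\tilde T_1$ (which equals $\sum_{i \geq 2} x_i^d$ for $\ell \geq 1$ and $x_1^d + \sum_{i \geq 2} x_i^d$ for $\ell = 0$), one has $\tilde T_k \equiv R_k := c_k \sum_{i = 2}^n \alpha_i^{k - 1} x_i^{d + (k - 1)(D - 1)}$ with some nonzero $c_k \in \KK$. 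The key observation is that the leading part of $g_1 \partial_{x_1} h_k$ is always divisible by $\tilde T_1$ and so vanishes modulo $\tilde T_1$, leaving only the Vandermonde-type contribution from $\sum_{i \geq 2} \alpha_i x_i^D \partial_{x_i}$.

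For $\ell \neq 1$ the equation $\tilde T_1 = 0$ is itself part of the infinity system (as the leading part of $p_{\balpha, 1}$). Combined with the reduction above, the infinity system becomes the square Vandermonde-type system $\{\tilde T_1 = 0\} \cup \{R_k = 0\}_{k \in \{2, \ldots, n\} \setminus \{\ell\}}$ in the $n - 1$ variables $x_2, \ldots, x_n$, indexed by $\gamma \in \{0, 1, \ldots, n - 1\} \setminus \{\ell - 1\}$ (for $\ell = 0$ the Vandermonde block already exhausts $\{1, \ldots, n - 1\}$ and $\tilde T_1 = 0$ then kills $x_1$). Lemma~\ref{lem::Puiseux} in ambient dimension $n - 1$ with coefficients $(\alpha_2, \ldots, \alpha_n)$ gives no nonzero solutions for generic $\balpha$, and since the reduced equations have coefficients in $\KK$, the conclusion transfers to $\overline{F}_\ell$ by base change. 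The main obstacle is the remaining case $\ell = 1$, where $\tilde T_1 = 0$ is absent, leaving $n - 1$ hypersurfaces in $\mathbb{P}^{n - 2}$. Here the plan is to exploit the decomposition $\tilde T_k = \tilde T_1 Q_k + R_k$: a hypothetical nonzero common zero $\bar x^\ast$ of $\{\tilde T_k\}_{k \geq 2}$ either satisfies $\tilde T_1(\bar x^\ast) = 0$---in which case $\{R_k(\bar x^\ast) = 0\}_{k \geq 2}$ is a square Vandermonde system ($\gamma \in \{1, \ldots, n - 1\}$) to which Lemma~\ref{lem::Puiseux} applies directly---or forces each $Q_k(x_1; \bar x^\ast) = -R_k(\bar x^\ast)/\tilde T_1(\bar x^\ast)$, a rigid constraint (due to the nontrivial $x_1$-dependence of the $Q_k$'s versus the $x_1$-independence of the $R_k$'s and of $\tilde T_1$) that confines $\bar x^\ast$ to a proper subvariety avoided by generic $\balpha$. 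Combining B\'ezout, radicality, and the absence of solutions at infinity yields exactly $N_\ell$ distinct solutions in $\overline{F}_\ell$ for every $\ell$.
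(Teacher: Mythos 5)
Your overall strategy coincides with the paper's: a B\'ezout count over $F_\ell$ for the square system obtained by using $p_{\balpha,\ell}$ to eliminate $x_1^{(\ell)}$, radicality (Corollary~\ref{cor::sharp_radical}, plus Lemma~\ref{lem:sharp_radical} for $\ell=0$) to upgrade the count with multiplicity to a count of distinct points, and Lemma~\ref{lem::Puiseux} to rule out solutions at infinity for generic $\balpha$. Your treatment of $\ell = 0$ and $\ell \geqslant 2$ is essentially right: the reduction of $\tilde T_k$ modulo $\tilde T_1$ plays the role of the paper's Lemma~\ref{lem::diff_operator}, and it is only needed when $d = D$ (for $d < D$ the term $g_1\partial_{x_1}h_k$ has total degree $d + \deg h_k - 1 < D + \deg h_k - 1$, so $\tilde T_k$ is already the pure Vandermonde form $R_k$). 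Omitting the paper's Step~1 (zero-dimensionality via Lemma~\ref{lem::jacobi}) is harmless, since for a square system the absence of projective zeros at infinity already forces the affine zero set to be finite.

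The genuine gap is your case $\ell = 1$, and it is partly a self-inflicted one. For every $\ell \geqslant 1$ the unknowns are $x_2,\ldots,x_n$ while $x_1$ is a transcendental element of the coefficient field $F_\ell$; hence the relevant leading forms are the top-degree parts with respect to $x_2,\ldots,x_n$, which are obtained by setting $x_1=0$ in the total-degree leading forms and therefore do not depend on $x_1$ at all. This destroys your ``second case'' argument: after the correct specialization the $Q_k$'s are $x_1$-free, so the claimed rigidity ``due to the nontrivial $x_1$-dependence of the $Q_k$'s'' has no content, and the assertion that the constraint confines $\bar x^\ast$ to a subvariety avoided by generic $\balpha$ is precisely what must be proved, not a proof. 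In fact, for $d<D$ there is nothing special about $\ell=1$: the leading forms of $p_{\balpha,2},\ldots,p_{\balpha,n}$ are exactly the $n-1$ Vandermonde forms with $\gamma=(1,\ldots,n-1)$ in the $n-1$ unknowns $x_2,\ldots,x_n$, and Lemma~\ref{lem::Puiseux} applies directly — this is how the paper handles all $\ell\geqslant 1$ uniformly. What your write-up leaves genuinely open is the combination $\ell=1$ with $d=D$: there the leading forms are $\mathcal{D}^{k}(x_1)|_{x_1=0}$ for $k=2,\ldots,n$ and differ from the Vandermonde forms by multiples of $b=x_2^d+\cdots+x_n^d$ (for instance $\mathcal{D}^{3}(x_1)|_{x_1=0}=2b^2+\mathcal{D}^2(b)$ when $d=D=2$), and since $b$ itself is \emph{not} among the equations when $\ell=1$, these extra terms cannot be reduced away. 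Closing this case requires an actual argument — e.g.\ an analogue of Lemma~\ref{lem::diff_operator} adapted to the truncated system, or exhibiting a single $\balpha$ (say over a Laurent series field, as in the proof of Lemma~\ref{lem::Puiseux}) for which the resulting $n-1$ forms in $\mathbb{P}^{n-2}$ have no common zero.
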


We first prove an auxiliary lemma which we will use to rewrite the system~\eqref{eq::number_of_solutions_zero}.

\begin{lemma} \label{lem::diff_operator}
 Let $R$ be a differential $\mathbb{K}$-algebra and let $\mathcal{D}$ be a differential operator $\mathcal{D} : R \rightarrow R$.
   Fix $k\in \mathbb{Z}_{>0}$ and $b\in R$. 
   Assume that for some $a \in R$ we have $\mathcal{D}(a) = a^k + b$.
   Then the following equality of ideals in $R$ holds for every $n \geqslant {\Rb 2}$
\[
    \ideal{\mathcal{D}(a), \mathcal{D}^2(a), \ldots, \mathcal{D}^n(a)}  = \ideal{\mathcal{D}(a), \mathcal{D}(b), \mathcal{D}^2(b), \ldots, \mathcal{D}^{n-1}(b)}.
\]
\end{lemma}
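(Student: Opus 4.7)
The plan is to argue by induction on $n$, with the crucial identity being a congruence $\mathcal{D}^{n+1}(a) \equiv \mathcal{D}^n(b) \pmod{\ideal{\mathcal{D}(a), \mathcal{D}^2(a), \ldots, \mathcal{D}^n(a)}}$. Let me denote $J_n := \ideal{\mathcal{D}(a), \ldots, \mathcal{D}^n(a)}$ and $K_n := \ideal{\mathcal{D}(a), \mathcal{D}(b), \ldots, \mathcal{D}^{n-1}(b)}$. The base case $n=1$ is the tautology $J_1 = \ideal{\mathcal{D}(a)} = K_1$. Assuming $J_n = K_n$, the goal reduces to showing that $J_n + \ideal{\mathcal{D}^{n+1}(a)} = J_n + \ideal{\mathcal{D}^n(b)}$, i.e., that $\mathcal{D}^{n+1}(a)$ and $\mathcal{D}^n(b)$ are congruent modulo $J_n$.

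I would prove this latter congruence via a secondary induction with the following stronger claim: for every $n \geqslant 1$ there exist elements $c_{n,1}, \ldots, c_{n,n} \in R$ such that
\[
\mathcal{D}^{n+1}(a) = \mathcal{D}^n(b) + \sum_{i=1}^{n} c_{n,i}\, \mathcal{D}^i(a).
\]
For $n=1$, the identity $\mathcal{D}(a) = a^k + b$ and the Leibniz rule yield $\mathcal{D}^2(a) = k a^{k-1}\mathcal{D}(a) + \mathcal{D}(b)$, settling the base case with $c_{1,1} = k a^{k-1}$. For the inductive step, I apply $\mathcal{D}$ to both sides of the claimed identity at level $n$: the term $\mathcal{D}(\mathcal{D}^n(b))$ produces $\mathcal{D}^{n+1}(b)$, while each term $\mathcal{D}(c_{n,i}\, \mathcal{D}^i(a))$ expands, by the Leibniz rule, into $\mathcal{D}(c_{n,i})\,\mathcal{D}^i(a) + c_{n,i}\,\mathcal{D}^{i+1}(a)$. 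The resulting expression is visibly $\mathcal{D}^{n+1}(b)$ plus an $R$-linear combination of $\mathcal{D}^1(a), \ldots, \mathcal{D}^{n+1}(a)$, which gives the desired form at level $n+1$.

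Combining the two inductions finishes the proof: since $\mathcal{D}^{n+1}(a) - \mathcal{D}^n(b) \in J_n$, and by the outer induction hypothesis $J_n = K_n$, we obtain
\[
J_{n+1} = J_n + \ideal{\mathcal{D}^{n+1}(a)} = J_n + \ideal{\mathcal{D}^n(b)} = K_n + \ideal{\mathcal{D}^n(b)} = K_{n+1}.
\]
I do not anticipate any real obstacle in this argument; everything is driven by the Leibniz rule and careful indexing. The only point that requires mild care is ensuring that the inductive step of the secondary induction stays within the range $\mathcal{D}^1(a), \ldots, \mathcal{D}^{n+1}(a)$ — that is, that the shift $i \mapsto i+1$ caused by applying $\mathcal{D}$ does not push any term outside the admissible ideal $J_{n+1}$, which is automatic since we are tracking coefficients of $\mathcal{D}^j(a)$ for $j$ up to $n+1$.
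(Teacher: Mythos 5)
Your proof is correct and follows essentially the same route as the paper's: both arguments are inductions driven by the Leibniz rule together with the base identity $\mathcal{D}^2(a) = k a^{k-1}\mathcal{D}(a) + \mathcal{D}(b)$. The only difference is organizational — you establish the single explicit congruence $\mathcal{D}^{n+1}(a) \equiv \mathcal{D}^n(b) \pmod{\ideal{\mathcal{D}(a),\ldots,\mathcal{D}^n(a)}}$ and deduce both inclusions at once, whereas the paper runs two separate inclusion inductions; your packaging is, if anything, slightly cleaner.
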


\begin{proof}
    We will prove the equality of ideals by showing the inclusions in both directions {\Rb by} induction on $n$, starting with $n ={\Rb 2}$. 
    {\Rb For the base case $n = 2$: 
    \[ 
    \mathcal{D}^2(a) = \mathcal{D}(a^k + b) = k a^{k-1} \mathcal{D}(a) + \mathcal{D}(b) 
    \in \ideal{\mathcal{D}(a), \mathcal{D}(b)}.
    \]
    } 
    {\Rb Rearranging gives
    $\mathcal{D}(b) = \mathcal{D}^2(a) - k a^{k-1} \mathcal{D}(a) 
    \in \ideal{\mathcal{D}(a), \mathcal{D}^2(a)}$. 
    Thus, 
    $\ideal{\mathcal{D}(a), \mathcal{D}^2(a)} = \ideal{\mathcal{D}(a), \mathcal{D}(b)}$.}
    Now we consider the case $n \Rb > 2$. 
    {\Rb The induction hypothesis implies}
    \[
        \mathcal{D}^{n - 1}(a) \in \ideal{\mathcal{D}(a), \mathcal{D}(b), \ldots, \mathcal{D}^{n-2}(b)},
    \]
    then
    \[
        \mathcal{D}^{n}(a) \in \ideal{\mathcal{D}(a), \mathcal{D}^2(a), \mathcal{D}(b), \ldots, \mathcal{D}^{n - 1}(b)}.
    \]
    Since $\mathcal{D}^2(a) \in \ideal{\mathcal{D}(a), \mathcal{D}(b)}$, {\Rb by the base case,} we have 
    \[
        \mathcal{D}^{n}(a) \in \ideal{\mathcal{D}(a), \mathcal{D}(b), \ldots, \mathcal{D}^{n - 1}(b)}. 
    \]

    {\Rb In the other direction, using the induction hypothesis, we have}
    \[
    \mathcal{D}^{n-2}(b) \in \ideal{\mathcal{D}(a), \mathcal{D}^2(a), \ldots, \mathcal{D}^{n-1}(a)} {\Rb \implies \mathcal{D}^{n-1}(b) \in \ideal{\mathcal{D}(a), \mathcal{D}^2(a), \ldots, \mathcal{D}^{n}(a)}}\qedhere.
    \]
\end{proof}

\begin{proof}[Proof of Lemma~\ref{lem:sharp:count}]
    First of all we note that for every $1 \leqslant \ell \leqslant n$ the solutions of the system 
         \begin{equation} \label{eq::number_of_solutions_short}
         p_{\balpha, 1} = \cdots = p_{\balpha, \ell-1} = p_{\balpha, \ell+1} = \cdots = p_{\balpha, n} = 0     
         \end{equation} 
    in the variables $x_2, \ldots, x_n$ can be uniquely lifted to solutions of the system \eqref{eq::number_of_solutions_zero} in the variables $x_1^{(\ell)}, x_2, \ldots, x_n$ by substituting these solutions of \eqref{eq::number_of_solutions_short} to $p_{\balpha, \ell} = x_1^{(\ell)} - \mathcal{L}_{\bg_{\balpha}}^{{\Rb \ell - 1}}(g_{\balpha, 1}) = 0$. 
    Thus, the number of solutions of the system \eqref{eq::number_of_solutions_zero} for the case $1 \leqslant \ell \leqslant n$ is equal to the number of solutions of the system \eqref{eq::number_of_solutions_short}.
    
     \emph{Step 1: Zero-dimensionality.}
     Since $\mathbb{K}[x_1, \ldots, x_1^{(n)}, x_2, \ldots, x_n] / I_{\balpha} \simeq \mathbb{K}[\mathbf{x}]$, $I_{\balpha}$ is of dimension $n$. 
     Let $U_{-1} = (\KK \setminus \{ 0\})^{n-1} \subset \KK^{n-1}$ be a non-empty Zariski open set of all vectors without zero components. 
     For every $0 \leqslant \ell \leqslant n$ and $\balpha \in U_{-1}$ the set $x_1, \ldots, x_1^{(\ell - 1)}, x_1^{(\ell + 1)}, \ldots, x_1^{(n)}$ is algebraically independent modulo $I_{\balpha}$ by Lemma~\ref{lem::jacobi}.
     Since $\dim(I_{\balpha})=n$, it is a maximal algebraically independent set. 
     Thus, $I_{\balpha}\cdot F_{\ell}[x_1^{(\ell)}, x_2, \ldots, x_n]$ is a zero-dimensional ideal.

      Now for every $1 \leqslant \ell \leqslant n$ consider the ideal $J_\ell$ generated by $ p_{\balpha, 1}, \ldots, p_{\balpha, \ell-1}, p_{\balpha, \ell+1}, \ldots, p_{\balpha, n}$ in $\mathbb{K}[x_1, \ldots, x_1^{(\ell - 1)}, x_1^{(\ell + 1)}, \ldots, x_1^{(n)}, x_2, \ldots, x_n]$. 
      Since, among $p_{\balpha, 1}, \ldots,  p_{\balpha, n}$, only $p_{\balpha, \ell}$ involves the variable $x_1^{(\ell)}$ (and is linear in it), we have 
      \[
      I_{\balpha} \cap F_{\ell}[x_2, \ldots, x_n] = J_\ell \cdot F_{\ell}[x_2, \ldots, x_n].
      \]
      In particular, $\dim(J_\ell \cdot F_{\ell}[x_2, \ldots, x_n]) = 0$.
      
     \emph{Step 2: Solutions at infinity.}
     Consider the homogenizations $p_{\balpha, 1}^h, \ldots, p_{\balpha, n}^h$ of $p_{\balpha, 1}, \ldots, p_{\balpha, n}$ considered as polynomials in $x_1^{(\ell)}, x_2, \ldots, x_n$ using
     an additional variable $h$.

     For every $0 \leqslant \ell \leqslant n$, we will produce {\Rb a nonempty} Zariski open set $U_\ell \subset \KK^{n - 1}$ such that, for every $0 \leqslant \ell \leqslant n$ and for every $\balpha \in U_\ell$, the following system does not have a solution in $\overline{F}_\ell$ at infinity (that is, in $\mathbb{V}(h)$):
     \begin{itemize}
         \item if $\ell = 0$, then the system is $p_{\balpha, 1}^h = \cdots = p_{\balpha,n}^h = 0$ in variables $\bx, h$;
         \item if $1 \leqslant \ell \leqslant n$, then the system is $p_{\balpha, 1}^h = \cdots = p_{\balpha, \ell - 1}^h = p_{\balpha, \ell + 1}^h = \cdots = p_{\balpha,n}^h = 0$ in variables $x_2, \ldots, x_n, h$.
     \end{itemize}

     \emph{Assume that $d < D$.} 
     For $d < D$, the points at infinity for the case $\ell = 0$
     are given by the solutions (in projective space) of the equations $p_{\balpha,1}^h = \cdots = p_{\balpha,n}^h = h =0$. Here 
     \begin{equation} \label{eq::initial_condition}
         \begin{cases}
             p_{\balpha,1}^h|_{h=0} = x_1^d + x_2^d + \cdots + x_n^{d} = 0,\\
             p_{\balpha,2}^h|_{h=0} = c_1 ( \alpha_2 x_2^{d + D - 1} + \cdots + \alpha_n x_n^{d + D - 1}) = 0, \\
             \vdots\\
             p_{\balpha,n}^h|_{h=0} = c_n ( \alpha_2^{n-1} x_2^{d + (n-1)(D-1)} + \cdots +  \alpha_n^{n-1} x_n^{d + (n-1)(D-1)} ) = 0,\\
         \end{cases}
     \end{equation}
     where $c_i = \prod_{k=1}^{i}(d + (k-1)(D-1))$ for every $1 \leqslant i \leqslant n$.
     Note that the solutions of the $n - 1$ last equations of the system \eqref{eq::initial_condition} can be lifted to solutions of the entire system. 
     By applying Lemma~\ref{lem::Puiseux} for {\Rb the } $n - 1$ last equations of the system \eqref{eq::initial_condition} there exist a non-empty Zariski open subset $U_0 \subset \mathbb{K}^{n-1}$ such that for every choice $\balpha \in U_{0}$ the system \eqref{eq::initial_condition} has no nonzero solutions. 
     Therefore the system $p_{\balpha,1}^h = \ldots = p_{\balpha,n}^h = h =0$ has no nonzero solutions in $\overline{F}_0$ for $\balpha \in U_{0}$.

    For the case $1 \leqslant \ell \leqslant n$ 
    the points at infinity are the solutions of the system obtained from \eqref{eq::initial_condition} by substituting $x_1 = 0$ and omitting the $\ell$-th equation. 
    Then by applying Lemma~\ref{lem::Puiseux} to the resulting system \eqref{eq::initial_condition} {\Ra there exists} a non-empty Zariski open subset $U_{\ell} \subset \mathbb{K}^{n-1}$ such that for every choice $\balpha \in U_{\ell}$ the system \eqref{eq::initial_condition} has no nonzero solutions. 
    Therefore the system $p_{\balpha, 1}^h = \cdots = p_{\balpha, \ell - 1}^h = p_{\balpha, \ell + 1}^h = \cdots = p_{\balpha,n}^h = h = 0$ has no nonzero solutions in $\overline{F}_\ell$ for $\balpha \in U_{\ell}$.

    \emph{Now consider the case $d = D$.} 
    In this case the points at infinity for $\ell= 0$ are also given by the solutions in $\overline{F}_0$ of the equations $p_{\balpha,1}^h|_{h = 0} = \cdots = p_{\balpha,n}^h|_{h = 0} = 0$, where $p_{\balpha,1}^h|_{h = 0} = x_1^D + x_2^D + \cdots + x_n^D$. 
    Consider $p_{\balpha,1}^h|_{h = 0}$ as the image $\mathcal{D}(x_1)$ where $\mathcal{D}$ is defined to be a differential operator $\mathcal{D}\colon \mathbb{K}[\mathbf{x}] \rightarrow \mathbb{K}[\mathbf{x}]$ with 
    \[
      x_i \mapsto  (\mathcal{L}_{\bg_{\balpha}}(x_i))^h|_{h = 0},
    \]
   where $\mathcal{L}_{\bg}$ is the Lie derivative operator (see Notation~\eqref{not:proofs_notation}).
    Then we can rewrite the system $p_{\balpha,1}^h|_{h = 0} = \cdots = p_{\balpha,n}^h|_{h = 0} = h = 0$ as 
    \[
        \mathcal{D}(x_1) =  \mathcal{D}^2(x_1) = \mathcal{D}^3(x_1) = \cdots =  \mathcal{D}^{n}(x_1) = h = 0.
    \]
    By Lemma~\ref{lem::diff_operator} with $a = x_1, k = D$ and $b = x_2^D + \cdots + x_n^D$ this is equivalent to 
    \[
        \mathcal{D}(x_1) = \mathcal{D}(b) = \mathcal{D}^2(b) = \cdots = \mathcal{D}^{n-1}(b) = h = 0, 
    \]
    which itself is equivalent to the system~\eqref{eq::initial_condition} with $d = D$.
    Note that for the case $1 \leqslant \ell \leqslant n$ the points at infinity are given by \eqref{eq::initial_condition} with $d = D$ and $x_1 = 0$ and the $\ell$-th equation omitted. 
    Therefore, the same argument using Lemma \ref{lem::Puiseux} as in the case $d < D$ applies here.

    \emph{Step 3: Computing the B\'ezout bound.}
        Consider a non-empty open set $U := \bigcap\limits_{i=-1}^{n} U_{i} \subset \KK^{n-1}$.
        Thus, for $\ell = 0$ the ideal generated by $p_{\balpha,1}, \ldots, p_{\balpha,n}$ in $F_{\ell}[x_1, x_2, \cdots, x_n]$ is a zero-dimensional radical ideal by Lemma~\ref{lem:sharp_radical} and for every choice $\balpha \in U$ the system $p_{\balpha,1} = \ldots = p_{\balpha,n} = 0$
        has no solutions at infinity, so the number of distinct solutions, counted with multiplicity, 
        of the system \eqref{eq::number_of_solutions_zero} with $\balpha \in U$ is equal to 
        the B\'ezout bound. Therefore, {\Rb the number of solutions in $\overline{F}_{0}$}, is equal to the product of the total degrees of $p_{\balpha,1}, \ldots, p_{\balpha,n}$ in $\mathbf{x}$ which is equal to $N_{0}$. 
        
        For every $1 \leqslant \ell \leqslant n$ the ideal generated by $p_{\balpha,1},  \ldots, p_{\balpha, \ell - 1}, p_{\balpha, \ell + 1}, \ldots, p_{\balpha,n}$ is a zero-dimensional radical ideal in $\mathbb{F}_{\ell}[x_2, \ldots, x_n]$ by Corollary~\ref{cor::sharp_radical} and for every choice $\balpha \in U$ the system \eqref{eq::number_of_solutions_short}
        has no solutions at infinity, so the number of distinct solutions,  counted with multiplicity, of the system \eqref{eq::number_of_solutions_short} (considered in the variables $x_2, \ldots, x_n$), 
        as well as {\Rb the} number of {\Rb distinct solutions} of the system 
        \eqref{eq::number_of_solutions_zero} (considered in the variables $x_1^{(\ell)}, x_2, \ldots, x_n$),
        is equal to the B\'ezout bound for the system~\eqref{eq::number_of_solutions_short}, which is precisely~$N_{\ell}$.  
\end{proof}

\begin{lemma}\label{lem:distinct}
    Consider $0 \leqslant \ell \leqslant n-1$ and the system $p_{\balpha, 1} = \ldots = p_{\balpha, n} = 0$ (see Notation~\ref{not:lower_bound}) as a polynomial system in variables $x_1^{(\ell)}, x_2, \ldots, x_n$ over $F_\ell$.
    Assume that the coordinates of $\balpha$ are distinct prime numbers larger than $d$.
    Then the $x_1^{(\ell)}$-coordinates of the solutions of the system in $\overline{F}_\ell$ are all distinct.
\end{lemma}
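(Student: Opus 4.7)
The plan is to derive the lemma from a stronger \emph{generic observability} statement: the polynomial map
\[
\Phi\colon \overline{\KK}^n \to \overline{\KK}^{n+1}, \qquad \Phi(\bx) = \bigl(x_1, \mathcal{L}_{\bg_{\balpha}}(x_1), \ldots, \mathcal{L}_{\bg_{\balpha}}^n(x_1)\bigr),
\]
is generically injective, i.e.\ birational onto its image (the hypersurface $\{f_{\min} = 0\}$). Granting this, each of the $N_\ell$ solutions in $\overline{F}_\ell^n$ produced by Lemma~\ref{lem:sharp:count} corresponds, via the parametrization of $V(I_{\balpha})$ by $(x_1, x_2, \ldots, x_n)$, to a unique tuple $\bx \in \overline{F}_\ell^n$; two solutions sharing the same $x_1^{(\ell)}$-coordinate would yield distinct $\bx, \tilde\bx \in \overline{F}_\ell^n$ with $\Phi(\bx) = \Phi(\tilde\bx)$, and specializing the transcendentals in $F_\ell$ to sufficiently generic values in $\overline{\KK}$ would produce distinct $\overline{\KK}$-points with equal $\Phi$-images, contradicting generic injectivity.

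To establish generic injectivity of $\Phi$, I pass to the analytic setting after an embedding $\KK \hookrightarrow \mathbb{C}$ and consider $\bx^\ast, \tilde\bx^\ast \in \mathbb{C}^n$ with $\Phi(\bx^\ast) = \Phi(\tilde\bx^\ast)$. The first step is to promote equality of the first $n+1$ Lie derivatives to equality of \emph{all} higher Lie derivatives. Lemma~\ref{lem::jacobi} ensures that $f_{\min}$ depends non-trivially on $x_1^{(n)}$, so differentiating the identity $f_{\min}(x_1, \mathcal{L}_{\bg_{\balpha}}(x_1), \ldots, \mathcal{L}_{\bg_{\balpha}}^n(x_1)) \equiv 0$ along the flow and solving for $\mathcal{L}_{\bg_{\balpha}}^{n+1}(x_1)$ expresses it as a rational function of the first $n+1$ Lie derivatives whose denominator is the separant of $f_{\min}$ in $x_1^{(n)}$, nonzero at a generic point. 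Iterating gives the analogous expression for every $\mathcal{L}_{\bg_{\balpha}}^m(x_1)$ with $m > n$, so the Taylor series of the $x_1$-trajectories starting at $\bx^\ast$ and $\tilde\bx^\ast$ coincide term by term, yielding $x_1(t) \equiv \tilde x_1(t)$ as analytic functions near $t = 0$.

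The central step is the analytic identifiability argument: from $x_1(t) \equiv \tilde x_1(t)$, deduce $\bx^\ast = \tilde\bx^\ast$. The decoupled equations $x_j' = \alpha_j x_j^D$ have closed-form solutions; subtracting $x_1'(t) = \tilde x_1'(t)$ and cancelling the common $x_1(t)^d$ term yields the functional equation
\[
\sum_{j=2}^n \bigl(x_j(t)+1\bigr)^d \equiv \sum_{j=2}^n \bigl(\tilde x_j(t)+1\bigr)^d.
\]
For $D = 1$ (in which case $d = 1$ as well, since $d \le D$), the trajectories are $x_j(t) = x_j^\ast e^{\alpha_j t}$, the equation reduces to $\sum x_j^\ast e^{\alpha_j t} \equiv \sum \tilde x_j^\ast e^{\alpha_j t}$, and linear independence of exponentials with distinct exponents $\alpha_j$ immediately gives $x_j^\ast = \tilde x_j^\ast$. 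For $D > 1$ the solution $x_j(t) = \bigl((1-D)\alpha_j(t - t_j^\ast)\bigr)^{1/(1-D)}$ develops a branch point at $t = t_j^\ast$, and the dominant contribution of $(x_j(t)+1)^d$ there is $\bigl((1-D)\alpha_j\bigr)^{d/(1-D)}(t-t_j^\ast)^{d/(1-D)}$, with leading coefficient depending injectively on $\alpha_j$; matching the singular parts at each branch point on the two sides of the functional equation, the distinctness of the $\alpha_j$ (distinct primes $> d$) prevents any cross-indexed mixing and forces $\{t_j^\ast\} = \{\tilde t_j^\ast\}$ with the correct pairing, recovering $x_j^\ast = \tilde x_j^\ast$. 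The main obstacle lies in this singularity-matching step for $D > 1$: one must carefully justify that the Puiseux expansions on both sides agree at each branch point and that, thanks to injectivity of $\alpha_j \mapsto \bigl((1-D)\alpha_j\bigr)^{d/(1-D)}$, no nontrivial reindexing of the singular contributions is possible.
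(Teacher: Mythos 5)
Your proposal follows, at its core, the same strategy as the paper's: the heart of both arguments is the complex-analytic identifiability of $x_2(0),\ldots,x_n(0)$ from the $x_1$-trajectory, established through the closed-form solutions of $x_j'=\alpha_j x_j^D$ (linear independence of exponentials for $D=1$, branch-point/Puiseux matching for $D>1$, with distinctness and primality of the $\alpha_j$ doing the same work). The packaging of the reduction differs. The paper converts analytic identifiability into the algebraic statement that there exist $q_j,r_j\in\mathbb{C}[x_1^{(\leqslant n)}]$ with $q_j\notin I_{\balpha}$ and $q_jx_j-r_j\in I_{\balpha}$ (via \cite[Proposition~3.4]{hong2020global} and Ritt reduction), and then evaluates these relations at the two solutions, using irreducibility of $f_{\min}$ to guarantee $q_j$ does not vanish at the common $x_1^{(\ell)}$-coordinate. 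You instead formulate generic injectivity of $\Phi$ over $\mathbb{C}$ (your separant-propagation step is essentially a self-contained proof of the direction of that proposition which is needed) and then move it to $\overline{F}_\ell$. This route is workable, but the transfer step as written is shaky: you cannot freely ``specialize the transcendentals to sufficiently generic values,'' because the solution points are algebraic functions of those transcendentals and you must check that distinctness and membership in the good locus survive specialization. A cleaner fix: both solution points are generic points of $\mathbb{A}^n$ over $\KK$ (their coordinate fields have transcendence degree $n$, since $\Phi$ maps them onto $n$ algebraically independent elements of $F_\ell$), so they avoid every proper closed subset defined over $\overline{\KK}$, and the injectivity statement, being first-order with parameters in $\overline{\KK}$, transfers to $\overline{F}_\ell$ by completeness of the theory of algebraically closed fields of characteristic zero — the same device the paper uses in Lemma~\ref{lem::Puiseux}.

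The one substantive gap is in the step you yourself flag. For $D>1$, matching branch points gives only $(x_j^\ast)^{1-D}=(\tilde x_{\sigma(j)}^\ast)^{1-D}$ for some bijection $\sigma$, and matching the full principal parts of the Puiseux expansions gives $(X_j(t)+1)^d=(\tilde X_{\sigma(j)}(t)+1)^d$, i.e.\ $X_j=\omega\tilde X_{\sigma(j)}+(1-\omega)$ for some $d$-th root of unity $\omega$; neither by itself yields $x_j^\ast=\tilde x_j^\ast$. The paper closes this by observing that $\omega\tilde X_{\sigma(j)}+(1-\omega)$ can solve $x'=\alpha_j x^D$ only if $\omega=1$ and $\sigma(j)=j$; your argument needs this (or an equivalent) final disambiguation of the roots of unity.
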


The proof of Lemma~\ref{lem:distinct} will use the concept of \emph{identifiability} from control theory. 
Here we give a specialization of the general analytic definition~\cite[Definiton~2.5]{hong2020global} to the class of systems we consider.
\begin{defin}\label{def:identifiability}
    Let $\bx' = \bg(\bx)$ be a polynomial ODE system, where $\bx = \tuple{x_1, \ldots, x_n}$ and $\bg \in \mathbb{C}[\bx]^n$.
    Then the initial condition $x_i(0)$ is said to be \emph{identifiable} from $x_j$ if there exists {\Rb a nonempty} Zariski open $U \subset \mathbb{C}^n$ such that, for every solution $\mathbf{X}(t)$ of the system analytic at $t = 0$ with $\mathbf{X}(0) \in U$ and any other solution $\widetilde{\mathbf{X}}(t)$ analytic at $t = 0$ the equality $X_j(t) = \widetilde{X}_j(t)$ in a neighbourhood of $t = 0$ implies $X_i(0) = \widetilde{X}_i(0)$.
\end{defin}

The convenience of the notion of identifiability for us is that, while this property can be established by analytic means, it allows to deduce purely algebraic consequences.

\begin{lemma}\label{lem:identifiability_to_algebra}
    In the notation of Definition~\ref{def:identifiability}, if $x_i(0)$ is identifiable from $x_j$, then there exist polynomials $q, r \in \mathbb{C}[x_j^{(\leqslant n)}]$ such that $q\not\in I_{\bg}$ and $q x_i - r \in I_{\bg}$.
\end{lemma}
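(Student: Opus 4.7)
The plan is to convert the analytic identifiability of $x_i(0)$ from $x_j$ into the algebraic statement that $x_i \in \mathbb{C}(\bx)$ lies in the subfield generated over $\mathbb{C}$ by $y_0, y_1, \ldots, y_n$, where $y_k := \mathcal{L}_{\bg}^k(x_j) = \mathcal{R}_{\bg}(x_j^{(k)})$. By Lemma~\ref{lem::substitution}, the map $\mathcal{R}_{\bg}$ induces an isomorphism $\mathbb{C}[\bx^{(\infty)}]/I_{\bg} \to \mathbb{C}[\bx]$. Once this membership is granted, I will write $x_i = r(y_0, \ldots, y_n)/q(y_0, \ldots, y_n)$ for some $q, r \in \mathbb{C}[z_0, \ldots, z_n]$ with $q(y_0, \ldots, y_n) \neq 0$ in $\mathbb{C}[\bx]$, clear denominators inside $\mathbb{C}[\bx]$, and pull the identity back through Lemma~\ref{lem::substitution} to conclude that $q(x_j, \ldots, x_j^{(n)}) x_i - r(x_j, \ldots, x_j^{(n)}) \in I_{\bg}$ while $q(x_j, \ldots, x_j^{(n)}) \notin I_{\bg}$, as required.

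To establish the algebraic reduction I first unpack identifiability via Taylor coefficients. For an analytic solution $\mathbf{X}(t)$ with $\mathbf{X}(0) = \ba$, iterating the ODE yields $X_j^{(k)}(0) = y_k|_{\bx = \ba}$, so the condition $X_j(t) = \widetilde{X}_j(t)$ in a neighborhood of $t = 0$ is equivalent to $y_k|_{\ba} = y_k|_{\tilde{\ba}}$ for every $k \geqslant 0$. I then reduce this infinite condition to a finite one using that the minimal polynomial $f_{\min}$ of $I_{\bg} \cap \mathbb{C}[x_j^{(\infty)}]$ has order $\mu \leqslant n$ (by~\cite[Theorem 3.16 and Corollary 3.21]{hong2020global}, as already invoked at the beginning of the proof of Theorem~\ref{theorem_general}): iteratively differentiating the relation $f_{\min}(y_0, \ldots, y_\mu) = 0$ in $\mathbb{C}[\bx]$ and isolating the highest-order term via the separant (in the spirit of Proposition~\ref{prop::idealstructure}) expresses each $y_k$ with $k > n$ as a rational function of $y_0, \ldots, y_n$. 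Consequently, outside a proper Zariski closed set $Z \subset \mathbb{C}^n$, the whole sequence $(y_k|_{\ba})_{k \geqslant 0}$ is determined by the finite tuple $(y_0|_{\ba}, \ldots, y_n|_{\ba})$, and combined with identifiability this yields that for generic $\ba$, the value $a_i$ is determined by $\psi(\ba) := (y_0|_{\ba}, \ldots, y_n|_{\ba})$.

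Finally, I convert this fiberwise uniqueness into membership in the fraction field. Let $V$ be the Zariski closure of the image of $\psi\colon \mathbb{C}^n \to \mathbb{C}^{n+1}$; it is irreducible, with function field $\mathbb{C}(V) = \mathbb{C}(y_0, \ldots, y_n)$ viewed inside $\mathbb{C}(\bx)$ via $\mathcal{R}_{\bg}$. Since $x_i$ is constant on the generic fiber of $\psi$, the rule $\psi(\ba) \mapsto a_i$ defines a well-defined rational function on a dense open subset of $V$, and therefore $x_i \in \mathbb{C}(V) = \mathbb{C}(y_0, \ldots, y_n)$. The main obstacle will be making this final ``constant on generic fiber implies membership in the function field'' step fully rigorous; I expect to handle it either by a direct dimension-theoretic argument on the generic fiber, or by verifying that the minimal polynomial of $x_i$ over $\mathbb{C}(y_0, \ldots, y_n)$ has degree one, since any additional root would provide a second value of $a_i$ over the same $\psi(\ba)$ for generic $\ba$, contradicting the single-valuedness established above.
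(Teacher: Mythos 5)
Your route is genuinely different from the paper's. The paper disposes of the main implication in one line by citing \cite[Proposition~3.4]{hong2020global} (identifiability of $x_i(0)$ from $x_j$ yields $q,r\in\mathbb{C}[x_j^{(\infty)}]$ with $q\notin I_{\bg}$ and $qx_i-r\in I_{\bg}$) and then spends its effort on the order reduction: Ritt reduction of $qx_i-r$ modulo the minimal polynomial $f$ of $I_{\bg}\cap\mathbb{C}[x_j^{(\infty)}]$, using $\ord f\leqslant n$, to land in $\mathbb{C}[x_j^{(\leqslant n)}]$. You instead re-derive the cited implication from scratch, and a pleasant side effect of your parametrization by $y_0,\dots,y_n=\mathcal{L}_{\bg}^{\leqslant n}(x_j)$ is that the output polynomials automatically have order at most $n$, so no reduction step is needed at the end. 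Your intermediate steps are sound: the Taylor-coefficient translation of $X_j\equiv\widetilde X_j$ into $y_k|_{\ba}=y_k|_{\tilde\ba}$ for all $k$, the truncation to $k\leqslant n$ via repeated application of $\mathcal{L}_{\bg}$ to $f_{\min}(y_0,\dots,y_\mu)=0$ and division by the separant (which is not in $I_{\bg}$, hence nonzero in $\mathbb{C}[\bx]$ by Lemma~\ref{lem::substitution}), and the final pullback through $\mathcal{R}_{\bg}$ are all correct.

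The gap is the one you flag yourself: the implication ``$x_i$ is constant on the generic fiber of $\psi$ $\Rightarrow$ $x_i\in\mathbb{C}(y_0,\dots,y_n)$'' is not proved, only two candidate strategies are named. This step is precisely the content of the proposition the paper cites, so as written you have replaced a citation by a promissory note. Your second strategy (degree-one minimal polynomial) is the right one and does work in characteristic zero, but to carry it out you must (i) first show $x_i$ is \emph{algebraic} over $\mathbb{C}(y_0,\dots,y_n)$ (if it were transcendental it would be non-constant on the generic fiber, which already uses the fiberwise statement), and (ii) justify that a second root of its minimal polynomial over $\mathbb{C}(y_0,\dots,y_n)$ is actually realized as the $x_i$-coordinate of a point $\tilde\ba$ lying in the fiber over a \emph{general} $\psi(\ba)$ and outside your bad locus $Z$ (spreading out the conjugates over a dense open subset of $V$); only then does single-valuedness give a contradiction. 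There is also a minor imprecision to tidy: your fiberwise uniqueness holds only for $\tilde\ba\notin Z$, so you are asserting constancy on the generic fiber minus $Z$, and you should check $Z$ does not swallow a component of the generic fiber. None of these are fatal --- this is the standard proof of the cited result --- but until they are written out the proposal is an outline rather than a proof.
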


\begin{proof}
    By~\cite[Proposition~3.4]{hong2020global}, identifiability of $x_i(0)$ from $x_j$ implies that there exist $q, r \in \mathbb{C}[x_j^{(\infty)}]$ such that $q \not\in I_{\bg}$ and $qx_i - r \in I_{\bg}$.
    Let $f$ be the minimal polynomial of the elimination ideal $I_{\bg} \cap \mathbb{C}[x_j^{(\infty)}]$.
    We have $\ord_{x_j}f \leqslant n$ by \cite[Theorem~3.16 and Corollary~3.21]{hong2020global}.
    Therefore, by performing Ritt's reduction~\cite[Section~3.1]{Boulier2} of $qx_i {\Rb -} r$ with respect to $f$, we find $\tilde{q}, \tilde{r} \in \mathbb{C}[x_j^{(\leqslant n)}]$ such that $\tilde{q} \not\in I_{\bg}$ and $\tilde{q}x_i - \tilde{r} \in I_{\bg}$.
\end{proof}

It turns out that our system~\eqref{eq:sigma} does possess this property.

\begin{lemma}\label{lem:identifiability}
    Assume that the coordinates of $\balpha\in \mathbb{Q}^{n - 1}$ are distinct prime numbers greater than $d$.
    Then, in the system $\bx' = \bg
_{\balpha}(\bx)$ from~\eqref{eq:sigma}, considered over $\KK = \mathbb{C}$, the initial conditions $x_2(0), \ldots, x_n(0)$ are identifiable from~$x_1$.
\end{lemma}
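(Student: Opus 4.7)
The plan is to exploit the triangular, decoupled shape of~\eqref{eq:sigma}: for each $i \geqslant 2$, the equation $x_i' = \alpha_i x_i^D$ is autonomous and scalar, hence admits a closed form solution. I would integrate it explicitly: for $D = 1$, $X_i(t) = X_i(0)\, e^{\alpha_i t}$; for $D \geqslant 2$, separation of variables gives $X_i(t) = X_i(0)(1 - \gamma_i t)^{-1/(D-1)}$ with $\gamma_i := (D-1)\alpha_i X_i(0)^{D-1}$, so $X_i$ is analytic at the origin but has its unique branch point (or pole) at $t = 1/\gamma_i$. I would take the Zariski open set $U \subset \mathbb{C}^n$ cut out by $\prod_{i=2}^n x_{i,0} \neq 0$ and $\prod_{2 \leqslant i < j \leqslant n}\bigl(\alpha_i x_{i,0}^{D-1} - \alpha_j x_{j,0}^{D-1}\bigr) \neq 0$, ensuring that all $\gamma_i$ are nonzero and pairwise distinct.

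Suppose $\widetilde{\mathbf{X}}(t)$ is any other analytic solution of~\eqref{eq:sigma} with $\widetilde{X}_1 \equiv X_1$ near $0$. Differentiating the first ODE and subtracting produces the key identity
\[
\sum_{i=2}^n \bigl[(X_i + 1)^d - (\widetilde{X}_i + 1)^d\bigr] \equiv 0,
\]
which, after binomial expansion and substitution of the closed forms, becomes a vanishing linear combination of elementary building blocks of the form $(1 - \mu t)^{-k/(D-1)}$ for $\mu \in \{\gamma_i\} \cup \{\widetilde{\gamma}_j\}$ and $1 \leqslant k \leqslant d$ (respectively, exponentials $e^{k\alpha_i t}$ when $D = 1$), with coefficients proportional to $x_{i,0}^k$ and $\widetilde{x}_{j,0}^k$.

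The crux is a linear-independence statement for these building blocks. For $D = 1$ the exponents $k\alpha_i$ with $1 \leqslant k \leqslant d$ and $2 \leqslant i \leqslant n$ are all distinct, because the $\alpha_i$ are distinct primes strictly greater than $d$; hence the exponentials are linearly independent over $\mathbb{C}$, every coefficient must vanish, and the $k = 1$ relations immediately yield $\widetilde{x}_{i,0} = x_{i,0}$. For $D > 1$ I would localize at each candidate branch point $1/\mu$: among the summands only those with $\mu$ in the first slot are singular there, their exponents $-k/(D-1)$ are pairwise distinct (and all negative) in $k$, and so in a punctured neighbourhood they are linearly independent, while the remaining summands are analytic at $1/\mu$ and contribute nothing. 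Thus the singular coefficients must cancel at each $\mu$ separately.

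This reduces the identity to power-sum equalities $x_{i_0,0}^k = \sum_{j:\widetilde{\gamma}_j = \gamma_{i_0}} \widetilde{x}_{j,0}^k$ for $k = 1, \ldots, d$, together with a symmetric argument ruling out unmatched $\widetilde{\gamma}_j$'s. Using $k = 1, 2$ and Newton's identities, the right-hand sums collapse to a single nonzero term, producing a bijection $\sigma$ on $\{2, \ldots, n\}$ with $\widetilde{x}_{\sigma(i),0} = x_{i,0}$ and $\widetilde{\gamma}_{\sigma(i)} = \gamma_i$; substituting the first relation into the second then forces $\alpha_{\sigma(i)} = \alpha_i$, and distinctness of the primes yields $\sigma = \mathrm{id}$, hence $\widetilde{x}_{i,0} = x_{i,0}$ for every $i \geqslant 2$. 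The main obstacle is the case $D > 1$, where the non-integer branch exponents and possible multiplicities among the $\widetilde{\gamma}_j$'s must be handled carefully; both difficulties are overcome by the local-singularity analysis together with the prime-distinctness of the $\alpha_i$.
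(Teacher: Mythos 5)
Your overall strategy coincides with the paper's: both proofs integrate the decoupled equations $x_i' = \alpha_i x_i^D$ in closed form, pass to $F(t) = \sum_{i\geqslant 2}(X_i+1)^d = \sum_{i\geqslant 2}(\widetilde X_i+1)^d$, handle $D=1$ by linear independence of exponentials (using that the $\alpha_i$ are distinct primes exceeding $d$, so the frequencies $k\alpha_i$, $1\leqslant k\leqslant d$, are pairwise distinct), and handle $D>1$ by comparing singular parts at the branch points $1/\gamma_i$. Where you diverge is the endgame for $D>1$: the paper matches a single index $j$ to each branch point, equates principal parts to get $(X_2+1)^d=(\widetilde X_j+1)^d$, and then uses the ODE to exclude nontrivial $d$-th roots of unity; you instead extract the coefficients of $(1-\gamma_{i_0}t)^{-k/(D-1)}$ termwise to obtain power-sum identities. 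That variant is perfectly workable and avoids the root-of-unity discussion.

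However, one step fails as written: the claim that ``using $k=1,2$ and Newton's identities, the right-hand sums collapse to a single nonzero term.'' If $m$ of the $\widetilde\gamma_j$ coincide with a given $\gamma_{i_0}$, the conditions $p_1=c$, $p_2=c^2$ on $m$ nonzero numbers only force $e_2=0$; for $m\geqslant 3$ the higher elementary symmetric functions are unconstrained (e.g.\ the roots of $t^3-ct^2-\epsilon$ are all nonzero and satisfy $p_1=c$, $p_2=c^2$), and for $d=1$ you do not even have the $k=2$ relation, so $m=2$ cannot be excluded this way either. The gap is easy to close by counting rather than by symmetric functions: since $x_{i,0}\neq 0$ on $U$, the coefficient $x_{i,0}^d$ of the most singular block at $1/\gamma_i$ is nonzero, so each of the $n-1$ pairwise distinct values $\gamma_2,\dots,\gamma_n$ must be matched by at least one $\widetilde\gamma_j$; as there are only $n-1$ indices $j$, the matching is a bijection and every multiplicity is $1$ (in particular there are no unmatched $\widetilde\gamma_j$). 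Then the $k=1$ relation alone gives $\widetilde x_{\sigma(i),0}=x_{i,0}$, and $\widetilde\gamma_{\sigma(i)}=\gamma_i$ forces $\alpha_{\sigma(i)}=\alpha_i$, hence $\sigma=\mathrm{id}$. With this repair your argument is complete.
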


\begin{proof}
    Due to the symmetry, it is sufficient to prove the identifiability of $x_2(0)$.
    To this end, we first observe that the complex-valued solutions of an equation $x' = \alpha x^D$ with $\alpha \neq 0$ analytic at $t = 0$ are the following:
    \begin{equation}\label{eq:analytic_solution}
    X(t) = \begin{cases}
        X(0) e^{{\Rb \alpha t}}, \text{ if } D = 1,\\
        (\alpha (1 - D) t + X(0)^{1 - D})^{\frac{1}{1 - D}}, \text{ if } D > 1 \text{ and } x(0) \neq 0,\\
        0, \text{ if } X(0) = 0.
    \end{cases}
    \end{equation}
    In order to verify Definition~\ref{def:identifiability}, we set the open $U \subset \mathbb{C}^n$ to be a set of vectors $\tuple{v_1, \ldots, v_n}$ with nonzero coordinates such that the numbers $\frac{v_2^{1 - D}}{\alpha_2}, \ldots, \frac{v_n^{1 - D}}{\alpha_n}$ are pairwise distinct.
    Consider any solutions $\bX(t)$ and $\widetilde{\bX}(t)$ of {\Rb the system}  $\Rb \bx' = \bg
_{\balpha}(\bx)$ in the ring of $\mathbb{C}$-valued functions locally analytic at $t = 0$ such that $\bX(0) \in U$ and $X_1(t) = \widetilde{X}_1(t)$ in a neighbourhood of $t  = 0$.
    Then we have $X_1'(t) - X_1(t)^d = \widetilde{X}_1'(t) - \widetilde{X}_1(t)^d$, so
    \begin{equation}\label{eq:equalityX_Xt}
    (X_2(t) + 1)^d + \ldots + (X_n(t) + 1)^d = (\widetilde{X}_2(t) + 1)^d + \ldots + (\widetilde{X}_n(t) + 1)^d.
    \end{equation}
    We denote $F(t) := (X_2(t) + 1)^d + \ldots + (X_n(t) + 1)^d$ and consider the cases $D = 1$ and $D > 1$ separately.
    
    \emph{Case $D = 1$.} Then $F(t)$ is a linear combination of exponential functions. 
    Since $\alpha_i > d$, among all the brackets in~\eqref{eq:equalityX_Xt}, the term $e^{\alpha_2t}$ can occur only from $(X_2(t) + 1)^d$ and $(\widetilde{X}_2(t) + 1)^d$.
    The coefficient in {\Rb the front of this term} will be equal to $d X_2(0)$ and $d \widetilde{X}_2(0)$, respectively.
    Due to the linear independence of the exponential functions with different growth rates these coefficients must be equal, so $X_2(0) = \widetilde{X}_2(0)$ as desired.

    \emph{Case $D > 1$.}
    By~\eqref{eq:analytic_solution}, the analytic continuation of $F(t)$ is a {\Rb multivalued} function with the branching points exactly at $-\frac{X_2(0)^{1 - D}}{\alpha_2 (1 - D)}, \ldots, -\frac{X_n(0)^{1 - D}}{\alpha_n (1 - D)}$, and, by the choice of $U$, these points are distinct.
    Therefore, $-\frac{\widetilde{X}_2(0)^{1 - D}}{\alpha_2 (1 - D)}, \ldots, -\frac{\widetilde{X}_n(0)^{1 - D}}{\alpha_n (1 - D)}$ must be a permutation of this set of points.
    Thus, these exists a unique $2 \leqslant j \leqslant n$ such that $C:= -\frac{X_2(0)^{1 - D}}{\alpha_2(1 - D)} = -\frac{\widetilde{X}_j(0)^{1 - D}}{\alpha_j (1 - D)}$.
    Then the principal (i.e., terms with negative degrees) part of the Puiseux expansion of $F(t)$ at $t = C$ will be equal to $(X_2(t) + 1)^d - 1$ on one hand and to $(\widetilde{X}_j(t) + 1)^d - 1$ on the other.
    This implies
    \[
    (X_2(t) + 1)^d = (\widetilde{X}_j(t) + 1)^d \implies X_2(t) = \omega \widetilde{X}_j(t) + (1 - \omega)
    \]
    for some $d$-th root of unity $\omega$.
    Since $\omega \widetilde{X}_j(t) + (1 - \omega)$ can be a solution of $x' = \alpha_2 x_j^D$ only for $\omega = 1$ and $j = 2$, we conclude that $X_2(t) = \widetilde{X}_2(t)$.
\end{proof}

\begin{proof}[Proof of Lemma~\ref{lem:distinct}]
    While we allow arbitrary $\KK$ of zero characteristic, the polynomial system is in fact defined over $\mathbb{Q}(x_1, \ldots, x_1^{(i - 1)}, x_1^{(i + 1)}, \ldots, x_1^{(n)})$, and the solutions will belong to the algebraic closure of this field.
    Therefore, proving the lemma for any fixed field $\KK$ would prove it for all fields, and we will choose $\KK = \mathbb{C}$.

    Combining Lemmas~\ref{lem:identifiability_to_algebra} and~\ref{lem:identifiability}, we conclude {\Rb that} there exist $q_2, \ldots, q_n, r_2, \ldots, r_n \in \mathbb{C}[x_1^{(\leqslant n)}]$ such that none of $q_2, \ldots, q_n$ belongs to $I_{\balpha}$ and $q_i x_i - r_i \in I_{\balpha}$ for every $2 \leqslant i \leqslant n$.

    Consider any $0 \leqslant \ell \leqslant n - 1$.
    Let $\tuple{a, \hat{x}_2, \ldots, \hat{x}_n}$ and $\tuple{a, \tilde{x}_2, \ldots, \tilde{x}_n}$ be two distinct solutions of $p_{\balpha, 1} = \cdots = p_{\balpha, n} = 0$ as polynomials in $x_1^{(\ell)}, x_2, \ldots, x_n$ over $\overline{F_\ell}$ with coinciding $x_1^{(\ell)}$-coordinate.
    Since the solutions are distinct, there exists  $2 \leqslant j \leqslant n$ with $\hat{x}_j \neq \tilde{x}_j$.
    {\Rb We define $b := q_j|_{x_1^{(\ell)} = a}$, and proceed to show that $b \neq 0$}. Assume for the contradiction that $b = 0$.
    Then $q_j$ and $f_{\min}$ of $I_{\balpha}$ have a common root as polynomials in $F_\ell[x_1^{(\ell)}]$.
    Since $f_{\min}$ is irreducible, it would imply that $q_j$ is divisible by $f_{\min}$ which is impossible due to $q_j \not\in I_{\balpha}$.
    Therefore, plugging our two solutions in $\Rb q_j x_j - r_j$ and using the fact that $\ideal{p_{\balpha, 1}, \ldots, p_{\balpha, n}} = I_{\balpha} \cap \mathbb{C}[x_1^{(\leqslant n)}, x_2, \ldots, x_n]$ (Lemma~\ref{lem:ideal_is_wanted}), we have
    \[
    0 = b \hat{x}_j {\Rb - } r_j|_{x_1^{(\ell)} = a} = b\tilde{x}_j {\Rb - } r_j|_{x_1^{(\ell)} = a}.
    \]
    Together with $b \neq 0$, this implies $\hat{x}_j = \tilde{x}_j$ leading to a contradiction with the existence of distinct solutions with the same $x_1^{(\ell)}$-coordinate.
\end{proof}

We can now combine the established properties of $\bx' = \bg_{\balpha}(\bx)$ from~\eqref{eq:sigma} for proving the sharpness of our bound.

\begin{proof}[Proof of Theorem~\ref{thm:sharp}]
   Let $U$ be the Zariski open set from Lemma~\ref{lem:sharp:count}.
   Since {\Ra the set of tuples of distinct prime numbers is Zariski dense in $\mathbb{Q}^{n-1}$},
   we can choose $\balpha \in \mathbb{Q}^{n - 1}$ such that $\balpha \in U$ and the coordinates of $\balpha$ are prime numbers greater than $d$.
   We will prove that the bound from Theorem~\ref{theorem_general_specialized} is achieved on $\bx' = \bg_{\balpha}(\bx)$.

    Consider $f_{\min}$ for the elimination ideal $I_{\alpha} \cap \mathbb{K}[x_1^{(\infty)}]$.
    By Lemma~\ref{lem:ideal_is_wanted}, $f_{\min}$ belongs to the ideal generated by $p_{\balpha, 1}, \ldots, p_{\balpha, n}$.
    Fix any $0 \leqslant \ell \leqslant n$.
    If we consider $f_{\min}$ as a polynomial in $F_{\ell}[x_1^{(\ell)}]$ (see~\eqref{eq:field}), then it must vanish on the roots of the system $p_{\balpha, 1} = \cdots = p_{\balpha, n} = 0$ in $\overline{F}_\ell$.
    Therefore, its degree in $\Rb x_1^{(\ell)}$ must be greater or equal than the number of distinct $x_1^{(\ell)}$-coordinates of the solutions of the system.
    Lemmas~\ref{lem:sharp:count} and~\ref{lem:distinct} imply that this number is equal to $N_\ell$.
    Thus, $\deg_{x_1^{(\ell)}} f_{\min} \geqslant N_\ell$.
    On the other hand, Theorem~\ref{theorem_general_specialized} implies that the only monomial of degree $N_\ell$ in $x_1^{(\ell)}$ which can appear in $f_{\min}$ is $(x_1^{(\ell)})^{N_\ell}$.
    Therefore, this monomial does appear with a nonzero coefficient.

    For $\varepsilon \in \KK$, we define $\bg_{\balpha, \varepsilon}$ as the result of applying a transformation $x_1 \to x_1 + \varepsilon$ to $\bg_{\balpha}$.
    Since this transformation is invertible, it maps the minimal polynomial of $\bg_{\balpha}$ to the minimal polynomial of $\bg_{\balpha, \varepsilon}$.
    That is, the latter is equal to $f_{\min, \varepsilon} := f_{\min}(x_1 + \varepsilon, x_1', \ldots, x_1^{(n)})$.
    We have proved that $f_{\min}$ contains a monomial $x_1^{N_0}$.
    Therefore, the constant term of $f_{\min, \varepsilon}$ considered as a polynomial in $x_1^{(\infty)}$ is a nonzero polynomial in $\varepsilon$.
    Thus, there exists $\varepsilon^\ast \in \KK$ such that $f_{\min, \varepsilon^\ast}$ has a nonzero constant term.
    So the Newton polytope of $f_{\min, \varepsilon^\ast}$ is exactly the simplex defined by~\eqref{eq:bound_1}.
    This already shows that the bound~\eqref{eq:bound_1} is sharp.

    In order to prove the generic sharpness as stated in Theorem~\ref{thm:sharp}, we will apply Proposition~\ref{prop:specific_to_generic}.
    We will take $\mathcal{P}_1$ (resp., $\mathcal{P}_2, \ldots, \mathcal{P}_n$) to be the simplex containing all the points with the sum of the coordinates not exceeding $d$ (resp., $D$).
    Then $V(\mathcal{P}_1)$ (resp., $\mathcal{P}_2, \ldots, \mathcal{P}_n$) will be equal to $V_d$ (resp., $V_D$) in the notation of Theorem~\ref{thm:sharp}.
    We also take $\bg^\ast = \bg_{\balpha, \varepsilon^\ast}$, and denote the Newton polytope of its minimal polynomial by $\mathcal{N}^\ast$.
    Then by Proposition~\ref{prop:specific_to_generic} there exists $U \subset V_d \times V_D^{n - 1}$ such that, for every $\bg \in U$, the Newton polytope $\mathcal{N}$ of the minimal polynomial for $x_1$ contains a shift of $\mathcal{N}^\ast$.
    On the other hand, by Theorem~\ref{theorem_general_specialized}, we have $\mathcal{N} \subset \mathcal{N}^\ast$, so $\mathcal{N} = \mathcal{N}^\ast$.
\end{proof}


\subsection{Proof of Theorem~\ref{thm:2d}} \label{sec:planar_sharpness}

We state and prove the following auxiliary lemma in full generality but we will use it only in the planar case.

\begin{lemma} \label{lem03}
    Let $p_1, \ldots, p_n$ be polynomials of degrees $d_1, \ldots, d_n$ in $\mathbb{K}[\mathbf{x}, \mathbf{y}]$ where $\bx = \tuple{x_1, \ldots, x_{n - 1}}$ and $\by = \tuple{y_1, \ldots, y_k}$.
    Let $\overline{p}_i$ denote the homogeneous component of degree $d_i$ in $p_i$, and assume that  $\overline{p}_i \in \mathbb{K}[\mathbf{x}]$ for every $1 \leqslant i \leqslant n$.
    Suppose that the ideal $I = \ideal{p_1, \ldots, p_n} \cap \mathbb{K}[\mathbf{y}]$ is principal, that is,
    $I = \ideal{g}$,
    {\Rb and assume that its generator} $g = g(\mathbf{y})$ {\Rb is} a nonzero irreducible polynomial. Then 
    $$\Res_{\mathbf{x}}(p_1, \ldots, p_n) = c\cdot g^m, \text{ for some } c \in \mathbb{K}, m \in \mathbb{Z}_{> 0}.$$
    \end{lemma}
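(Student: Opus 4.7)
The plan is to interpret $R := \Res_{\bx}(p_1, \ldots, p_n)$ geometrically, identify its vanishing locus with $V(g)$, and then conclude via the Nullstellensatz using the irreducibility of $g$.

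First, I would homogenize each $p_i$ with respect to $\bx$ by introducing a fresh variable $x_0$, obtaining $\widetilde{p}_i(x_0, \bx, \by) \in \KK[\by][x_0, \bx]$ homogeneous of degree $d_i$ in $(x_0, \bx)$. The assumption $\overline{p}_i \in \KK[\bx]$ is equivalent to $\widetilde{p}_i(0, \bx, \by) = \overline{p}_i(\bx)$, i.e., the locus at infinity does not depend on $\by$. Viewing the $\widetilde{p}_i$ as $n$ homogeneous polynomials in the $n$ variables $x_0, \bx$ with coefficients in $\KK[\by]$, the classical Macaulay resultant of this system equals $R \in \KK[\by]$; by its defining property, for every $\by_0 \in \overline{\KK}^k$ we have $R(\by_0) = 0$ iff $\widetilde{p}_1(\cdot,\cdot,\by_0), \ldots, \widetilde{p}_n(\cdot,\cdot,\by_0)$ admit a common zero in $\mathbb{P}^{n-1}(\overline{\KK})$.

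Second, if $\overline{p}_1, \ldots, \overline{p}_n$ share a nontrivial common zero, then every specialization has a solution at infinity, so $R \equiv 0$ and the statement holds with $c = 0$. Otherwise, any common projective zero of the specialized system must lie in the affine chart $\{x_0 \neq 0\}$ and hence corresponds to a common affine zero of the $p_i(\bx, \by_0)$. Thus $V(R) = \pi(V(p_1, \ldots, p_n))$ as subsets of $\overline{\KK}^k$, where $\pi$ is the projection onto the $\by$-coordinates. By the Closure Theorem \cite[Chapter 4, §4, Theorem~3]{cox1997ideals} applied to $\ideal{p_1, \ldots, p_n} \cap \KK[\by] = \ideal{g}$, the closure $\overline{\pi(V(p_1, \ldots, p_n))}$ equals $V(g)$; since $V(R)$ is already Zariski closed, this yields $V(R) = V(g)$ over $\overline{\KK}$.

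Third, since $g$ is irreducible in $\KK[\by]$ and $\charr \KK = 0$, the ideal $\ideal{g}$ is prime and radical (and $g$ is separable in $\overline{\KK}[\by]$). By the Nullstellensatz, $V(R) = V(g)$ forces $\sqrt{\ideal{R}} = \ideal{g}$, and since $\KK[\by]$ is a UFD this compels $R = c \cdot g^m$ with $c \in \KK^\ast$ and $m \geq 1$ (positivity of $m$ holds because $V(g)$ is a nonempty proper subvariety of $\overline{\KK}^k$, so $R$ cannot be a nonzero constant). The main obstacle is the second step: the leading-form hypothesis is precisely what rules out extraneous points at infinity inflating $V(R)$ beyond the affine projection image. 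A secondary technical point is descending the factorization from $\overline{\KK}[\by]$ to $\KK[\by]$, which follows from the fact that irreducible factors of $R \in \KK[\by]$ over $\overline{\KK}$ form Galois orbits that must coincide with the orbit of the irreducible factors of $g$.
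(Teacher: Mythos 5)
Your proposal is correct and follows essentially the same route as the paper's proof: homogenize in $\bx$, invoke the vanishing characterization of the multivariate resultant, use the hypothesis $\overline{p}_i \in \KK[\bx]$ to split off the degenerate case where common zeros at infinity force $R \equiv 0$, identify $V(R)$ with $V(g)$ via the Closure Theorem, and conclude by the Nullstellensatz together with unique factorization in $\KK[\by]$. The only cosmetic difference is that the paper works the final divisibility step directly in $\KK[\by]$ (writing $g^{m_1} = q \cdot \Res$ and factoring $q$), which sidesteps the Galois-descent remark you make.
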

    \begin{proof}

    We homogenize $p_1, \ldots, p_n$ in $\mathbf{x}$ using an additional variable $z$ and obtain
    $p_1^h, \ldots, p_n^h$. 

    Consider $\beta = \tuple{\beta_1, \ldots, \beta_k} \in \overline{{\mathbb{K}}}^k$.
    By~\cite[Ch.3~Th.~2.3]{cox2005using} $\beta$ is a zero of $\Res_{\mathbf{x}, z}(p_1^h, \ldots, p_n^h)$ if and only if there exists $\alpha = [\alpha_1: \ldots: \alpha_n] \in \mathbb{P}^{n - 1}$ such that $(\alpha, \beta)$ is a common zero of $p_1^h, \ldots, p_n^h$.

    If $\alpha_n = 0$, then $[\alpha_1 : \ldots : \alpha_{n - 1}]$ is a common zero for $\overline{p}_1, \ldots, \overline{p}_n$.
    Since $\overline{p}_1, \ldots, \overline{p}_n$ do not depend on $\mathbf{y}$, $(\alpha, \gamma)$ is a common root of $p_1^h, \ldots, p_n^h$ for every $\gamma \in \overline{\mathbb{K}}^k$.
    Therefore, $\Res_{\mathbf{x}, z} (p_1^h, \ldots, p_n^h)$ is identically zero, so is $\Res_{\mathbf{x}}(p_1, \ldots, p_n)$.
    Thus, we can take $c = 0$ and $m = 1$.
    Therefore, {\Rb for the rest of the proof we will assume that
     $\overline{p}_1, \ldots, \overline{p}_n$
    have no common zero in $\mathbb{P}^{n-1}$, which means that any common zero 
    $(\alpha, \beta)$ of $p_1^h, \ldots, p_n^h$ in $\mathbb{P}^{n-1} \times \overline{\mathbb{K}}^k$ 
    must satisfy $\alpha_n \neq 0$.}
    
    
    Since $\alpha_n \neq 0$, then $\tuple{\dfrac{\alpha_1}{\alpha_n}, \ldots, \dfrac{\alpha_{n-1}}{\alpha_n}, \beta}$ is a common zero of $p_1, \ldots, p_n$.
    Thus, $\beta$ belongs to the projection of the solution set of $p_1 = \cdots = p_n = 0$.
    Therefore, the zero set of $\Res_{\mathbf{x}}(p_1, \ldots, p_n)$ coincides with this projection.
    
    By the elimination theorem \cite[Ch.~3, §~2, Th.~2]{cox1997ideals}, the zero set of $I$ is the closure of the projection, so, by the Hilbert's Nullstellensatz, $g^{m_1} \in \ideal{\Res_{\mathbf{x}}(p_1, \ldots, p_n)}$ for some $m_1 \in \mathbb{Z}_{> 0}$. 
    Hence, there exists $q \in \mathbb{K}[\mathbf{y}]$ such that

    \[
    g^{m_1} = q \cdot \Res_{\mathbf{x}}(p_1, \ldots, p_n).
    \]
    Since the factorization of $g^{m_1}$ is unique up to multiplication of the factors by invertible constants and $g$ is a nonzero irreducible polynomial, then $q = \Tilde{c} \cdot g^{m_2}$ for some $\Tilde{c} \in \mathbb{K}\backslash \{0\}, m_2 \in \mathbb{Z}_{\geqslant 0}$, and thus for $ m = m_1 - m_2, \; c = \frac{1}{\Tilde{c}}$ we have

    \[
    \Res_{\mathbf{x}}(p_1, \ldots, p_n) = c \cdot g^{m}.\qedhere
    \]
    \end{proof}

\begin{lemma} \label{lem5}
For the following polynomials $\bg = \tuple{g_1, g_2}$ in $\mathbb{K}[x_1, x_2] = \mathbb{K}[\mathbf{x}]$ of degrees $d_1, d_2 > 0$
    \begin{equation*}
            g_1(x_1, x_2)  = x_1^{d_1} + x_2^{d_1},\quad \text{ and }\quad
            g_2(x_1, x_2)  = x_2^{d_2} + 1,
    \end{equation*}
 the minimal polynomial $f_{\min}$ of $I_{\bg} \cap \KK[x_1^{(\infty)}]$ contains monomials $x_1^{d_1(d_1+d_2-1)}$ and $(x_1'')^{d_1}$.  
If $d_1 > d_2$ then $f_{\min}$ also contains the monomial $x_1^{d_1(d_1-1)}(x'_1)^{d_1}$. 
\end{lemma}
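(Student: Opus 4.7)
The plan is to realize $f_{\min}$, up to a nonzero scalar, as the univariate resultant $\Res_{x_2}(p_1, q)$ for a convenient pair of generators, and then extract each of the three coefficients by specializing subsets of the derivatives to zero. I take $p_1 := x_1' - g_1 = x_1' - x_1^{d_1} - x_2^{d_1}$ and, using $p_1 \equiv 0$ to substitute $x_2^{d_1} \mapsto x_1' - x_1^{d_1}$ in $\mathcal{L}_{\bg}(g_1)$, obtain
\[
q := x_1'' - d_1 x_1^{d_1-1} x_1' - d_1 x_2^{d_1-1}(x_2^{d_2}+1).
\]
By Lemma~\ref{lem:ideal_is_wanted}, $\ideal{p_1, q} = I_{\bg} \cap \KK[x_1, x_1', x_1'', x_2]$.

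The first step is to show $\Res_{x_2}(p_1, q) = c f_{\min}$ for some $c \in \KK^\ast$. Since the $x_2$-leading coefficients of $p_1$ and $q$ are nonzero constants, the zero locus of the resultant in $\overline{\KK}^3$ coincides with the projection of $V(p_1, q)$, which is $V(f_{\min})$, so irreducibility of $f_{\min}$ forces $\Res_{x_2}(p_1, q) = c f_{\min}^m$ with $m$ the generic fiber size of the map $\varphi\colon (x_1, x_2) \mapsto (x_1, g_1, \mathcal{L}_{\bg}(g_1))$. To see $m = 1$, observe that a second preimage must be of the form $(x_1, \omega x_2)$ with $\omega^{d_1} = 1$ (from $g_1$ matching); for $\mathcal{L}_{\bg}(g_1)$ also to match generically one needs $\omega^{d_1 - 1} = 1$ and $\omega^{d_2-1} = 1$, which together with $\omega^{d_1} = 1$ force $\omega = \omega^{d_1}/\omega^{d_1-1} = 1$.

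With this identification, the first two monomial claims follow from easy specializations. For $x_1^{d_1(d_1+d_2-1)}$ I set $x_1' = x_1'' = 0$, so $p_1$ becomes $-x_2^{d_1} - x_1^{d_1}$ and $q$ becomes $-d_1 x_2^{d_1-1}(x_2^{d_2}+1)$; evaluating $\Res_{x_2}$ via the product-over-roots formula with $\alpha_i^{d_1} = -x_1^{d_1}$ and the explicit factorization of $\prod_i(\alpha_i^{d_2}+1)$ (depending on $\gcd(d_1,d_2)$) extracts the leading-in-$x_1$ coefficient $\pm d_1^{d_1}$. For $(x_1'')^{d_1}$ I set $x_1 = x_1' = 0$; then $p_1 = -x_2^{d_1}$ has $0$ as a $d_1$-fold root, so $\Res_{x_2}(-x_2^{d_1}, q|_{x_1=x_1'=0}) = (-1)^{d_1+d_2-1} q(0)^{d_1} = (-1)^{d_1+d_2-1}(x_1'')^{d_1}$.

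The harder case is $x_1^{d_1(d_1-1)}(x_1')^{d_1}$ when $d_1 > d_2$. Here I would set $x_1'' = 0$ and use the weighted degree $w(x_1) = 1$, $w(x_1') = d_1$, $w(x_2) = 1$, so that $p_1$ is $w$-homogeneous of degree $d_1$, while $q|_{x_1''=0}$ has $w$-degree $2d_1-1$ with the single top-degree term $-d_1 x_1^{d_1-1} x_1'$ (this is exactly where $d_1 > d_2$ enters: it makes that term strictly dominate all $x_2$-terms of $q|_{x_1''=0}$). A direct combinatorial analysis of the Sylvester matrix shows the $w$-maximum $d_1(2d_1-1)$ is attained by a unique permutation: every row coming from $p_1$ selects the entry $-1$ (weight $0$) while every row coming from $q$ selects the entry $-d_1 x_1^{d_1-1} x_1'$ (weight $2d_1-1$); any attempt to upgrade a $p_1$-row to the weight-$d_1$ entry $x_1' - x_1^{d_1}$ produces a column collision with the chosen $q$-row entries. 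This permutation contributes $\pm d_1^{d_1} x_1^{d_1(d_1-1)}(x_1')^{d_1}$, and no permutation of smaller weight can yield a monomial of $w$-degree $d_1(2d_1-1)$, so the coefficient of $x_1^{d_1(d_1-1)}(x_1')^{d_1}$ in $\Res_{x_2}(p_1, q|_{x_1''=0})$, and therefore in $f_{\min}|_{x_1''=0}$, is $\pm d_1^{d_1} \neq 0$. The main technical obstacle is precisely this uniqueness of the $w$-maximising permutation; verifying it requires carefully tracking which columns of the Sylvester matrix are blocked by the chosen $q$-rows so that no $p_1$-row can upgrade to the weight-$d_1$ entry.
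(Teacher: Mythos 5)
Your proposal is correct and follows essentially the same route as the paper: identify $f_{\min}$, up to a nonzero constant, with $\Res_{x_2}(p_1, \mathcal{L}_{\bg}^\ast(p_1))$ (your $q$ is exactly the paper's $p_2$), then extract each monomial by specialization and by the weighted degree $\wdeg x_1 = 1$, $\wdeg x_1' = d_1$. Your argument that the power $m$ equals $1$ (generic fiber of $(x_1,x_2)\mapsto(x_1,g_1,\mathcal{L}_{\bg}(g_1))$, forcing $\omega=1$) is the same computation the paper runs in the contrapositive via the Poisson factors $b(\alpha_i)$. The one genuine divergence is the last monomial: the paper applies the weight grading directly to the Poisson product $\prod_i\bigl(x_1''-d_1 b(\alpha_i)\bigr)$, where each factor has the unique top-weight term $-d_1x_1^{d_1-1}x_1'$, so the top-weight part of the product is immediately the single monomial $(-d_1)^{d_1}x_1^{d_1(d_1-1)}(x_1')^{d_1}$ — this entirely avoids the Sylvester-matrix permutation count you flag as your main technical obstacle. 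Your combinatorial claim does hold (with $k$ rows of $q$ and $l$ rows of $p_1$ picking their constant coefficients, the weight is $k(2d_1-1)+ld_1$, and $k(2d_1-1)+ld_1=d_1(2d_1-1)$ with $\gcd(d_1,2d_1-1)=1$ forces $k\in\{0,d_1\}$, with $k=0$ ruled out by $l\leqslant d_1+d_2-1<2d_1-1$), but as written it is asserted rather than proved, and the Poisson-formula version is the cleaner way to close that gap.
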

\begin{proof}
    Let $p_1 := x_1' - g_1(x_1, x_2)$ and
    \[
      p_2 := \mathcal{L}_{\bg}^\ast (p_1) = x''_1  -  d_1( x_1^{d_1-1} x'_1 +  x_2^{d_1+d_2-1} +   x_2^{d_1 - 1}).
    \]
    We compute $\Res_{x_2}(p_1, p_2)$. 
    Since $p_1$, as a polynomial in the variable $x_2$, has the roots $\alpha_i$, $1 \leqslant i \leqslant d_1$, where $\alpha_i = \xi_i (x'_1 - x_1^{d_1})^{\frac{1}{d_1}}$, $\{1^{\frac{1}{d_1}} \} = \{\xi_1, \ldots, \xi_{d_1}\}$,  
      then, by the Poisson formula~\cite[p.~427]{gelfand1994discriminants}, we have  \begin{equation}\label{eq:poisson}
        \Res_{x_2}(p_1, p_2) = \prod_{i=1}^{d_1} p_2 (\alpha_i) = \prod_{i = 1}^{d_1}(x''_1 - d_1 b(\alpha_i)), \; \text{ where } b(t) := x_1^{d_1 - 1}x'_1 + t^{d_1 + d_2 -1} + t^{d_1 - 1}. 
    \end{equation}
    By expanding the brackets, we obtain
 \begin{equation}\label{eq:poisson_expanded}
            \Res_{x_2}(p_1, p_2) = (x''_1)^{d_1} + (-d_1)^{d_1}\prod_{i=1}^{d_1} b(\alpha_i) + x''_1 p(x_1, x'_1, x''_1),
    \end{equation} 
    for some polynomial $p \in \overline{\mathbb K(x_1, x'_1)}[x''_1]$ with $\deg_{x_1''} p < d_1 - 1$. 
    Thus, $\Res_{x_2}(p_1, p_2)$ contains the monomial $(x''_1)^{d_1}$.

    We define $\tilde{p}_i := p_i|_{x_1' = \Rb x_1'' = 0}$ for $i = 1, 2$.
    Since $p_1$ and $p_2$ are $x_2-$monic and $\deg_{x_2} \tilde{p}_i = \deg_{x_2} p_i$, we have
     \[
    \Res_{x_2}(\tilde{p}_1,\tilde{p}_2) = \Res_{x_2}(p_1, p_2)|_{x'_1= x''_1 = 0}.
    \]
    Then
    \[
     \Res_{x_2}(\Tilde{p}_1,\Tilde{p}_2) = {\Rb{(- d_1)^{d_1}}} \prod_{i=1}^{d_1} 
     ({\Rb -\xi_i^{d_2-1} (-1)^{\frac{d_2 - 1}{d_1}}} x_1^{d_1 + d_2 - 1} + {\Rb\xi_i^{-1} (-1)^{\frac{d_1-1}{d_1}}} x_1^{d_1 - 1}),
    \]
    and this polynomial reaches the highest degree only in the term $x_1^{d_1(d_1 + d_2 - 1)}$. Then $\Rb \Res_{x_2}(\Tilde{p}_1,\Tilde{p}_2)$ contains the monomial $x_1^{d_1(d_1 + d_2 - 1)}$, so $\Res_{x_2}(p_1,p_2)$ does.

    Assume $d_1 > d_2$ and denote $(x'_1 - x_1^{d_1})^{\frac{1}{d_1}}$ by $a$.
    Consider the ring 
     $\mathbb{K}[x_1, x'_1, \Rb x''_1]$ with respect to the grading $\wdeg x''_1 = 0$, $\wdeg x'_1 = d_1$, $\wdeg x_1 = 1$.
    Since $x_1' - x_1^{d_1}$ is homogeneous of degree $d_1$ with respect to this grading, we can extend the grading to the ring $\mathbb K [x_1, x'_1, x''_1, a]$ by setting $\wdeg a = 1$.
    Then the expression
    \[
    b(\alpha_i) = x_1^{d_1 - 1} x'_1 + \xi_i^{d_2 - 1} a^{d_1 + d_2 - 1} {\Rb + } \xi_i^{-1} a^{d_1 -1}
    \]
    reaches the highest degree only in the term $x_1^{d_1 - 1} x'_1$. 
    Therefore, $\prod\limits_{i=1}^{d_1} b(\alpha_i)$ contains the monomial $\Rb x_1^{d_1(d_1 - 1)} (x'_1)^{d_1}$, so does $\Res_{x_2}(p_1, p_2)$ by~\eqref{eq:poisson_expanded}.

    Finally, we will prove that $\Res_{x_2}(p_1, p_2)$ is in fact the minimal polynomial of the elimination ideal.
    Since the ideal $I_{\bg} \cap \mathbb{K}[x_1, x'_1, x''_1] = \ideal{f_{\min}}$ is principal, then
    by Lemma~\ref{lem03} 
    \[
    \Res_{x_2}(p_1, p_2) = c\cdot (f_{\min})^m, \text{ for some } c \in \mathbb{K}, m \in \mathbb{Z}_{> 0}.
    \]
    Assume $m \neq 1$. 
    Together with the decomposition~\eqref{eq:poisson}, this implies that $p_2(\alpha_i) = p_2(\alpha_j)$ for some $i \neq j$, so 
    \[
    \Rb \xi_i^{-1} a^{d_1 - 1} (1 + \xi_i^{d_2 } a^{d_2}) = \xi_j^{-1} a^{d_1 - 1} (1 + \xi_j^{d_2 } a^{d_2})
    \]
    and we obtain
    \[
    \Rb a^{d_2} = \dfrac{1 - \xi_{i} \xi_{j}^{-1}}{\xi_{i}(\xi_j^{d_2 - 1} - \xi_{i}^{d_2 - 1})}.
    \]
    Since $\xi_i$ and $\xi_j$ are distinct $d_1$-th roots of unity, 
    then $\xi_{i}(\xi_j^{d_2 - 1} - \xi_{i}^{d_2 - 1}) \neq 0$, 
    so $a^{d_2} \in \overline{\mathbb{K}}$. Since $a$ is transcendental over $\mathbb{K}$, we get a contradiction to $m \neq 1$. So $m = 1$ and $\Res_{x_2}(p_1, p_2) = f_{\min}$. 
 \end{proof}


\begin{lemma} \label{lem6} 
For the following polynomials $\bg = \tuple{g_1, g_2}$ in $\mathbb{K}[x_1, x_2]$ of degrees $d_1 > d_2 > 0$
    \begin{equation*}
            g_1(x_1, x_2)  = x_2^{d_1} + x_1x_2^{d_1-1},\quad \text{ and }\quad
            g_2(x_1, x_2) = x_2^{d_2},
    \end{equation*}
 the minimal polynomial $f_{\min}$  of $I_{\bg} \cap \KK[x_1^{(\infty)}]$ contains the monomial $(x'_1)^{2d_1-1}$. 
\end{lemma}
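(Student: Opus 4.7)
The plan is to combine Theorem~\ref{theorem_general_specialized} with a direct computation of $\deg_{x_1'} f_{\min}$. The inequality~\eqref{eq:bound_2} applied with $\ell = 0$, $d = d_1$, $D = d_2$ reads $e_0 + d_1 e_1 + (2d_1 - 1) e_2 \leqslant d_1(2d_1 - 1)$ for every monomial $x_1^{e_0}(x_1')^{e_1}(x_1'')^{e_2}$ appearing in $f_{\min}$. Substituting $e_1 = 2d_1 - 1$ forces $e_0 = e_2 = 0$, so $(x_1')^{2d_1 - 1}$ is the \emph{unique} monomial of $x_1'$-degree $2d_1 - 1$ that $f_{\min}$ can possibly carry. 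Consequently, it will suffice to prove the matching lower bound $\deg_{x_1'} f_{\min} \geqslant 2d_1 - 1$: once the maximum $x_1'$-degree is attained, the coefficient of $(x_1')^{2d_1 - 1}$ must be nonzero.

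To obtain the lower bound, I would specialize at a generic $a \in \KK$. Setting $f_a(x_1', x_1'') := f_{\min}(a, x_1', x_1'')$, the identity $f_{\min}(x_1, g_1, \mathcal{L}_{\bg}(g_1)) \equiv 0$ in $\KK[x_1, x_2]$ (which follows from Lemma~\ref{lem::substitution}) yields, after the substitution $x_1 = a$, that $f_a$ vanishes on the rational curve $C_a \subset \mathbb{A}^2$ parameterized by $\phi_a(x_2) := (g_1(a, x_2),\, \mathcal{L}_{\bg}(g_1)(a, x_2))$. For generic $a$, $f_a \not\equiv 0$, so by the Nullstellensatz $f_a$ is divisible by the (irreducible) minimal polynomial $F_a$ of $C_a$. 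Thus $\deg_{x_1'} f_{\min} \geqslant \deg_{x_1'} f_a \geqslant \deg_{x_1'} F_a$, and the task reduces to computing $\deg_{x_1'} F_a$.

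The main technical point, and the one that uses the hypothesis $d_1 > d_2$, is that $\phi_a$ is birational onto $C_a$. A direct computation gives $\deg_{x_2} g_1(a, x_2) = d_1$ and $\deg_{x_2} \mathcal{L}_{\bg}(g_1)(a, x_2) = 2d_1 - 1$: the top term $x_2^{2d_1 - 1}$ comes from $g_1 \cdot \partial g_1/\partial x_1$ and strictly dominates the contribution $d_1 x_2^{d_1 + d_2 - 1}$ from $g_2 \cdot \partial g_1/\partial x_2$ precisely because $d_1 > d_2$. If $\phi_a$ were $\mu$-to-one onto $C_a$, then $\mu$ would divide both $\deg_{x_2} g_1(a, x_2) = d_1$ and $\deg_{x_2} \mathcal{L}_{\bg}(g_1)(a, x_2) = 2d_1 - 1$; since $\gcd(d_1, 2d_1 - 1) = 1$, one is forced to $\mu = 1$. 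Birationality then implies $\deg_{x_1'} F_a = \deg_{x_2} \mathcal{L}_{\bg}(g_1)(a, x_2) = 2d_1 - 1$, closing the argument. In contrast to Lemma~\ref{lem5}, no explicit evaluation of the resultant $\Res_{x_2}(p_1, p_2)$ is needed: Theorem~\ref{theorem_general_specialized} already pins down the monomial shape, so a pure degree count suffices.
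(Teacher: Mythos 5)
Your proof is correct, and it takes a genuinely different route from the paper's. The paper computes $\Res_{x_2}(p_1,p_2)$ explicitly: after specializing $x_1=0$, Poisson's formula writes the resultant as $\prod_i \tilde p_2\bigl(\xi_i (x_1')^{1/d_1}\bigr)$, a weighted-degree argument exhibits $(x_1')^{2d_1-1}$ as the unique top term, and Lemma~\ref{lem03} plus a transcendence argument (ruling out $\Res_{x_2}(p_1,p_2)=c\,f_{\min}^m$ with $m>1$) identifies the resultant with $f_{\min}$ itself. You instead let the upper bound of Theorem~\ref{theorem_general_specialized} do half the work --- the inequality $e_0+d_1e_1+(2d_1-1)e_2\leqslant d_1(2d_1-1)$ does force $(x_1')^{2d_1-1}$ to be the only admissible monomial of $x_1'$-degree $2d_1-1$ --- and supply the matching lower bound $\deg_{x_1'}f_{\min}\geqslant 2d_1-1$ by specializing $x_1=a$ and counting fibers of the parameterization $x_2\mapsto\bigl(g_1(a,x_2),\,\mathcal{L}_{\bg}(g_1)(a,x_2)\bigr)$; the coprimality $\gcd(d_1,2d_1-1)=1$ plays the role of the paper's transcendence argument in showing the parameterization is birational. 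This is the same degree-forcing device the paper uses in the proof of Theorem~\ref{thm:sharp}, and it buys you freedom from any explicit resultant computation; what it does not buy is the full identification of $f_{\min}$ with the resultant, which the paper needs in the companion Lemma~\ref{lem5} to extract monomials (such as $x_1^{d_1(d_1-1)}(x_1')^{d_1}$) that are not pinned down by a single-variable degree count. Your degree computations check out ($\deg_{x_2}g_1(a,x_2)=d_1$ and $\deg_{x_2}\mathcal{L}_{\bg}(g_1)(a,x_2)=2d_1-1$ precisely because $d_1>d_2$), and the one step you assert without proof --- that a generically $\mu$-to-one polynomial parameterization of a plane curve with defining polynomial $F_a$ satisfies $\deg_{x_2}\phi_1=\mu\deg_{x_1''}F_a$ and $\deg_{x_2}\phi_2=\mu\deg_{x_1'}F_a$ --- is a standard fiber count, valid here since both coordinate functions are nonconstant and we are in characteristic zero.
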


\begin{proof}
   Let $p_1 := x_1' - g_1(x_1, x_2)$ and
    \[
      p_2 := \mathcal{L}_{\bg}^\ast (p_1) =  x''_1 - d_1 x_2^{d_1+ d_2 -1} - x'_1 x_2^{d_1-1} - (d_1-1) x_1 x_2^{d_1+d_2-2}.
    \]
    
    We define $\Tilde{p}_i := p_i|_{x_1 = 0}$ for $i = 1, 2$.
    Since $p_1$ and $p_2$ are $x_2-$monic and $\deg_{x_2} \Tilde{p}_i = \deg_{x_2} p_i$ for $i = 1,2$, so
     \[
    \Res_{x_2}(\Tilde{p}_1,\Tilde{p}_2) = \Res_{x_2}(p_1, p_2)|_{x_1 = 0}.
    \]
    Since $\Tilde{p}_1$, as a polynomial in the variable $x_2$, has the roots $\alpha_i = \xi_i (x'_1)^{\frac{1}{d_1}}$, $1 \leqslant i \leqslant d_1$, where $\{1^{\frac{1}{d_1}} \} = \{\xi_1, \ldots, \xi_{d_1}\}$, then, by the Poisson formula~\cite[p.~427]{gelfand1994discriminants}, we have
    \[
        \Res_{x_2}(\Tilde{p}_1,\Tilde{p}_2) = \prod_{i=1}^{d_1} \Tilde{p}_2 (\alpha_i) = \prod_{i=1}^{d_1} \bigl(x''_1 - d_1\xi_i^{d_2 - 1} (x'_1)^{\frac{d_1 + d_2 - 1}{d_1}} - \xi_i^{- 1} (x'_1)^{\frac{2 d_1 - 1}{d_1}} \bigr).
    \]
    Since $d_1 > d_2$, then $\Res_{x_2}(\Tilde{p}_1,\Tilde{p}_2)$, as a polynomial in $\overline{\mathbb{K}(x''_1)}[(x'_1)^{\frac{1}{d_1}}]$, reaches the highest degree only in the term $(x'_1)^{2 d_1 - 1}$. Therefore, $\Res_{x_2}(\Tilde{p}_1,\Tilde{p}_2)$ contains the monomial $(x'_1)^{2 d_1 - 1}$, so $\Res_{x_2}(p_1, p_2)$ does.

    Now we will prove that $\Res_{x_2}(p_1, p_2)$ is in fact the minimal polynomial of the elimination ideal.
    Since the ideal $I_{\bg} \cap \mathbb{K}[x_1, x'_1, x''_1] = \ideal{f_{\min}}$ is principal, then
    by Lemma~\ref{lem03} 
    \[
    \Res_{x_2}(p_1, p_2) = c\cdot (f_{\min})^m, \text{ for some } c \in \mathbb{K}, m \in \mathbb{Z}_{> 0}.
    \]
    Assume $m \neq 1$, denote $(x'_1)^{\frac{1}{d_1}}$ by $a$ and replace the variable $x_1$ by $0$. 
    Then $\Tilde{p}_2(\alpha_i) = \Tilde{p}_2(\alpha_j)$ for some $i,j$
    \[
    x''_1 - d_1\xi_i^{d_2 - 1} a^{d_2} - \xi_i^{- 1} a^{d_1} = x''_1 -d_1\xi_j^{d_2 - 1} a^{d_2} - \xi_j^{- 1} a^{d_1}
    \]
    and we obtain
    \[
    a^{d_1 - d_2} = \dfrac{d_1(\xi_i^{d_2 - 1} - \xi_j^{d_2 - 1})}{\xi_j^{-1} - \xi_i^{-1}}.
    \]
    Since $\xi_i$ and $\xi_j$ are distinct $d_1$th roots of unity, then $\xi_j^{-1} - \xi_i^{-1} \neq 0$, so $a^{d_1 - d_2} \in \overline{\mathbb{K}}$.
    Since $a$ is transcendental, we get a contradiction to $m \neq 1$. So $m = 1$ and $\Res_{x_2}(p_2, p_1) = f_{\min}$.
\end{proof}

\begin{proof}[Proof of Theorem~\ref{thm:2d}]
The case $d_1 \leqslant d_2$ follows from Theorem~\ref{thm:sharp}.
Thus, in the rest of the proof we focus on the case $d_1 > d_2$.
In this case, the desired Newton polytope is a pyramid with the vertices corresponding to the monomials $1$, $x_1^{d_1(d_1 + d_2 - 1)}$, $(x_1')^{2d_1 - 1}$, $(x_1'')^{d_1}$, and~$x_1^{d_1(d_1 - 1)}( x_1')^{d_1}$ (see Figure~\ref{fig:2d}).

Consider the system $\bx' = \bg^\ast(\bx)$ from Lemma~\ref{lem5}.
The minimal polynomial $f_{\min}^\ast$ in this case contains monomials 
$x_1^{d_1(d_1 + d_2 - 1)}$, ${\Rb x_1^{d_1 (d_1 - 1)} (x_1')^{d_1}}$ and $(x_1'')^{d_1}$.
For $\varepsilon \in \KK$, we define $\bg^\ast_\varepsilon$ as the result of applying a transformation $x_1 \to x_1 + \varepsilon$ to $\bg^\ast$.
Since this transformation is invertible, it maps the minimal polynomial of $\bg^\ast$ to the minimal polynomial of $\bg^\ast_{\varepsilon}$.
That is, the latter is equal to $f_{\min, \varepsilon}^\ast := f_{\min}^\ast(x_1 + \varepsilon, x_1', x_1'')$.
We have proved that $f_{\min}^\ast$ contains a monomial $x_1^{d_1(d_1 + d_2 - 1)}$.
Therefore, the constant term of $f_{\min, \varepsilon}^\ast$ considered as a polynomial in $x_1, x_1', x_1''$ is a nonzero polynomial in $\varepsilon$.
Thus, there exists $\varepsilon^\ast \in \KK$ such that $f_{\min, \varepsilon^\ast}^\ast$ has a nonzero constant term.

We apply Proposition~\ref{prop:specific_to_generic} twice: to $\bg^\ast_{\varepsilon^\ast}$ constructed above and to $\bg^{**}$ from Lemma~\ref{lem6}.
We denote the resulting Zariski open sets $U_1, U_2 \supset V_{2, d_1} \times V_{2, d_2}$.
Consider an element $\bg \in U_1\cap U_2$.
Then the Newton polytope {\Rb of the minimal polynomial $f_{\min}$ for the system $\bg$} contains nonegative shifts $x_1^{d_1(d_1 + d_2 - 1)}$, $(x_1')^{2d_1 - 1}$, $(x_1'')^{d_1}$, and~$x_1^{d_1(d_1 - 1)}( x_1')^{d_1}$.
{\Rb Since these monomials correspond to the vertices} of the upper bound given 
by Theorem~\ref{theorem_general_specialized}, the only possible shift is the zero shift, 
{\Rb and thus} $f_{\min}$ for $\bg$ contains these monomials.
Since the shift of the Newton polygon of $f_{\min, \varepsilon}^\ast$ is zero, $f_{\min}$ also contains~$1$.
\end{proof}


\section{Algorithm}\label{sec:algorithm}
In this section we will use the bound from Theorem~\ref{theorem_general_specialized} to compute the minimal differential equation in $x_1$ for a system of differential equations of the form
\begin{equation} \label{eq::ODE}
    \bx' = \bg(\bx),
\end{equation}
where $\bx = \tuple{x_1, \ldots, x_n}$ and $\bg \in \KK[\bx]^n$. 

We begin with the first version of the algorithm which is randomized and, thus, may produce an incorrect result (for the probability analysis, see Proposition~\ref{prop:prob}).

\begin{algorithm}[H]
 \caption{(Randomized) computation of the minimal polynomial}
 \label{alg:fmin}
 \begin{algorithmic}[1]
    \Require  An ODE system
    \[
     \bx' = \bg(\bx),
    \]
    where $\bx = \tuple{x_1, \ldots, x_n}$ and $\bg \in \mathbb{Q}[\bx^{(\infty)}]$, an integer $R > 0$ (randomization parameter).
    \Ensure a polynomial $f \in \mathbb{Q}[x_1^{(\infty)}]$ for which one of the following holds
    \begin{itemize}
        \item $f$ is the minimal polynomial of $\ideal{ \bx' - \bg(\bx)}^{(\infty)} \cap \mathbb{Q}[x_1^{(\infty)}]$ (see~\eqref{id}) 
        \item or $f$ does not belong to $\ideal{\bx' - \bg(\bx)}^{(\infty)}$.
    \end{itemize}
    
    \State \label{line:order} $\nu \gets \operatorname{rank}(\frac{\partial}{\partial x_j}\mathcal{L}_{\bg}^{i - 1}(x_1))_{i, j = 1}^n$

    \State {\Ra Apply Theorem~\ref{theorem_general_specialized} to the ODE system and computed $\nu$ to obtain a set $S \subset \mathbb{Q}[x_1^{(\infty)}]$ of monomials such that $\supp f_{\min} \subset S $.
    Let $S := \{s_1, \ldots, s_{\ell} \}$.}
        
    \State $h_i\gets \mathcal{R}_{\bg}(s_i)$ (see Notation~\ref{not:proofs_notation}) for $i=1,\dots, \ell$.
    
    \State \label{line:points}Choose $\ell$ points $p_1, \ldots, p_\ell \in \mathbb{Z}^n$ by sampling uniformly at random from $[-R, R] \cap \mathbb{Z}$.
    
    \State \label{line:matrix}$M\gets (h_i(p_j))_{1\leqslant i,j \leqslant \ell}$ 
    \State $\mathcal{C} \leftarrow $ a basis of $\ker(M)$ 
    
    \State $F = \gcd \bigl(\{ \sum_{i=1}^{\ell}c_is_i \; | \; \mathbf{c} \in \mathcal{C} \} \bigr)$
   \Return $F$
 \end{algorithmic}
\end{algorithm}

\begin{proposition}
  \label{prop:prob}
    Algorithm~\ref{alg:fmin} terminates and {\Rb it is} correct.
    Furthermore, for any $\mathbf{g} \in \mathbb{Q}[\mathbf{x}]^n$ there is a proper Zariski closed subset $Z \subset \mathbb{Q}^{n\ell}$ such that, for all choices $\tuple{p_1, \ldots, p_{\ell}} \in \mathbb{Q}^{n \ell} \setminus Z$ in line \ref{line:points}, the output of Algorithm ~\ref{alg:fmin} is equal to the minimal polynomial of $\ideal{\bx' - \bg(\bx)}^{(\infty)} \cap \mathbb{Q}[x_1^{(\infty)}]$.
\end{proposition}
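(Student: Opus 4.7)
The plan is to first establish termination, which is immediate: every step in Algorithm~\ref{alg:fmin} is a finite computation (Jacobian rank over $\mathbb{Q}$, enumeration of lattice points in the polytope of Theorem~\ref{theorem_general_specialized}, polynomial evaluations, one linear solve, and a polynomial gcd). For correctness I would proceed in two stages: first show that for \emph{any} choice of points on line~\ref{line:points} the output already satisfies the specification, and then identify the Zariski closed ``bad'' set $Z \subset \mathbb{Q}^{n\ell}$ outside of which the output equals $f_{\min}$ up to a scalar.

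Two structural preliminaries will be set up first. By the Jacobian criterion (as used in Lemma~\ref{lem::jacobi}), the integer $\nu$ computed on line~\ref{line:order} equals the transcendence degree of $\mathbb{Q}(x_1, \mathcal{L}_{\bg}(x_1), \ldots, \mathcal{L}_{\bg}^{n-1}(x_1))$ over $\mathbb{Q}$, which by the minimality of $f_{\min}$ and the bound $\ord f_{\min} \leqslant n$ from~\cite[Theorem~3.16]{hong2020global} equals $\ord f_{\min}$; hence $\supp f_{\min} \subseteq S$ by Theorem~\ref{theorem_general_specialized}. Next, since $f_{\min}$ is irreducible and both $\ideal{f_{\min}}$ and $I_{\bg} \cap \mathbb{Q}[x_1^{(\leqslant \nu)}]$ are height-one primes of $\mathbb{Q}[x_1^{(\leqslant \nu)}]$ (the latter by a Krull-dimension count using the minimality of $\nu$), with the former contained in the latter, they coincide. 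Consequently, every polynomial in $\operatorname{span}(S) \cap I_{\bg}$ is divisible by $f_{\min}$.

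For always-correctness, let $L \subseteq \mathbb{Q}^\ell$ denote the coefficient vectors in the basis $\{s_1, \ldots, s_\ell\}$ of polynomials in $\operatorname{span}(S) \cap I_{\bg}$. By Lemma~\ref{lem::substitution}, $c \in L$ iff $\sum_i c_i h_i = 0$ in $\mathbb{Q}[\bx]$, which in particular entails $c \in \ker M$, so the coefficient vector of $f_{\min}$ lies in $\ker M$. Hence $f_{\min}$ lies in the linear span of $\{P_c : c \in \mathcal{C}\}$, so $F := \gcd\{P_c : c \in \mathcal{C}\}$ divides $f_{\min}$. Irreducibility then forces $F$ to be either a scalar multiple of $f_{\min}$ (first output option) or a nonzero constant (second option, since $I_{\bg}$ is a proper ideal and thus contains no nonzero constants).

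For the generic statement, the output differs from $f_{\min}$ precisely when $\ker M \supsetneq L$, i.e.\ when some nonzero element of $V := \operatorname{span}(h_1, \ldots, h_\ell) \subseteq \mathbb{Q}[\bx]$ vanishes at all of $p_1, \ldots, p_\ell$. Setting $r := \dim V$, this is the condition that every $r \times r$ minor of $(h_i(p_j))$ vanishes, which cuts out a Zariski closed set $Z \subset \mathbb{Q}^{n\ell}$. The main obstacle is to certify that $Z$ is \emph{proper}: since $\mathbb{Q}$ is infinite, no nonzero polynomial in $V$ vanishes on all of $\mathbb{Q}^n$, so the point evaluations $\{\operatorname{ev}_p : p \in \mathbb{Q}^n\}$ span $V^\ast$; one can therefore choose $p_1, \ldots, p_r$ inductively (at step $k$ avoiding the zero locus of a nonzero polynomial in $V$ that vanishes at $p_1, \ldots, p_{k-1}$) so that $\det(h_{i_k}(p_j))_{k,j=1}^r \neq 0$ for some basis $h_{i_1}, \ldots, h_{i_r}$ of $V$, and extending arbitrarily yields a point outside $Z$. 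For $(p_1, \ldots, p_\ell) \notin Z$ we then have $\ker M = L$, so $F$ is a scalar multiple of $f_{\min}$, as required.
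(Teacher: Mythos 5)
Your proof is correct and follows essentially the same route as the paper: compute $\nu=\ord f_{\min}$ via the Jacobian criterion, translate membership in $I_{\bg}\cap\operatorname{span}(S)$ into the kernel of the evaluation matrix via Lemma~\ref{lem::substitution}, define $Z$ by the vanishing of the $r\times r$ minors, and show $Z$ is proper. The only (minor) variations are that you prove properness of $Z$ by an inductive dual-basis/evaluation-functional argument where the paper uses a Laplace-expansion induction (Lemma~\ref{lem::minors}), and you derive the output dichotomy more cleanly from ``$F$ divides $f_{\min}$'' plus irreducibility, whereas the paper argues the degenerate case directly.
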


\begin{lemma}
\label{lem::minors}
    Let $ {\Ra q_1,\dots,q_s} \in \KK[\mathbf{x}]$ and denote by $V$ the vector space over $\KK$ spanned
    by the ${\Ra q_i}$'s. 
    Let $r:=\dim V\geqslant 1$ and introduce
    $s$ copies of $\mathbf{x}$, denoted $\mathbf{x}_1,\dots,\mathbf{x}_s$. 
    Let $N:=({\Ra q_i}(\mathbf{x}_j))_{1\leqslant i,j \leqslant s}$. Then there exists a nonzero $r \times r$-minor of $N$.
 
\end{lemma}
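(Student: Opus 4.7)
The plan is to reduce to the case where $p_1, \ldots, p_r$ are linearly independent and then show that a single specific $r \times r$ minor of $N$ is nonzero, namely the one indexed by rows $1, \ldots, r$ and columns $1, \ldots, r$. Since $r = \dim V$, after possibly reordering the $p_i$'s we can assume $p_1, \ldots, p_r$ form a basis of $V$; this reordering permutes rows of $N$ and so only affects the sign of the relevant minor. Correspondingly I would pick the $r \times r$ submatrix $M := (p_i(\mathbf{x}_j))_{1 \leqslant i,j \leqslant r}$ and prove that $\det M \neq 0$ as an element of $\KK[\mathbf{x}_1, \ldots, \mathbf{x}_r]$.

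The proof of $\det M \neq 0$ I would do by induction on $r$. The base case $r = 1$ is immediate: $p_1 \ne 0$ because $\dim V \geqslant 1$, so $M = (p_1(\mathbf{x}_1))$ is a nonzero polynomial. For the inductive step, assume that for any linearly independent family of $r-1$ polynomials in $\KK[\mathbf{x}]$ the analogous determinant is a nonzero polynomial in the $r-1$ copies of $\mathbf{x}$. Expanding $\det M$ along the last column gives
\[
\det M \;=\; \sum_{i=1}^{r} (-1)^{i+r}\, p_i(\mathbf{x}_r)\, D_i(\mathbf{x}_1, \ldots, \mathbf{x}_{r-1}),
\]
where $D_i$ is the minor obtained by deleting row $i$ and column $r$, and crucially $D_i$ does not depend on $\mathbf{x}_r$.

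Now suppose for contradiction that $\det M = 0$ in $\KK[\mathbf{x}_1, \ldots, \mathbf{x}_r]$. Viewing the right-hand side as a polynomial in $\mathbf{x}_r$ with coefficients in $\KK[\mathbf{x}_1, \ldots, \mathbf{x}_{r-1}]$, specialize $\mathbf{x}_1, \ldots, \mathbf{x}_{r-1}$ to concrete values. Then the resulting identity $\sum_{i=1}^{r} (-1)^{i+r} D_i(\mathbf{x}_1,\ldots,\mathbf{x}_{r-1})\, p_i(\mathbf{x}_r) = 0$ is a $\KK$-linear relation among $p_1, \ldots, p_r$, so linear independence forces $D_i(\mathbf{x}_1, \ldots, \mathbf{x}_{r-1}) = 0$ for all $i$ and all specializations, hence $D_i \equiv 0$ as a polynomial for every $i$. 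However, $D_r$ is exactly the determinant of $(p_i(\mathbf{x}_j))_{1 \leqslant i, j \leqslant r-1}$, which by the inductive hypothesis is a nonzero polynomial, contradiction.

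The only delicate point is the linear-independence argument inside the inductive step: one has to be careful to pass from polynomial identity in $\mathbf{x}_r$ to pointwise linear relations among $p_1, \ldots, p_r$, which is legitimate because $\KK$ has characteristic zero (hence is infinite) and the $D_i$ are polynomials independent of $\mathbf{x}_r$. No other obstacle appears.
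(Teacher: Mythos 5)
Your proof is correct and follows essentially the same route as the paper's: after reducing to a basis $p_1,\dots,p_r$, both arguments proceed by induction on $r$ via a Laplace expansion of the top-left $r\times r$ minor (you expand along the last column, the paper along the first row) and derive a contradiction with linear independence using the inductive hypothesis. Your explicit specialization step and the appeal to $\KK$ being infinite is a slightly more careful write-up of the same idea.
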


\begin{proof}
   We will show this via induction on $r$. W.l.o.g. assume that ${\Ra q_1,\dots,q_r}$ form a basis of $V$. 
   For the base case let us notice that ${\Ra q_1}\not \equiv 0$. 
   For the induction step we now show that the $r\times r$-minor $N_r$ of $N$ consisting of the first $r$ 
   rows and columns is nonzero.
   Laplace expansion around the first row yields
   \[
   N_r = \sum_{i=1}^r (-1)^{i + 1} {\Ra q_i}(\mathbf{x}_1) N_{i,r-1}
   \]
   with $N_{i,r-1}$ is the minor of $N$ with
   rows indexed by $\mathbf{x}_2,\dots,\mathbf{x}_r$ and columns indexed
   by ${\Ra q_1,\dots,q_{i-1},q_{i+1},\dots,q_r}$. By the induction hypothesis
   at least one of the $N_{i,r-1}$ is nonzero. Then, since all the $N_{i,r-1}$
   include only the variables in $\mathbf{x}_2,\dots,\mathbf{x}_r$, $N_r = 0$ implies 
   that there is a nontrivial $\KK$-linear relation between ${\Ra q_1,\dots,q_r}$, a contradiction.
\end{proof}

\begin{proof}[Proof of Proposition~\ref{prop:prob}]
  The termination of the algorithm is clear.

  To prove the correctness of the algorithm, we will use the notation
  of Algorithm \ref{alg:fmin} {\Ra and recall $I = \ideal{\bx' - \bg(\bx)}^{(\infty)}$} throughout the proof. 
  First we prove that the order of $f_{\min}$ is equal to $\nu$. 
  Lemma~\ref{lem::substitution} implies that the order of $f_{\min}$ equals to the transcendence degree of $\mathcal{R}_{\bg}(x_1^{(0)}),\dots,\mathcal{R}_{\bg}(x_1^{(n)})$ over $\mathbb{Q}$.
  By~\cite[Proposition 2.4]{ehrenborg1993apolarity} this transcendence degree is equal to the rank of the Jacobian of these polynomials.
  Since $\mathcal{R}_{\bg}(x_1^{(i)}) = \mathcal{L}_{\bg}^{i}(x_1)$ for every $i \geqslant 0$, this rank is equal to the one computed on line~\ref{line:order}.   
  
  Now we prove the correctness of the remaining part of the algorithm. Denote by
  $\mathbb{K}\langle S \rangle$ the $\mathbb{K}$-linear span of $S$. For any
  $f:= \sum\limits_{s\in S}\alpha_ss \in \mathbb{K}\langle S\rangle $ note that, Lemma
  \ref{lem::substitution} implies
  \[
  f\in I\; \Leftrightarrow \;\mathcal{R}_{\bg}(f) = 0 \; \Leftrightarrow \sum_{s\in S}\alpha_s\mathcal{R}_{\bg}(s) = 0\; \Leftrightarrow \; \sum_{i=1}^{\ell}\alpha_{s_i}h_i = 0. 
  \]
  Then, if we introduce $\ell$ duplicates of the
  variables $\bx$, denoted $\mathbf{x}_i$ for
  $i= 1,\ldots,\ell$, then the elements in
  $I\cap \mathbb{K}\langle S\rangle$ correspond to the kernel (in
  $\mathbb{K}^n$) of the matrix $N(\bx_1, \ldots, \bx_\ell):=(h_i(\mathbf{x}_j))_{1\leqslant i,j\leqslant \ell}$.
   Using this notation, matrix $M$ in Algorithm \ref{alg:fmin} can be written as
    $N(p_1, \ldots, p_\ell) =  \bigl( h_i(p_j) \bigr)_{1\leqslant i, j \leqslant \ell}$. 
  Hence $\operatorname{ker}(N)\subset \operatorname{ker}(M)$ with
  equality if and only if
  $r:=\operatorname{rk}(N) = \operatorname{rk}(M)$.

  Let $Z$ be the Zariski closed subset of
  $\mathbb{Q}^{n\ell}$ defined by the vanishing of the $r\times r$-minors of
  $N(\bx_1, \ldots, \bx_\ell)$. 
  By Lemma~\ref{lem::minors}, $Z$ is a proper Zariski closed subset of $\mathbb{Q}^{n\ell}$.  Then choosing  $\tuple{p_1,\dots,p_\ell}$ outside $Z$ will ensure the equality $\rk(N) = \rk(M)$.

  If this is the case, then the elements in the kernel of $M$ correspond
  to the elements in $I\cap \mathbb{K}\langle S\rangle$, {\Ra  and $\supp f_{\min} \subset S$ by
  Theorem~\ref{theorem_general_specialized}, thus}
  this kernel is nonempty and contains the minimal polynomial. 
  Then the gcd of a basis of this
  vector space gives the desired minimal polynomial.

  Finally, if we choose $\tuple{p_1, \ldots, p_{\ell}} \in Z$, then $\operatorname{ker}(N)\subsetneq \operatorname{ker}(M)$ and there exists an element in the kernel of $M$ that does not correspond to an element in the ideal $I \cap \mathbb{K}\langle S\rangle$. Then the gcd of a basis of $\operatorname{ker}(M)$ gives a polynomial that does not belong to the differential ideal $\ideal{\bx' - \bg(\bx)}^{(\infty)} \cap \mathbb{Q}[x_1^{(\infty)}]$.
\end{proof}

\begin{lemma} \label{lem::error}
    For fixed $\mathbf{g} \in \mathbb{Q}[\mathbf{x}]^n$ the Algorithm~\ref{alg:fmin} computes 
    {\Ra a ``wrong result'', i.e. a polynomial $f$ that does not belong to $\ideal{\bx' - \bg(\bx)}^{(\infty)}$} with probability at most $\mathcal{O}(\frac{1}{R})$ as $R \to \infty$.
\end{lemma}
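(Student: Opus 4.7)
The plan is to combine Proposition~\ref{prop:prob} with the Schwartz--Zippel lemma. By Proposition~\ref{prop:prob}, there exists a proper Zariski closed subset $Z \subset \mathbb{Q}^{n\ell}$ (with $\ell$ depending only on $\bg$ via Theorem~\ref{theorem_general_specialized}) such that, whenever the tuple $\tuple{p_1, \ldots, p_\ell}$ sampled on line~\ref{line:points} lies outside $Z$, the algorithm returns the correct minimal polynomial. Thus the probability of error is at most the probability that the random sample falls inside $Z$.

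Inspecting the proof of Proposition~\ref{prop:prob}, the set $Z$ is defined by the simultaneous vanishing of the $r \times r$ minors of the matrix $N(\bx_1, \ldots, \bx_\ell) = (h_i(\bx_j))_{1 \leqslant i, j \leqslant \ell}$, where $r = \rk N$. By Lemma~\ref{lem::minors} at least one such minor is a nonzero polynomial $P \in \mathbb{Q}[\bx_1, \ldots, \bx_\ell]$. Hence $Z$ is contained in the zero set of $P$, and it suffices to bound the probability that $P$ vanishes on a uniformly random point in $([-R, R] \cap \mathbb{Z})^{n\ell}$.

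Since $\bg$ is fixed, both the polynomials $h_1, \ldots, h_\ell$ and the monomial support $S$ are determined by $\bg$ alone. In particular, $\deg P$ is bounded by an absolute constant $C = C(\bg)$ (each entry $h_i(\bx_j)$ has degree at most $\max_i \deg h_i$, and $P$ is a sum of products of $r \leqslant \ell$ such entries). Applying the Schwartz--Zippel lemma to $P$ and the finite grid $([-R, R] \cap \mathbb{Z})^{n\ell}$ of size $(2R+1)^{n\ell}$ yields
\[
\Pr[P(p_1, \ldots, p_\ell) = 0] \leqslant \frac{\deg P}{2R + 1} \leqslant \frac{C}{2R+1} = \mathcal{O}\!\left(\frac{1}{R}\right)
\]
as $R \to \infty$, which is the claim. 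The only mildly subtle step is verifying that $\deg P$ is indeed $\bg$-bounded independently of $R$; everything else is a direct application of prior results.
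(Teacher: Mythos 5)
Your proposal is correct and follows essentially the same route as the paper: invoke Proposition~\ref{prop:prob} to get the proper Zariski closed set $Z$, pick a nonzero polynomial vanishing on $Z$ (you make this explicit via the nonzero $r\times r$ minor from Lemma~\ref{lem::minors}, the paper just says ``choose any nonzero polynomial vanishing on $Z$''), and apply the DeMillo--Lipton--Schwartz--Zippel lemma to bound the error probability by $\deg P/(2R+1) = \mathcal{O}(1/R)$. Your explicit remark that $\deg P$ depends only on $\bg$ and not on $R$ is a point the paper leaves implicit, but otherwise the arguments coincide.
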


\begin{proof}
    By Proposition~\ref{prop:prob} there exists a proper Zariski closed subset $Z$ of $\mathbb{Q}^{n\ell}$ s.t. only by choosing $\tuple{p_1, \ldots, p_{\ell}} \in Z$ with $p_i\in \mathbb{Q}^n$ in line \ref{line:points} the output of the Algorithm~\ref{alg:fmin} may be wrong.
    Choose any nonzero polynomial $P$ vanishing on $Z$, let $D := \deg P$.
    By the Demillo-Lipton-Schwartz-Zippel lemma \cite[Proposition 98]{zippel_effective_1993} the probability of $P$ being zero on a point in $\mathbb{Q}^{n\ell}$ with entries sampled uniformly at random from $[-R, R] \cap \mathbb{Z}$ for some integer $R$ does not exceed $\frac{D}{2R}$.
\end{proof}

{\Ra To make the individual steps of Algorithm~\ref{alg:fmin} more transparent, 
we illustrate its procedure using the toy example discussed in the Preliminaries (Example~\ref{ex::boytoy}).
 \begin{example}
    \begin{algorithmic}[1]
        \Require 
        \begin{equation*}
            \begin{cases}
                x_1' = x_2, \\
                x_2' = -x_1,
            \end{cases}
        \end{equation*}      
        and let $R = 29$.
        
        \State \label{line:order} $\nu = \operatorname{rank}
            \begin{pmatrix}
            \frac{\partial}{\partial x_1}(x_1) & \frac{\partial}{\partial x_2}(x_1)\\
            \frac{\partial}{\partial x_1}(x_2) & \frac{\partial}{\partial x_2}(x_2) 
            \end{pmatrix} = \operatorname{rank} 
            \begin{pmatrix}
                1 & 0\\
                0 & 1 
            \end{pmatrix} = 2.
            $
    
        \State $S = \{1, x_1, x_1', x_1'' \}$ using Theorem~\ref{theorem_general_specialized} 
        and $\nu = 2$.  In this case, the inequalities in Theorem~\ref{theorem_general_specialized} define 
        a three-dimensional simplex, and $S$ consists of four terms corresponding to the vertices of this simplex.
            
        \State $h_1 := \mathcal{R}_{\bg}(1) = 1, \quad h_2 := \mathcal{R}_{\bg}(x_1) = x_1, \quad
        h_3 := \mathcal{R}_{\bg}(x_1') = x_2, \quad h_4 := \mathcal{R}_{\bg}(x_1'') = -x_1$.
        
        \State \label{line:points} $p_1 = (2, 1), p_2 = (8, 8), p_3 = (0, 7), p_4 = (5, 1)$.
        
        \State \label{line:matrix}$M = 
        \begin{pmatrix}
          1 & 2 & 1 & -2 \\
          1 & 8 & 8 & -8 \\
          1 & 0 & 7 & 0 \\
          1 & 5 & 1 & -5 \\
        \end{pmatrix}$ 
        \State $\mathcal{C} = \ker(M) = (0, 1, 0, 1) $. 
        
        \State $\dim \ker(M) = 1$, thus, we do not need to compute the gcd of the elements corresponding 
        to the kernel, and  $F = x_1 + x_1''$.
        \Return $x_1 + x_1''$
     \end{algorithmic}
 \end{example}

}

Combining Algorithm~\ref{alg:fmin} with a membership check provided by Lemma~\ref{lem::substitution}, we can produce the following complete algorithm.

\begin{algorithm}[H]
 \caption{(Guaranteed) computation of the minimal polynomial}
 \label{alg:check}
 \begin{algorithmic}[1]
     \Require An ODE system
    \[
     \bx' = \bg(\bx),
    \]
    where $\bx = \tuple{x_1, \ldots, x_n}$ and $\bg \in \mathbb{Q}[\bx^{(\infty)}]$
    \Ensure A minimal polynomial $f_{\min}$ of $\ideal{\bx' - \bg(\bx)}^{(\infty)} \cap \mathbb{Q}[x_1^{(\infty)}]$ (see~\eqref{id})
    \State $R \gets 1893$
    \While{true}    
        \State Apply Algorithm~\ref{alg:fmin} to $\bx' = \bg(\bx)$ and $R$, denote the result by $F$
        \State $A \gets \mathcal{R}_{\bg}(F)$
        \If{$A = 0$}
        \Return $f_{\min} = F$
        \Else
        \State Set $R \gets 2R$
        \EndIf
    \EndWhile
    
\end{algorithmic}
\end{algorithm}

\begin{proposition}
    Algorithm~\ref{alg:check} is correct and terminates with probability one.
\end{proposition}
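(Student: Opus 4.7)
The plan is to split the statement into the two claims. For correctness, suppose Algorithm~\ref{alg:check} returns some $F$. By construction this only happens when $\mathcal{R}_{\bg}(F) = 0$, and by Lemma~\ref{lem::substitution} this is equivalent to $F \in I_{\bg} = \ideal{\bx' - \bg}^{(\infty)}$. On the other hand, by the output specification of Algorithm~\ref{alg:fmin} (established in Proposition~\ref{prop:prob}), $F$ is either the minimal polynomial of $I_{\bg} \cap \mathbb{Q}[x_1^{(\infty)}]$ or it does not lie in $I_{\bg}$. The second alternative is ruled out by $F \in I_{\bg}$, so $F = f_{\min}$, proving correctness.

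For termination with probability one, let $R_k = 1893 \cdot 2^{k-1}$ be the value of $R$ used in the $k$-th iteration of the while loop, and let $p_k$ be the probability that the $k$-th call to Algorithm~\ref{alg:fmin} returns an output $F$ that is not the true minimal polynomial. In that case, by Proposition~\ref{prop:prob}, $F \notin I_{\bg}$, hence $\mathcal{R}_{\bg}(F) \neq 0$ by Lemma~\ref{lem::substitution}, and the loop continues; conversely, if $F$ is the minimal polynomial, then $\mathcal{R}_{\bg}(F) = 0$ and the algorithm terminates. Thus non-termination is precisely the event that every iteration fails. By Lemma~\ref{lem::error}, $p_k = \mathcal{O}(1/R_k) = \mathcal{O}(2^{-k})$, so in particular $p_k \to 0$.

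Since the random samples drawn in line~\ref{line:points} of Algorithm~\ref{alg:fmin} across distinct iterations are independent, the probability that the loop never terminates is bounded by
\[
\prod_{k=1}^{\infty} p_k.
\]
Because $p_k \to 0$, there exists $k_0$ such that $p_k \leqslant 1/2$ for all $k \geqslant k_0$, whence the infinite product is zero. Therefore Algorithm~\ref{alg:check} terminates with probability one, and whenever it terminates it returns the correct minimal polynomial by the correctness argument above.

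The main (and only real) subtlety is conceptual rather than technical: one must observe that the check $\mathcal{R}_{\bg}(F) = 0$ provides a \emph{deterministic} certificate that promotes the randomized guarantee of Algorithm~\ref{alg:fmin} into an unconditional correctness guarantee, and that the geometric growth of $R$ together with Lemma~\ref{lem::error} is what forces the failure probabilities to be summable (in fact to decay exponentially), which is more than enough for the product above to vanish.
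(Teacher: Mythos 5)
Your proof is correct and follows essentially the same route as the paper: correctness via the deterministic membership check of Lemma~\ref{lem::substitution} combined with the output specification in Proposition~\ref{prop:prob}, and termination with probability one from Lemma~\ref{lem::error} together with the doubling of $R$, which makes the product of the per-iteration failure probabilities vanish. Your write-up is somewhat more explicit than the paper's (in particular about the independence of samples across iterations and about why the loop continues exactly when the returned polynomial is wrong), but there is no substantive difference.
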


\begin{proof}
    The correctness of the algorithm follows from Proposition~\ref{prop:prob} and Lemma~\ref{lem::substitution}.
    By Lemma~\ref{lem::error}, there exists a positive real number $C$ such that the probability of the algorithm not terminating at the $i$-th iteration of the while loop is at most $\frac{C}{2^{i - 1}R}$, where $R = 1893$.
    Then the probability of the algorithm terminating is at least
    \[
    1 - \frac{C}{2 R} \cdot \frac{C}{4 R} \cdot \frac{C}{8 R} \cdot \ldots = 1.\qedhere
    \]
\end{proof}

\begin{remark} \label{remark::algorithm}
    In practice instead of computing the kernel of matrix $M$ in line \ref{line:matrix} of Algorithm~\ref{alg:fmin} over $\mathbb{Q}$ directly, we may use multi-modular arithmetic.
    We compute $M$ modulo several primes.
    Then the kernel is computed for each prime and the results are combined using  the Chinese Remainder Theorem and rational reconstruction to obtain
    the kernel over $\mathbb{Q}$, see for example \cite{rus_result}. 
    To be more efficient, we shrink the support bound after the first prime and then actually work with the exact support.
\end{remark}


\section{Implementation and performance}\label{sec:hard_examples}

We have produced a proof-of-concept implementation of the algorithm described in Section~\ref{sec:algorithm} in Julia language~\cite{bezanson2017julia}.
Our code relies on libraries Oscar \cite{OSCAR}, Nemo \cite{10.1145/3087604.3087611}, and Polymake \cite{gawrilow2000polymake}.
The source code of our implementation together with the instruction and the models used in this section is publicly available at
\[
\text{\url{https://github.com/ymukhina/Loveandsupport.git}}
\]

The goal of the present section is to show that our algorithm can perform differential elimination in reasonable time on a commodity hardware for some instances which are out of reach for the existing state-of-the-art software thus pushing the limits of what can be computed.
We demonstrate this using two sets of models:
\begin{itemize}
    \item \emph{Dense models}.
    For $n, d, D \in \mathbb{Z}_{> 0}$ and $a, b \in \mathbb{Z}$, by $\operatorname{Dense}_n(d, D, [a, b])$ we will denote a system of the form $\bx' = \bg(\bx)$, where the dimension of $\bx$ is $n$, $g_1$ is a random dense polynomial of degree $d$ and $g_2, \ldots, g_n$ are random dense polynomials of degree $D$, where the coefficients are sampled independently in random from ${a, a + 1, \ldots, b - 1, b}$.
    Here is, for example, an instance of $\operatorname{Dense}_3(3, 2, [1, 10])$:
    \begin{align*}
        x_1' &= 4 x_1^3 + 10 x_1^2 x_2 + 5 x_1^2 x_3 + 4 x_1^2 + x_1 x_2^2 + 5 x_1 x_2 x_3 + 2 x_1 x_2 + 3 x_1 x_3^2 + x_1 x_3 + 8 x_1 \\
        & \quad + 5 x_2^3 + 2 x_2^2 x_3 + 2 x_2^2 + 2 x_2 x_3^2 + 10 x_2 x_3 + x_2 + 5 x_3^3 + 4 x_3^2 + 5 x_3 + 8,\\
        x_2' &= 2 x_1^2 + 10 x_1 x_2 + 8 x_1 x_3 + 10 x_1 + x_2^2 + 6 x_2 x_3 + 4 x_2 + 2 x_3^2 + 4 x_3 + 10,\\
        x_3' &= 8 x_1^2 + 3 x_1 x_2 + 3 x_1 x_3 + 4 x_1 + 10 x_2^2 + 2 x_2 x_3 + 2 x_2 + 5 x_3^2 + 4 x_3 + 6.
    \end{align*}
    
    The specific randomly generated instances used for the experiments can be found in the repository.
    We note that the runtimes are not very sensitive to the choice of a particular random system.

    \item \emph{Sparse models}. 
    We use two specific ODE models.
    The first is a parametric model $\operatorname{BlueSky}$ exhibiting the blue-sky catastrophe phenomenon~\cite[Eq. (3)]{van2013triple}
    \begin{align*}
        & x_1' = \bigl(2 + a - 10(x_1^2 + x_2^2) \bigr)x_1 + x_2^2 + 2 x_2 + x_3^2, \\
        & x_2' = - x_3^3 - (1 + x_2)(x_2^2 + 2 x_2 + x_3^2) - 4x_1 + a x_2,\\
        & x_3' = (1 + x_2)x_3^2 + x_1^2 + b.
    \end{align*}
    We take the values of parameters $a = 0.456$ and $b = 0.0357$ as in~\cite{van2013triple}.

    The other model (we will call it $\operatorname{LV}$ for Lotka-Volterra) comes from the following parametric ODE system:
    \begin{align*}
       x_1' &= x_1(1 - a_{11}x_1 - a_{12}x_2 - a_{13}x_3) + b_1 x_2^2 + b_2 x_3^2,\\
       x_2' &= x_2(1 - a_{21}x_1 - a_{22}x_2 - a_{23}x_3 + c_1 x_2^3),\\
       x_3' &= x_3(1 - a_{31}x_1 - a_{32}x_2 - a_{33}x_3 + c_2 x_3^3).
    \end{align*}
    This model extends the standard three-species competition model corresponding to the case $b_1 = b_2 = c_1 = c_2 = 0$ with two generalized logistic growth~\cite[Eq. (6)]{LV} terms $c_1 x_2^2$ and $c_2 x_3^3$ (cf.~\cite[Eq. (1)]{SAMARDZIJA1988}) and recruitment-type terms $b_1 x_2^2$ and $b_2 x_3^2$ reminiscent to population models with stage structure (cf. \cite[Eq. (3)]{Pimm1987}).
    We do not claim any specific biological interpretation for this models but argue that it consists of primitives frequently used in population dynamics modeling.
    For the purposes of our computations experiments, we sampled the parameter values uniformly at random from $\left\{ \frac{1}{10}, \ldots, \frac{9}{10}, 1 \right\}$.
\end{itemize}

The software packages we use for comparison are:
\begin{itemize}    
    \item DifferentialThomas library in Maple~\cite{bachler2012algorithmic}.
    This library can compute differential Thomas decomposition for arbitrary polynomial PDE systems which, using an appropriate ranking, can be used to perform elimination.
    We used version Maple 2023.

    \item DifferentialAlgebra library (containing BLAD~\cite{blad}) which can compute a characteristic set decomposition for arbitrary polynomial PDE systems.
    Again, using an appropriate ranking, this can be used to perform elimination.
    We used version 1.11.
    
    \item StructuralIdentifiability~\cite{Dong2023} package written in Julia.
    It provides functionality for preforming differential elimination for ODE models in the state-space form~\eqref{eq:general_system} with rational dynamics.
    We used version 0.5.9. 
\end{itemize}

We report the performance of the selected software tools in Table \ref{tab:result}.

\begin{table}[htbp!]
\centering
	\begin{tabular}{ l| c c c c } 
	\hline
    Name & SI.jl & Maple(Diff.Thomas) & BLAD & Our (Algorithm~\ref{alg:check}) \\ 
	\hline
    $\operatorname{Dense}_3(2,3, [1,10])$ & 68 & $>$ 50h & OOM & 7\\

    $\operatorname{Dense}_3(2,3, [1,100])$ & 138 & $>$ 50h & OOM & 13 \\

    $\operatorname{Dense}_3(2,4, [1,10])$ & $>$ 50h & $>$ 50h & OOM & 414\\

    $\operatorname{Dense}_3(3,2, [1,10])$ & $>$ 50h & $>$ 50h & OOM & 215\\
    \hline

    $\operatorname{Dense}_4(1,2, [1,10])$ & 106 & $>$ 50h & OOM & 9\\

    $\operatorname{Dense}_4(1,2, [1,100])$ & 205 & $>$ 50h & OOM & 18\\

    $\operatorname{Dense}_4(2,1, [1,10])$ & OOM & $>$ 50h & OOM & 13\\

    $\operatorname{Dense}_4(2,1, [1,100])$ & OOM & $>$ 50h & OOM & 30\\

    \hline
	BlueSky & $>$ 50h & $>$ 50h & OOM & 317\\ 

    LV & 1537 & $>$ 50h & OOM & 114 \\

    \hline
    
\end{tabular}
\caption{Comparison with other approaches (times are in minutes if not written explicitly)\\ \emph{OOM = ``out of memory''}}\label{tab:result}
\end{table} 
   

\bibliographystyle{abbrvnat}
\bibliography{biblio}

\end{document}